\renewcommand{\paragraph}[1]{\medskip\noindent\textbf{#1.\;}}
\def \R{\mathbb R}
\def \Q{\mathbb Q}
\def \dcal{\mathcal{D}}
\def \pcal{\mathcal{P}}
\newcommand{\sset}[1]{\left\{ #1\right\}}
\newcommand{\ssets}[1]{\{ #1\}}
\newcommand{\fwh}[1]{\; \left| \; #1 \right.}
\newcommand{\card}[1]{\left| #1 \right|}
\newcommand{\cards}[1]{| #1 |}
\DeclareMathOperator*{\expectation}{\mathbb E}
\newcommand{\expect}[2][]{\expectation_{#1}\nolimits\left[#2\right]}
\DeclareMathOperator*{\probability}{\mathrm{Pr}}
\newcommand{\prob}[1]{\probability\left[#1\right]}
\DeclareMathOperator*{\argmax}{argmax}
\DeclareMathOperator{\trunc}{\textup{T}}
\newcommand{\val}{\vec{\mathrm{v}}}
\newcommand{\gceps}{\tilde{\varepsilon}}
\newcommand{\nul}{\textup{null}}
\renewcommand\vec{\bm}
\newcommand{\ppad}{\textup{PPAD}\xspace}
\newcommand{\pls}{\textup{PLS}\xspace}
\newcommand{\fixp}{\textup{FIXP}\xspace}
\newcommand{\gcircuit}{\textup{\textsc{Gcircuit}}\xspace}
\newcommand{\exactgcircuit}{\textup{\textsc{exact-Gcircuit}}\xspace}
\newcommand{\epsfpa}{\textup{\textsc{$\varepsilon$-BNE-FPA}}\xspace}
\newcommand{\exactfpa}{\textup{\textsc{exact-BNE-FPA}}\xspace}
\newcommand{\sz}{\textup{size}}
\newcommand{\nn}{\ensuremath{\mathfrak{m}}} 
\newcommand{\mm}{\ensuremath{n}} 
\theoremstyle{definition}
\newtheorem{definition}{Definition}
\theoremstyle{plain}
\newtheorem{theorem}{Theorem}[section]
\newtheorem{lemma}[theorem]{Lemma}
\newtheorem{corollary}[theorem]{Corollary}
\newtheorem{proposition}[theorem]{Proposition}
\newtheorem{claim}{Claim}
\Crefname{claim}{Claim}{Claims}
\newtheorem{inftheorem}{Informal Theorem}
\newtheorem*{opproblem}{Open Problem}
\theoremstyle{definition}
\newtheorem{example}{Example}
\title{On the Complexity of Equilibrium Computation\\ in First-Price
Auctions\thanks{A preliminary version of this paper appeared in EC'21~\citep{fghlp2021_ec}.}}
\author{
\begin{tabular}{c c}
& \\ \textbf{Aris Filos-Ratsikas} & \textbf{Yiannis Giannakopoulos\thanks{Part of this work was done while the author was a member of the Operations Research group at Technical University of Munich, School of Management, supported by the Alexander von Humboldt Foundation with funds from the German Federal Ministry of Education and Research (BMBF).}}\\
\small{University of Edinburgh, United Kingdom} & \small{FAU Erlangen-Nürnberg, Germany} \\
\href{mailto:Aris.Filos-Ratsikas@ed.ac.uk}{\small{\texttt{Aris.Filos-Ratsikas@ed.ac.uk}}} & \href{mailto:yiannis.giannakopoulos@fau.de}{\small{\texttt{yiannis.giannakopoulos@fau.de}}}\\
& \\
\textbf{Alexandros Hollender\thanks{Supported by an EPSRC doctoral studentship (Reference 1892947)}} & \textbf{Philip Lazos\thanks{Partially supported by the ERC Advanced Grant 788893 AMDROMA ``Algorithmic and Mechanism Design Research in Online Markets'' and MIUR PRIN project ALGADIMAR ``Algorithms, Games, and Digital Markets''.}}\\
\small{University of Oxford, United Kingdom} & \small{IOHK} \\
\href{mailto:alexandros.hollender@cs.ox.ac.uk}{\small{\texttt{alexandros.hollender@cs.ox.ac.uk}}} & \href{mailto:philip.lazos@iohk.io}{\small{\texttt{philip.lazos@iohk.io}}}\\
& \\
\multicolumn{2}{c}{\textbf{Diogo Po\c{c}as}\thanks{Supported by FCT via LASIGE Research Unit, ref.\ UIDB/00408/2020.}}\\
\multicolumn{2}{c}{\small{LASIGE, Faculdade de Ciências, Universidade de Lisboa, Portugal}}\\
\multicolumn{2}{c}{\href{mailto:dmpocas@fc.ul.pt}{\small{\texttt{dmpocas@fc.ul.pt}}}}\\
& \\
\end{tabular}
}
\date{}
\begin{document}
\maketitle

\begin{abstract}
We consider the problem of computing a (pure) Bayes-Nash equilibrium in the first-price auction with continuous value distributions and discrete bidding space. We prove that when bidders have independent \emph{subjective} prior beliefs about the value distributions of the other bidders, computing an $\varepsilon$-equilibrium of the auction is PPAD-complete, and computing an \emph{exact} equilibrium is FIXP-complete. We also provide an efficient algorithm for solving a special case of the problem, for a fixed number of bidders and available bids.
\end{abstract}

\section{Introduction}

Auctions are prime examples of economic environments in which the element of
strategic behavior is prevalent. The associated theory can be traced back to as
early as the 1960s and the seminal work of \citet{vickrey1961counterspeculation}.
Over the years, auction theory and mechanism design have produced some of the most
celebrated results in economics, as can be evidenced, e.g., by the relevant 1996,
2007 and 2020 Nobel
Prizes.\footnote{For the official Nobel Prize announcements see \href{https://www.nobelprize.org/prizes/economic-sciences/1996/summary/}{here}, \href{https://www.nobelprize.org/prizes/economic-sciences/2007/summary/}{here} and \href{https://www.nobelprize.org/prizes/economic-sciences/2020/summary/}{here}.}
Among the plethora of auction formats that this rich literature has proposed, some stand out, such as the second-price auction of \citet{vickrey1961counterspeculation} or the revenue-maximizing auction of \citet{myerson1981optimal}.

Arguably, though, the most fundamental auction format is that of the
\emph{first-price auction}, in which the highest bidder wins and is charged an
amount equal to her bid. Compared to its counterparts mentioned above, the
first-price auction does not enjoy the same desirable incentive properties:
participants may have an incentive to misreport their true bids. At the same time,
however, the first-price auction is very natural and simple to describe, implement
and participate in, making it very suitable for a range of important applications.
As a matter of fact, several online ad exchanges, including Google Ad Manager, have
adopted this auction format for selling their ads, which has been coined ``the
first-price movement'' (see, e.g., \citep{adexchange,paes2020competitive}).

There has been a large body of work studying incentives and bidding behavior in
first-price auctions, dating back to the original paper of
\citet{vickrey1961counterspeculation}. In particular, the literature has studied the
equilibria of the auction in an incomplete information setting where the bidders have
only probabilistic prior beliefs (or simply \emph{priors}) about the values of other
bidders, via the lens of Bayesian game theory~\citep{harsanyi1967games} (see also
\citep{myerson2013game,hartline2012bayesian}). Several different scenarios of
interest have been analyzed; see, e.g.,
\citep{griesmer1967toward,riley1981optimal,plum1992characterization,marshall1994numerical,lebrun1996existence,lebrun1999first,maskin2000equilibrium,lizzeri2000uniqueness,Athey2001,reny2004existence,chawla2013auctions,bergemann2017first}.
It is no exaggeration to say that understanding the Bayes-Nash equilibria of the
first-price auction has historically been one of the most important questions of
auction theory.

The aforementioned literature has been primarily concerned with identifying conditions under which (pure Bayes-Nash) equilibria are guaranteed to exist. Among those, the seminal paper of~\citet{Athey2001} has been pivotal in establishing the existence of equilibria for fairly general settings with continuous priors. A natural follow-up question posed explicitly by \citet{Athey2001}, which was also very much present in earlier works, is whether these equilibria can also be ``found''; in the context of the related literature, this is usually interpreted as coming up with closed-form solutions that describe them.

One of the most significant contributions of computer science to the field of game
theory is to formalize and systematically study this notion of ``finding'' or
``computing'' equilibria in games. Roughly speaking, an equilibrium can be
efficiently computed if it can be found using a limited number of standard
operations that can be performed by a computer, where ``limited'' here typically
means a number which is a polynomial function of the size of the input parameters.\footnote{We
remark that contrary to earlier works in economics, Athey's interpretation of
``finding'' an equilibrium was very much of a computational nature.} In perhaps the
most important result in computational game theory, \citet{daskalakis2009complexity}
proved that in all likelihood, Nash equilibria of general games cannot always be
computed efficiently. In particular, they proved that the problem of computing a
Nash equilibrium is complete for the class PPAD \citep{papadimitriou1994complexity},
which is widely believed to include problems that are computationally hard to
solve.

In this paper, we study the complexity of computing an equilibrium of the first-price auction, in settings with continuous priors and discrete bids. We offer the following main result.

\begin{inftheorem}\label{infthm:main-result}
Computing a (pure, Bayes-Nash) equilibrium of a first-price auction with continuous subjective priors and discrete bids is \ppad-complete.
\end{inftheorem}

This result can be interpreted intuitively as justification of why research in
economics has only had limited success in providing closed forms or characterizations
for the equilibria of the first-price auction. In addition, we consider it to be 
a quite valuable addition to the literature of
total search problems \citep{megiddo1991total}, as it concerns the computation
of equilibria of one of the most fundamental
games in auction theory.

\subsection{Discussion and Further Results}

Below, we provide a more in-depth discussion of our main result and its assumptions, as well as some other related results that we obtain along the way. 

\paragraph{Continuous Priors, Discrete Bids}
\cref{infthm:main-result} applies to the case where the bidders' beliefs about the values of other bidders are continuous distributions, whereas the bidding space is a discrete set. The former assumption is standard in auction theory (see, e.g., \citep[Sec.~3.11]{myerson2013game} or \citep{krishna2009auction}).
From a technical standpoint, this also guarantees the existence of
equilibria~\citep{Athey2001}.\footnote{It is important to note here that in some versions of the problem, even \emph{mixed} Bayes-Nash equilibria are
not guaranteed to exist; see, e.g., \citep{lebrun1996existence}.} The assumption of
the discrete bidding space is clearly motivated by any real-world scenario, in which
the bids will be increments of some minimum monetary amount, e.g., 1 dollar or 1
cent, depending on the application. This setting has in fact been studied in several
works for first-price auctions in particular (see, e.g.,
\citep{chwe1989discrete,Athey2001,escamocher2009existence,cai2010note,rasooly2020importance}).

\paragraph{Subjective Priors} In \cref{infthm:main-result} we assume that the
priors are subjective, meaning that two different bidders might have different
beliefs about the values of some other bidder. In the auction theory literature, it
is often assumed that a ``universal'' prior exists, which is common knowledge among
all players; this is known as the \emph{independent private values} model. Indeed,
such common priors are quite convenient in settings where there is an aggregate
objective that needs to be optimized in expectation (e.g., the social welfare or the
seller's revenue), since they can be used by the designer to tune the parameters of
the auction in a way that works best for the optimization goal at hand; this is the
case, e.g., for Myerson's revenue-maximizing auction~\citep{myerson1981optimal}.

From our perspective however, where the goal is to study the players' incentives and
compute an equilibrium, we believe it is natural to make the more general assumption
that priors are still independent, but subjective: this is enough for the bidders to come up with their best
responses. As a matter of fact, Harsanyi's original paper~\citep{harsanyi1967games},
as well as classic textbooks in economics (e.g., \citep{myerson2013game,Jehle2001a})
introduce Bayesian games directly in the context of subjective
beliefs.\footnote{These works also usually provide discussions on ``consistency''
conditions, e.g., see \citep{harsanyi1967games} and
\citep[Sec.~2.8]{myerson2013game}. See also a related discussion in \cref{sec:conclusion} of our work.} Similar notions of subjective priors and
``subjective equilibria'' have also been studied rather extensively for general
Bayesian games in economics
\citep{hahn1973notion,fudenberg1986limit,battigalli1988conjectural,battigalli1992learning,kalai1993rational,kalai1995subjective,rubinstein1994rationalizable}
and computer science \citep{witkowski2012peer,frongillo2016geometric}.

The subjective priors assumption is necessary for our PPAD-hardness result, but we
would of course be very interested in settling the complexity for the case of common
priors as well. In fact, as we explain in \cref{sec:conclusion}, we consider this to
be one of the most important open problems in computational game theory. Thus,
besides being of standalone interest, one can also see our result for subjective
priors as an important first step in the quest of answering this question. We remark
that our PPAD-membership result obviously applies to common priors, as this is just
a special case of subjective beliefs.

\paragraph{Approximate Equilibria} While \cref{infthm:main-result}
states the PPAD-completeness of computing an equilibrium of the first-price auction,
the formal statement is in fact about $\varepsilon$-equilibria, i.e., stable states
in which bidders do not wish to unilaterally deviate unless they are better off by
some small positive quantity $\varepsilon$. As we explain in \cref{sec:prelims},
this is very much necessary: there are examples where the equilibrium is
\emph{irrational}, and therefore cannot be computed exactly in many standard models
of computation. As a matter of fact, this is a common theme in most papers in
equilibrium computation; see, e.g.,
\citep{daskalakis2009complexity,chen2009settling} or the survey of
\citet{goldberg2011survey} for a related discussion.

Of course, the focus on $\varepsilon$-equilibria is only relevant for the membership result in PPAD; the computational hardness result for approximate equilibria is clearly stronger. In fact, we show that under some standard assumptions (see \cref{sec:prelims}), the problem is PPAD-hard even when $\varepsilon$ is allowed to be a (sufficiently small) constant, independent of the input parameters. This is the strongest type of PPAD-hardness one could hope for. 
For the computation of \emph{exact} equilibria, \citet{etessami2010complexity}
defined the computational class FIXP. At a high level, this class contains problems
that can be stated as computations of (possibly irrational) fixed points of functions defined by means of algebraic circuits
(see~\citep{yannakakis2009equilibria}). We complement our main result about
$\varepsilon$-equilibria with the following analogous result on exact ones:

\begin{inftheorem}\label{infthm:FIXP}
Computing an \emph{exact} (pure, Bayes-Nash) equilibrium of a first-price auction with continuous subjective priors and discrete bids is \fixp-complete.
\end{inftheorem}
One way to interpret a FIXP-completeness result in the standard computational (Turing) model is in terms of \emph{strong} vs \emph{weak} approximations. A weak approximation is an $\varepsilon$-equilibrium as defined above and is captured by our PPAD-completeness result. A strong approximation is a set of strategies represented by rational numbers, which are ``$\varepsilon$-close'' to an exact equilibrium (in terms of the max norm), and is captured by our FIXP-completeness result. We remark that this is completely analogous to the computation of Nash equilibria in general games, see \citep{etessami2010complexity,garg2016dichotomies} for a more in-depth discussion.

\paragraph{The Meaning of \ppad-completeness} As we mentioned earlier, a
PPAD-hardness result is interpreted as an indication that the problem
cannot be solved in polynomial time. In particular, it is as hard as finding Nash
equilibria in general games
\citep{daskalakis2009complexity,chen2009settling,mehta2018constant,rubinstein2018inapproximability}, market equilibria in Arrow-Debreu markets
\citep{vazirani2011market,chen2013complexity} or solutions to fixed point theorems
\citep{papadimitriou1994complexity,goldberg2019hairy}. Additionally, PPAD has been
shown to be hard under various cryptographic assumptions (e.g., see
\citep{bitansky2015cryptographic,garg2016revisiting,choudhuri2019finding,rosen2017can}),
meaning that solving a PPAD-hard problem would ``break'' those assumptions as well.
On the other hand, an ``in PPAD'' result can be interpreted as the existence of an
(inefficient) algorithm that uses a path-following argument to reach a solution.

\paragraph{An Efficient Algorithm} 
Besides our main PPAD- and FIXP-completeness results, we identify a special case of the problem which can be solved efficiently, namely when the number of bidders and the size of the bidding space are constant, and the value distributions are ``sufficiently smooth'', in the sense that they are given by piecewise polynomial functions. To this end, we have the following theorem.

\begin{inftheorem}\label{infthm:polytime}
A (pure, Bayes-Nash) equilibrium of the first-price auction can be computed in polynomial time when there is a constant number of bidders, a constant-size bidding space, and continuous (subjective) priors which are piecewise polynomial functions.
\end{inftheorem}

\cref{infthm:polytime} complements our PPAD- and FIXP-hardness results rather tightly, as our reductions use a constant bidding space and very simple, piecewise constant distributions, but a large number of bidders. 

\subsection{Related Work}

As we mentioned earlier, there is a significant amount of work in economic theory on
the equilibria of the first-price auction
\citep{griesmer1967toward,riley1981optimal,plum1992characterization,marshall1994numerical,lebrun1996existence,lebrun1999first,maskin2000equilibrium,lizzeri2000uniqueness,Athey2001,reny2004existence,bergemann2017first,Cheng2006}.
Among those, the most relevant work to us is that of~\citet{Athey2001}, who
established the existence of pure Bayes-Nash equilibria in games with discontinuous
payoffs which satisfy the \emph{single crossing property} of
\citet{milgrom1994monotone}, of which the first-price auction is a special case.
Athey's proof applies to both discrete and continuous bidding spaces, and in fact
the latter is established through the former, via a limit argument similar in spirit
to \citep{lebrun1996existence,maskin2000equilibrium}.

To the best of our knowledge, there are only a few prior works on the computational
complexity of equilibria in first-price auctions. \citet{escamocher2009existence}
study the problem of computing equilibria when \emph{both} the priors and the
bidding space are discrete. In that case, it is not hard to construct
counter-examples that show that pure equilibria may not exist, and therefore they are
concerned with the question of \emph{deciding} their existence. Their results do not
provide a conclusive answer (i.e., neither NP-hardness nor polynomial-time
solvability is proven), except for the very special case of two bidders with
bi-valued distributions. \citet{wang2020bayesian} very recently studied the equilibrium
computation problem in settings with \emph{discrete priors} and \emph{continuous bids}
(in a sense, the opposite of what we do here), and under the \emph{Vickrey
tie-breaking rule} for deciding the winner of the auction in case of a tie.
According to this rule, ties are resolved by running an auxiliary second-price
(Vickrey) auction among the potential winners of the first-price auction;
effectively this allocates the item to the bidder with highest true valuation. This
tie-breaking rule was introduced by \citet{maskin2000equilibrium} primarily as a
technical tool in proving their existence results for the \emph{uniform tie-breaking rule},
where ties are broken uniformly at random among the bidders with the highest bid. Our
results are proven for the uniform tie-breaking rule, which is the standard rule in
the literature of the
problem~\citep{lebrun1996existence,maskin2000equilibrium,Athey2001,krishna2009auction}.

Finally, we remark that while we consider an equilibrium computation setting, our
results are markedly different from other works on such problems, e.g.,
\citep{daskalakis2009complexity}. This is because our paper concerns a much more specific
and structured game, and crucially, a game which is \emph{Bayesian}, which is not
the case for most prior work. Conceptually closer to our work is the paper by \citet{cai2014simultaneous} who study the complexity of Bayesian \emph{combinatorial} auctions, a more complicated auction format which typically involves multiple items for sale and more complex agent valuations over subsets of items.
The complexity of \emph{general} Bayesian games
(beyond auctions) has been studied in the literature, primarily resulting in
NP-hardness results for several cases of interest, e.g., see
\citep{Gottlob:2007aa,Conitzer:2008aa}.

\section{Model and Notation}\label{sec:prelims}

In a (Bayesian) \emph{first-price auction (FPA)}, there is a set
$N=\{1,2,\ldots,n\}$ of \emph{bidders} (or \emph{players}) and one item for sale. Each
player $i$ submits a \emph{bid} $b_i \in B$, where the \emph{bidding space} $B
\subseteq [0,1]$ is a finite set. 
We will also make the standard assumption (often referred to as the ``null bid'' in the literature) that $0\in B$, which can be interpreted as
the option of the bidders to not participate in the auction (see, e.g., \citep{maskin2000equilibrium,Athey2001}).

The item is allocated to the player with the highest bid, who is charged a payment equal to her bid. If there are multiple
players submitting the same highest bid, the winner is determined based on the
\emph{uniform tie-breaking} rule.
Formally, for a \emph{bid profile} $\vec{b}=(b_1,\ldots,b_n)$, the \emph{ex-post utility} of player $i$ with true value $v_i$ is given by
\begin{equation}
\label{eq:ex_post_utilities}
\tilde{u}_i(\vec{b};v_i) \equiv 
\begin{cases}
\frac{1}{\cards{W(\vec{b})}}(v_i-b_i), & \text{if}\;\; i\in W(\vec{b}), \\
0, & \text{otherwise}, 
\end{cases}
\qquad\text{where}\;\; W(\vec{b})=\argmax_{j\in N} b_j
\end{equation}

For each pair of players $i, j\in N$, $i\neq j$, there is a continuous value distribution
$F_{i,j}$ over $[0,1]$; we call this
distribution the \emph{prior} of bidder $i$ over the values of bidder $j$. The
\emph{subjective belief} of player $i$ for the values
$\vec{v}_{-i}=(v_1,\dots,v_{i-1},v_{i+1},\dots, v_n)$ of the other bidders is then
given by the product distribution $\vec{F}_{-i}\equiv\times_{j\neq i} F_{i,j}$. In
other words, from the perspective of bidder $i$, the values $v_j$ for $j\neq i$ are
drawn \emph{independently} from distributions $F_{i,j}$.
Notice that the special case where $F_{i,j} = F_{i',j}$ for all $j \in N$ and $i, i'
\in N\setminus\ssets{j}$ corresponds to the classic \emph{independent private
values} model of auction theory, where the value of each bidder is drawn
(independently of the others) from a single distribution. More formally, simplifying
the notation by using $F_j$ instead of $F_{i,j}$, $\vec{v}$ is drawn from the
\emph{common prior} distribution $\vec{F}=\times_{j\in N} F_j$.
Obviously, while our hardness results rely on the fact that priors are subjective,
all of our positive results trivially extend to the case of common priors as well.

The FPA described above naturally induces a game in which each bidder $i$ selects
her bid based on her own (true) value $v_i$, and her beliefs $\vec{F}_{-i}$. A \emph{strategy} of bidder $i$ is a function
$\beta_i: [0,1] \rightarrow B$ mapping values to bids. 
Given a strategy profile $\vec{\beta}_{-i}$ of the other players, the (interim) \emph{utility} of player $i$ with true value $v_i$ when bidding $b\in B$ is
\[
u_i(b,\vec{\beta}_{-i};v_i) \equiv\expect[\vec v_{-i}\sim
\vec{F}_{-i}]{\tilde{u}_i(b,\vec{\beta}_{-i}(\vec{v}_{-i});v_i)},
\] 
where $\vec{\beta}_{-i}(\vec{v}_{-i})$ is a shorthand for
$\left(\beta_1(v_1),\dots,\beta_{i-1}(v_{i-1}),\beta_{i+1}(v_{i+1}),\dotsm,\beta_n(v_n)\right)$.
Intuitively, the player calculates her (expected) utility by drawing a value $v_j$
for each bidder $j \neq i$ from her corresponding subjective prior distribution
$F_{i,j}$, and then using the strategy ``rules'' $\vec{\beta}_{-i}$ of the others to
map their values to actual bids in $B$.

We are interested in ``stable'' states of the FPA, i.e., strategy profiles from
which no bidder would like to unilaterally deviate to a different strategy. Formally,
we have the following definition.

\begin{definition}[$\varepsilon$-Bayes-Nash equilibrium of the FPA]\label{def:bayes-nash-equilibrium}
Let $\varepsilon \geq 0$.
A strategy profile $\vec{\beta}=(\beta_1, \ldots, \beta_n)$ is a (pure, interim) $\varepsilon$-Bayes-Nash equilibrium ($\varepsilon$-BNE) of the FPA if for any bidder $i \in N$ and any value $v_i \in [0,1]$, 
\[
u_i(\beta_i(v_i),\vec{\beta}_{-i};v_i) \geq u_i(b,\vec{\beta}_{-i};v_i) - \varepsilon \qquad \text{for all}\;\; b\in B.
\]
\end{definition}
Given a fixed strategy profile $\vec{\beta}_{-i}$ of the other bidders, we will denote the set of $\varepsilon$-\emph{best responses} of player $i$ by 
\[
BR_{i}^{\varepsilon}(\vec{\beta}_{-i}) = \left\{\beta_i\fwh{ u_i(\beta_i(v_i),\vec{\beta}_{-i};v_i) \geq \max_{b\in B} u_i(b,\vec{\beta}_{-i};v_i) - \varepsilon\quad\text{for all}\;\; v_i \in [0,1]} \right\}
\]
Using this, the condition in \cref{def:bayes-nash-equilibrium} can be equivalently
written as $\beta_i\in BR_{i}^{\varepsilon}(\vec{\beta}_{-i})$ for all players $i$. For the special case of $\varepsilon=0$, i.e.\ \emph{exact} best-responses, we will drop the $\varepsilon$ superscript.

Notice that, in \cref{def:bayes-nash-equilibrium} we define a relaxed equilibrium
concept, in which the bidder does not want to change to a different strategy unless
it increases her utility by an additive factor larger than $\varepsilon$; obviously,
when $\varepsilon=0$ we recover the standard definition of the (exact) Bayes-Nash
equilibrium.

\paragraph{No Overbidding}
As part of our model, we will make the assumption that bidders will never submit a
bid $b_i$ which is higher than their valuation $v_i$. This is a standard
assumption in the 
literature of the first-price auction \citep{maskin2000equilibrium,maskin2003uniqueness,lebrun2006uniqueness,escamocher2009existence,wang2020bayesian}
and auctions in general
\citep{caragiannis2015bounding,lucier2010price,bhawalkar2011welfare,feldman2013simultaneous,christodoulou2008bayesian,leme2010pure}.
The rationale behind it stems from the fact that, given the format of the utilities
in the FPA (see~\eqref{eq:ex_post_utilities}), it is arguably unreasonable to overbid,
as bidding $0$ will \emph{always} result in at least the same utility. In
game-theoretic terms, the overbidding strategy is \emph{weakly dominated} by bidding
$0$, which can be interpreted as abstaining from the auction. These strategies are
typically excluded from consideration to rule out unnatural equilibria (see
\citep{feldman2013simultaneous} for a discussion).
\medskip

\noindent We are now ready to formally define our computational problem of finding an equilibrium of the FPA: \\

\noindent\fbox{%
\colorbox{gray!10!white}{
    \parbox{0.965\textwidth}{%
\noindent\textbf{\underline{$\bm{\varepsilon}$-\textsc{Bayes-Nash Equilibrium in the First-Price Auction ($\bm{\varepsilon}$-BNE-FPA)}}} 

\smallskip

\noindent \textsc{Input:} 
\begin{itemize}
\item[-] a set of bidders $N=\{1,2,\ldots,n\}$;
    \item[-] a finite bidding space $B\subseteq [0,1]$;
    \item[-] for each pair of bidders $i, j\in N$, a continuous value distribution $F_{i,j}$ over $[0,1]$.
\end{itemize}

\smallskip

\noindent \textsc{Output:} 
An  $\varepsilon$-Bayes-Nash equilibrium $\vec{\beta}=(\beta_1,\ldots,\beta_n)$.
}}} \medskip

We will use the term \exactfpa instead of $0$-BNE-FPA to denote the computational problem of finding an exact Bayes-Nash equilibrium of the auction. Some remarks related to the definition above are in order.

\paragraph{The Input Model for the Distributions} We have intentionally vaguely
stated that the distributions $F_{i,j}$ should be provided as input to the problem,
but we have not specified exactly how. Our positive results hold even when the
functions $F_{i,j}$ are fairly general, and can be concisely and efficiently
represented in a form that is appropriate for computation. In the interest of
clarity, we omit the technical details here, and we refer the reader to
\cref{app:inputs} where we provide all the details of the input model. For the
negative results on the other hand, we use fairly simple distributions $F_{i,j}$ -- this only makes our results stronger. In particular, we use \emph{piecewise-constant} density functions, which can be represented by the endpoints and the value for
each interval.

\paragraph{Explicit Bidding Space} We assume that the bidding space is
explicitly given as part of the input. This assumption is required in \cref{sec:bestresponses} in order to show that we can compute best-responses efficiently. Even in the mildest of settings where the bidding space is given implicitly, computing best-responses turns out to be computationally and
information-theoretically hard. We show this in \cref{app:bidding-space}.

\paragraph{Equilibrium Representation} Besides the representation of the input, the output of our computational problem, i.e., the equilibrium of the FPA, should also be represented in some concise and efficient way. 
Following the standard literature of the problem, we will consider equilibria for which the strategy $\beta_i(v_i)$ of each bidder is a non-decreasing function of her value $v_i$ (e.g., see \citep{Athey2001,maskin2000equilibrium,reny2004existence} and \citep[Appendix G]{krishna2009auction}) for which the existence of an equilibrium is always guaranteed \citep{Athey2001}. These equilibria are in a sense the only ``natural'' ones, as, similar to the case of overbidding (see earlier discussion), any bidder's strategy is weakly dominated by a non-decreasing strategy. 

Based on this, there is a straightforward and computationally efficient way of representing 
the best response of each player, as a step function with a
finite set of ``jump points'', corresponding to the values at which the bidder
``jumps'' from one bid to the next \citep{Athey2001}.
Formally, we define \begin{equation}\label{eq:jump_points}
\alpha_i(b)=\sup\sset{v\fwh{\beta_i(v)\leq b}}.
\end{equation}
Intuitively,
$\alpha_i(b)$ is the largest value for which player $i$ would bid $b$ or lower. With
a slight abuse of notation, we can write $\alpha_i=\beta_i^{-1}$, that is,
$\alpha_i$ can be interpreted as an \emph{inverse bidding} strategy. In that way, we
can also rework $\beta_i$ from $\alpha_i$, as $\beta_i(v)=b$, where
$v\in(\alpha_i(b^-),\alpha_i(b)]$ for any $b \in B$. Here we let $b^-$ denote the previous bid, i.e., the largest $b' \in B$ with $b' < b$. Finally, to be able to handle the corner cases in a unified way, we set $a_i(b^-)=0$ when $b=0$. Notice also, that $\alpha_i(b)=1$ when $b=\max B$.

In particular, this implies that bidding strategies are left-continuous (which is without loss of generality given our value distributions), as shown in \cref{fig:inversebidding}.

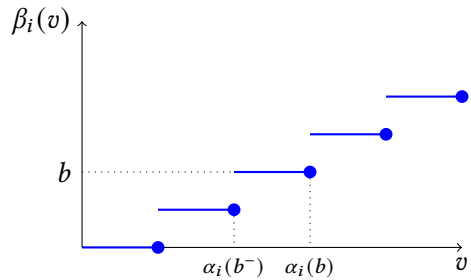
\begin{figure}[ht]\begin{center}
\begin{tikzpicture}
\draw[->] (0,0) -- (5,0);
\draw[->] (0,0) -- (0,3);
\node[below] at (5,0) {$v$};
\node[below] at (3,0) {\scriptsize $\alpha_i(b)$};
\node[below] at (2,0) {\scriptsize $\alpha_i(b^-)$};
\node[left] at (0,3) {$\beta_i(v)$};
\node[left] at (0,1.0) {$b$};
\coordinate (b1l) at (0,0.0) {};
\coordinate (b1r) at (1,0.0) {};
\coordinate (b2l) at (1,0.5) {};
\coordinate (b2r) at (2,0.5) {};
\coordinate (b3l) at (2,1.0) {};
\coordinate (b3r) at (3,1.0) {};
\coordinate (b4l) at (3,1.5) {};
\coordinate (b4r) at (4,1.5) {};
\coordinate (b5l) at (4,2.0) {};
\coordinate (b5r) at (5,2.0) {};
\draw[blue,thick] (b1l) -- (b1r);
\draw[blue,thick] (b2l) -- (b2r);
\draw[blue,thick] (b3l) -- (b3r);
\draw[blue,thick] (b4l) -- (b4r);
\draw[blue,thick] (b5l) -- (b5r);
\draw[dotted] (0,1.0) -- (b3l);
\draw[dotted] (b3r) -- (3,0.0);
\draw[dotted] (b2r) -- (2,0.0);
\draw[blue,thick,fill=blue] (b1r) circle (2pt);
\draw[blue,thick,fill=blue] (b2r) circle (2pt);
\draw[blue,thick,fill=blue] (b3r) circle (2pt);
\draw[blue,thick,fill=blue] (b4r) circle (2pt);
\draw[blue,thick,fill=blue] (b5r) circle (2pt);
\end{tikzpicture}
\end{center}\caption{A monotone bidding strategy $\beta_i(\cdot)$ can be succinctly represented by its jump points, $\alpha_i(b)$ for $b\in B$.\label{fig:inversebidding}}\end{figure}

\paragraph{Irrational Equilibria}
As discussed in our
introduction, for our PPAD-completeness result, 
we will be looking for an $\varepsilon$-approximate equilibrium,
rather than an exact one. Of course, this only makes our hardness results even
stronger; but besides that, it is actually very much necessary for our membership
result in PPAD as well. In particular, as demonstrated by the example below, the FPA
may have \emph{only irrational} equilibria, even when all input parameters are
rational numbers.

\begin{example}\label{ex:irrational}
Consider a FPA with $n=3$ bidders and common priors, whose values are independently
and identically distributed according to the uniform distribution on $[0,1]$; that
is, $F_i(x)=x$ for $i=1,2,3$. Let the bidding space be $B=\{0,1/2\}$. Clearly, this
auction can be represented with piecewise-constant density functions (with a single
piece) and with a finite number of rational quantities. It can be verified that the
auction has a unique equilibrium, where a bidder bids $0$ iff her valuation is below $\frac{-1+\sqrt{5}}{2}\approx 0.618$; therefore, the unique equilibrium is irrational. We
provide the detailed derivation in \cref{app:irrational}.
\end{example}

The appropriate setting for studying the computation of \emph{exact} equilibria is
the class FIXP of \citet{etessami2010complexity}. In \cref{sec:inFIXP,sec:hardness} we show
that the problem of exact equilibrium computation of the FPA is FIXP-complete.

\paragraph{Further Notation} We conclude the section with the following terminology which will be useful in multiple sections of our paper. For $t_1 < t_2$, we will let $\trunc_{[t_1,t_2]}$ denote the \emph{truncation} of a value $x$ to $[t_1,t_2]$, i.e., $\trunc_{[t_1,t_2]}(x) = \max\{t_1,\min\{t_2,x\}\}$. Furthermore, for $k \in \mathbb{N}$ we sometimes use $[k]$ to denote $\{1,2,\dots,k\}$.

\subsection{Outline}
In \cref{sec:bestresponses} we provide a useful characterization of BNE and then show how to compute the best responses in polynomial time. In \cref{sec:membership}, first we provide a new existence proof via Brouwer's fixed point theorem, and then proceed to prove the membership of the equilibrium computation problems in PPAD and FIXP. In \cref{sec:hardness} we show the computational hardness for these classes. In \cref{sec:positive} we present an efficient algorithm for a natural special case. We conclude with some interesting future directions in \cref{sec:conclusion}.

\section{Equilibrium Characterization and Best Response Computation}\label{sec:bestresponses}

In this section we begin by presenting a useful characterization of $\varepsilon$-BNE that is crucial for many parts of the paper. Then, we show how best-responses of bidders can be checked and computed in polynomial time. We remark that the reductions that we will construct in \cref{sec:membership} to show the \ppad-membership and the \fixp-membership of the problem do not technically require the computation of the whole best-response function, but rather only the probabilities of winning the item given the bidder's bid and the bidding strategies of the other bidders. However, the best-response computation is interesting in its own right, and that is why we present this here.

\paragraph{Characterization}
The following lemma essentially states that an $\varepsilon$-BNE is characterized by the behavior of the bidding function at the jump points. Recall that for any bid $b$, we let $b^-$ denote the previous bid, and we use the convention $\alpha_i(b^-)=0$ when $b=0$. Notice that $\alpha_i(b)=1$ when $b = \max B$. Furthermore, when $\alpha_i(b^-) = \alpha_i(b)$, the corresponding strategy $\beta_i$ does not use bid $b$, but instead jumps from the previous bid directly to the following one. We will call all bids where this does not happen, i.e, we have $\alpha_i(b^-) < \alpha_i(b)$, \emph{non-degenerate} for bidder $i$.

\begin{lemma}[Characterization of $\varepsilon$-BNE]\label{lem:charepsilonbne}
	Fix an $\varepsilon \geq 0$. A strategy profile $\vec{\beta}$ is an $\varepsilon$-BNE of the FPA, if and only if, for every bidder $i$ and every non-degenerate bid $b$,
	\begin{equation}\label{eq:charepsilonbne1}
	u_i(b,\vec{\beta}_{-i};\alpha_i(b^-)) \geq
	u_i(b',\vec{\beta}_{-i};\alpha_i(b^-)) -\varepsilon
	\qquad\text{for all}\;\; b'<b
	\end{equation}
	and
	\begin{equation}\label{eq:charepsilonbne2}
	u_i(b,\vec{\beta}_{-i};\alpha_i(b)) \geq
	u_i(b',\vec{\beta}_{-i};\alpha_i(b)) -\varepsilon
	\qquad\text{for all}\;\; b'>b.
	\end{equation}
\end{lemma}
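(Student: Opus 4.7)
The plan is to exploit the fact that the ex-interim utility factors as $u_i(b, \vec{\beta}_{-i}; v) = p_i(b)\cdot(v - b)$, where $p_i(b)$ denotes the winning probability of bidder $i$ when bidding $b$, given the opponents' strategies $\vec{\beta}_{-i}$; crucially, this quantity does not depend on $v$. Consequently, for any two bids $b, b' \in B$, the difference $g(v) := u_i(b, \vec{\beta}_{-i}; v) - u_i(b', \vec{\beta}_{-i}; v)$ is affine in $v$ with slope $p_i(b) - p_i(b')$. A short preliminary calculation also shows that $p_i$ is non-decreasing in $b$ under the uniform tie-breaking rule: passing from $b$ to a strictly larger $b'$ converts each previous tie event at $b$ (where the win contribution was only $1/|W|$) into an outright win against those opponents, while any new tying contributions at $b'$ only add further non-negative weight.

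For the forward direction, I apply the $\varepsilon$-BNE condition at the value $v = \alpha_i(b)$, where $\beta_i(v) = b$ (since the interval $(\alpha_i(b^-), \alpha_i(b)]$ is nonempty by the hypothesis $\alpha_i(b^-) < \alpha_i(b)$), to obtain \cref{eq:charepsilonbne2} immediately. For \cref{eq:charepsilonbne1}, I apply the $\varepsilon$-BNE condition at $v = \alpha_i(b^-) + \delta$ for small $\delta > 0$ --- where still $\beta_i(v) = b$ --- and let $\delta \to 0^+$; continuity of $u_i(\cdot, \vec{\beta}_{-i}; v)$ in $v$ (indeed, linearity) preserves the inequality in the limit.

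For the backward direction, fix any bidder $i$, any value $v \in [0,1]$, and any alternative bid $b' \in B$; set $b := \beta_i(v)$, so that $v \in (\alpha_i(b^-), \alpha_i(b)]$ and in particular $\alpha_i(b^-) < \alpha_i(b)$, making both \cref{eq:charepsilonbne1} and \cref{eq:charepsilonbne2} available for this $b$. If $b' < b$, then $g$ has slope $p_i(b) - p_i(b') \geq 0$ by monotonicity, so $g(v) \geq g(\alpha_i(b^-)) \geq -\varepsilon$ by \cref{eq:charepsilonbne1}. If $b' > b$, the slope is non-positive, so $g(v) \geq g(\alpha_i(b)) \geq -\varepsilon$ by \cref{eq:charepsilonbne2}. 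The case $b' = b$ is trivial. This yields $u_i(b, \vec{\beta}_{-i}; v) \geq u_i(b', \vec{\beta}_{-i}; v) - \varepsilon$ for every $b' \in B$, which is exactly the $\varepsilon$-BNE condition at $v$.

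The main (and only non-routine) obstacle is the monotonicity claim $p_i(b') \geq p_i(b)$ for $b' > b$ under uniform tie-breaking, which requires careful bookkeeping of the $1/|W|$ factors as we move past the atom at $b$ when $B$ is discrete. Once monotonicity is in hand, the remainder reduces entirely to one-dimensional reasoning about the affine function $g$, and the boundary conventions $\alpha_i(b^-) = 0$ when $b = 0$ and $\alpha_i(\max B) = 1$ when $b = \max B$ handle the extremal cases uniformly (one of \cref{eq:charepsilonbne1}, \cref{eq:charepsilonbne2} being vacuous in each).
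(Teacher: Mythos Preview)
Your proposal is correct and follows essentially the same approach as the paper's proof: both directions rest on the factorization $u_i(b,\vec{\beta}_{-i};v)=H_i(b,\vec{\beta}_{-i})(v-b)$ (your $p_i$ is the paper's $H_i$), the resulting affineness in $v$, and the monotonicity of the winning probability in $b$. Your forward direction is slightly more explicit---applying the BNE condition directly at $\alpha_i(b)$ and taking a limit at $\alpha_i(b^-)$---where the paper simply invokes continuity at both endpoints, but this is the same argument.
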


\paragraph{The $\bm{H}$-functions}
Before proving this characterization, we introduce some useful notation. We use the term $H_i(b,\vec{\beta}_{-i})$ to denote the (perceived) probability that bidder $i$ wins the item with bid $b$, when the other bidders use bids according to the bidding strategy $\vec{\beta}_{-i}$, i.e., 
\begin{equation*}
H_i(b,\vec{\beta}_{-i}) = \prob{\text{bidder } i \text{ wins}|b, \vec{\beta}_{-i}}
\end{equation*}
The utility can easily be expressed in terms of this function, namely $u_i(b,\vec{\beta}_{-i};v_i) = (v_i-b) \cdot H_i(b,\vec{\beta}_{-i})$.

\begin{proof}[Proof of \cref{lem:charepsilonbne}]
	($\Rightarrow)$: Fix a bidder $i$ and a bid $b$ with $\alpha_i(b^-)<\alpha_i(b)$. Since bidder $i$ bids $b$ inside the non-empty interval $(\alpha_i(b^-),\alpha_i(b)]$, and $\vec{\beta}$ is an $\varepsilon$-BNE, we get that $u_i(b,\vec{\beta}_{-i};v_i)\geq u_i(b',\vec{\beta}_{-i};v_i)-\varepsilon$ for every $v_i \in(\alpha_i(b^-),\alpha_i(b)]$ and $b'\neq b$. Since the utilities are continuous functions on $v_i$, the inequalities must also hold at the interval endpoints.
	
	($\Leftarrow$): Suppose (\ref{eq:charepsilonbne1},~\ref{eq:charepsilonbne2}) hold. Take any bidder $i$ and any valuation $v_i$, and let $(\alpha_i(b^-),\alpha_i(b)]$ be the interval containing $v_i$. Notice that the utilities $u_i(b,\vec{\beta}_{-i};v_i)$,  $u_i(b',\vec{\beta}_{-i};v_i)$ are linear functions on $v_i$, with slopes given by $H_i(b,\vec{\beta}_{-i})$,  $H_i(b',\vec{\beta}_{-i})$ respectively. For $b'<b$, we know that $H_i(b',\vec{\beta}_{-i})\leq H_i(b,\vec{\beta}_{-i})$ and $u_i(b,\vec{\beta}_{-i};v)\geq u_i(b',\vec{\beta}_{-i};v)-\varepsilon$ holds at $v=\alpha_i(b^-)$; therefore it must hold also at $v=v_i$. Similarly for $b'>b$, we know that $H_i(b',\vec{\beta}_{-i})\geq H_i(b,\vec{\beta}_{-i})$ and $u_i(b,\vec{\beta}_{-i};v)\geq u_i(b',\vec{\beta}_{-i};v)-\varepsilon$ holds at $v=\alpha_i(b)$; therefore it must hold also at $v=v_i$. We thus conclude that $\vec{\beta}$ is an $\varepsilon$-BNE.
\end{proof}

We now consider the basic computational problems of checking and computing best-responses of bidders. We assume throughout that bidding strategies provided in the input are given via rational quantities corresponding to the jump points $\alpha_j(b)$, as defined in \cref{sec:prelims}. The first step to be able to check or compute best-responses is the efficient computation of the $H$-functions defined above.

\paragraph{Computation of the $\bm{H}$-functions} Recall that 
$H_i(b,\vec{\beta}_{-i}) = \prob{\text{bidder } i \text{ wins}|b, \vec{\beta}_{-i}}$.
This probability clearly depends on bidder $i$'s prior on the other bidders' distributions, as well as on whether $b$ is the highest bid, and if it is, how many other highest bids there are in the auction, in case of a tie. While the form of the functions $H_i$ can be devised analytically, the expression involves exponentially many terms in the number of bidders $n$; therefore it is not obvious that it can be computed efficiently. The following lemma states that this is in fact possible.

\begin{lemma}\label{lem:H-functions}
	Given a bidder $i$, a bid $b$ and bidding strategies $\vec{\beta}_{-i}$ of the other bidders, the probability $H_i(b,\vec{\beta}_{-i})$ of bidder $i$ winning the item can be computed in polynomial time.
\end{lemma}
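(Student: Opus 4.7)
My plan is to compute $H_i(b,\vec{\beta}_{-i})$ by collapsing the exponential sum over ``who ties with $i$'' into a polynomial-time dynamic program based on a generating-function identity.

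First, using the input representation of the inverse bidding strategies, for each $j\neq i$ I would compute the three rational quantities
\[
p_j^< = F_{i,j}(\alpha_j(b^-)), \qquad p_j^= = F_{i,j}(\alpha_j(b)) - F_{i,j}(\alpha_j(b^-)), \qquad p_j^> = 1 - F_{i,j}(\alpha_j(b)),
\]
representing the probability (under bidder $i$'s subjective prior) that $j$ bids strictly below $b$, exactly $b$, or strictly above $b$. Evaluating $F_{i,j}$ at the two jump points is a polynomial-time primitive of the input model (see \cref{app:inputs}). Note also the small corner cases $\alpha_j(b^-)=0$ if $b=0$ and $\alpha_j(b)=1$ if $b=\max B$.

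Next I would express $H_i$ as
\[
H_i(b,\vec{\beta}_{-i}) = \sum_{k=0}^{n-1} \frac{E_k}{k+1},
\]
where $E_k$ is the probability that exactly $k$ of the other bidders bid $b$ and the remaining $n-1-k$ bid strictly below $b$ (if any bidder bids strictly above $b$ then $i$ loses, contributing $0$). The factor $1/(k+1)$ accounts for uniform tie-breaking among $i$ and the $k$ tied bidders. To avoid the exponential enumeration hidden in $E_k$, I would observe the generating-function identity
\[
Q(x) \;\equiv\; \prod_{j\neq i}\bigl(p_j^< + p_j^=\,x\bigr) \;=\; \sum_{k=0}^{n-1} E_k\, x^k,
\]
which follows directly from expanding the product and grouping by the size of the ``tied'' subset.

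The final step is the algorithm: build $Q(x)$ incrementally by multiplying in the $n-1$ linear factors one at a time, maintaining the list of $n$ coefficients; this takes $O(n^2)$ rational arithmetic operations on numbers of polynomially bounded bit-length. Then read off $E_0,\dots,E_{n-1}$ and return $\sum_{k=0}^{n-1} E_k/(k+1)$. The only mild subtlety I anticipate is bookkeeping around the boundary cases $b=0$ and $b=\max B$ (to make sure the conventions for $\alpha_i(b^-)$ and $\alpha_j(b)$ are consistent across the product) and making sure the input model indeed supports polynomial-time evaluation of $F_{i,j}$ at rational points; both are routine given the setup of \cref{sec:prelims} and \cref{app:inputs}. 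No clever combinatorial insight beyond the generating-function trick is required, so this should be the main (and essentially only) conceptual step of the proof.
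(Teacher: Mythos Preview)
Your proposal is correct and essentially the same as the paper's proof: the paper also computes $p_j^<=G_{j,b^-}$ and $p_j^==g_{j,b}$, writes $H_i=\sum_k T(b,n-1,k)/(k+1)$ with $T(b,n-1,k)=E_k$, and then avoids the exponential sum via the dynamic program $T(b,\ell+1,k+1)=T(b,\ell,k)\,g_{\ell+1,b}+T(b,\ell,k+1)\,G_{\ell+1,b^-}$, which is exactly your incremental multiplication of the linear factors $(p_j^< + p_j^=\,x)$ to build $Q(x)$. The only difference is packaging (generating function versus DP table); the computation and its $O(n^2)$ cost are identical.
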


\begin{proof}
	For ease of notation, we present the proof for bidder $i=n$. The cases for the other
	bidders are analogous and can be handled, e.g., via an appropriate relabeling. The
	probability that bidder $n$ wins (given her bid and the bidding strategies of the
	other bidders) can be written as
	
	\begin{equation}
	\label{eq:H_functions_sum}
	H_n(b,\vec{\beta}_{-n})=\sum_{k=0}^{n-1}\frac{1}{k+1}T(b,n-1,k),  
	\end{equation}
	where, for $0\leq k \leq \ell \leq n-1 $, we use $T(b,\ell,k)$ to denote the
	probability that \emph{exactly} $k$ out of the first $\ell$ bidders bid exactly $b$,
	and the remaining $\ell-k$ bidders all bid below $b$; in other words, for the special
	case where $\ell=n-1$ in the above expression, $T(b,n-1,k)$ is the probability of
	the highest bid being $b$, with $k+1$ bidders (including bidder $n$) being tied for
	the highest bid. Next, for a given bidder $j$, let
	$$G_{j,b^-}=F_{n,j}(\alpha_j(b^-))=\prob{\beta_j(v_j)<b},\quad
	g_{j,b}=F_{n,j}(\alpha_j(b))-G_{jb^-}=\prob{\beta_j(v_j)=b}$$ denote the (perceived
	from the perspective of bidder $n$) probabilities that bidder $j$ bids below $b$,
	and exactly $b$, respectively. Note that $G_{j,b^-}$ and $g_{j,b}$ can be efficiently computed
	with access to $F_{n,j}$ and $\vec{\alpha}_{-n}$. Moreover, one could write
	
	\begin{equation}\label{eq:tblj}
	T(b,n-1,k)=\sum_{\substack{S\subseteq[n-1]\\ |S|=k}}\prod_{j\in S}g_{j,b}\cdot\prod_{j\not\in S}G_{j,b^-}.
	\end{equation}
	
	Notice that \eqref{eq:tblj} does not yield an efficient way of computing the probabilities, as the number of summands can be exponential in $n$. To bypass this obstacle, we observe that, more generally, the probabilities $T(b,\ell,k)$ can be computed from $G_{\ell,b^-}$ and $g_{\ell,b}$ via dynamic programming, by conditioning on bidder $\ell$'s bid, in the following way:
	\begin{align*}
	T(b,0,0) &=1;              &\\
	T(b,\ell,k) &=0,&\quad\text{for}\;\; k>\ell;\\
	T(b,\ell+1,0  ) &=T(b,\ell,0)G_{\ell+1,b^-};&\\
	T(b,\ell+1,k+1) &=T(b,\ell,k)g_{\ell+1,b} + T(b,\ell,k+1)G_{\ell+1,b^-}; &\text{for}\;\; k\leq\ell.
	\end{align*}
	Thus, all values of $T(b,n-1,k)$, for $k=0,\ldots,n-1$, can be computed with a total number of $O(n^2)$ recursive calls, so that $H_n(b,\vec{\beta}_{-n})$ can be computed in polynomial time.
\end{proof}

\cref{lem:H-functions} implies that the utilities in (\ref{eq:charepsilonbne1},~\ref{eq:charepsilonbne2}) of the characterization (\cref{lem:charepsilonbne}) can be computed in polynomial time. Since there are $O(n|B|^2)$ inequalities to check in \cref{lem:charepsilonbne}, we immediately conclude the following.

\begin{corollary}
	Given $\varepsilon \geq 0$, and a strategy profile $\vec{\beta}$ in a first-price auction with subjective priors, one can determine in polynomial time if $\vec{\beta}$ constitutes an $\varepsilon$-BNE. 
\end{corollary}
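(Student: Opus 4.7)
The plan is to combine the two preceding results: the finite characterization in Lemma~\ref{lem:charepsilonbne} and the efficient computation of the $H$-functions in Lemma~\ref{lem:H-functions}. Since the input provides each strategy $\beta_i$ in terms of its rational jump points $\alpha_i(b)$ for $b\in B$, the algorithm iterates over every bidder $i\in N$ and every bid $b\in B$, checks the simple numerical comparison $\alpha_i(b^-)<\alpha_i(b)$, and for each $(i,b)$ where this inequality holds verifies the two families of inequalities \eqref{eq:charepsilonbne1} and \eqref{eq:charepsilonbne2}, which range over all $b'\in B$ with $b'<b$ and $b'>b$ respectively.

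The next step is the per-inequality cost. Each test involves a utility of the form $u_i(b',\vec{\beta}_{-i};v_i)=(v_i-b')\cdot H_i(b',\vec{\beta}_{-i})$ evaluated at $v_i=\alpha_i(b^-)$ or $v_i=\alpha_i(b)$. By Lemma~\ref{lem:H-functions}, each probability $H_i(b',\vec{\beta}_{-i})$ is computable in polynomial time from the priors $F_{i,j}$ and the jump points $\vec{\alpha}_{-i}$. Multiplying by the rational scalar $(v_i-b')$ and comparing the two resulting rationals up to the additive slack $\varepsilon$ is then a constant number of arithmetic operations on numbers of polynomial bit-length (by inspection of the dynamic program in Lemma~\ref{lem:H-functions}), so each individual inequality check runs in polynomial time.

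Finally, a straightforward accounting argument finishes the proof: there are $O(n|B|^2)$ such inequalities, so the total running time is polynomial, and by Lemma~\ref{lem:charepsilonbne} the conjunction of all these inequalities is both necessary and sufficient for $\vec{\beta}$ to be an $\varepsilon$-BNE. There is no real obstacle here; the only subtlety is making sure the numerical comparisons are exact on rationals of polynomial size, which is immediate from the closed-form dynamic program used in Lemma~\ref{lem:H-functions} and the rational representation of the input distributions and jump points.
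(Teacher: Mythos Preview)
Your proof is correct and follows essentially the same approach as the paper: combine the $O(n|B|^2)$ inequalities from the characterization in Lemma~\ref{lem:charepsilonbne} with the polynomial-time computability of the $H$-functions from Lemma~\ref{lem:H-functions}. The paper's own argument is just the one-sentence version of what you wrote; your additional remarks on rational arithmetic and bit-length are fine but not needed for the level of detail the paper maintains.
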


Using \cref{lem:H-functions}, we can now also efficiently compute best-responses, and, in fact, even \emph{exact} best-responses (i.e., $\varepsilon$-best-responses for $\varepsilon = 0$).

\begin{theorem}\label{theorem:BR}
	In a first-price auction with subjective priors, the bidders' best-responses can be computed in polynomial time.
\end{theorem}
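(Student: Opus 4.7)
The plan is to reduce the best-response computation to a one-dimensional upper-envelope problem over $|B|$ affine functions, which can be solved in polynomial time. The key observation is that, by the formula $u_i(b,\vec{\beta}_{-i};v_i)=(v_i-b)\cdot H_i(b,\vec{\beta}_{-i})$, for each fixed bid $b\in B$ the utility is an affine function of the value $v_i$ with slope $h_b:=H_i(b,\vec{\beta}_{-i})$ and intercept $-b\cdot h_b$. Since by \cref{lem:H-functions} each $h_b$ can be computed in polynomial time, all of these $|B|$ lines $\ell_b(v)=h_b v - b h_b$ are explicitly available.

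First I would compute $h_b$ for every $b\in B$ using \cref{lem:H-functions}. Next, I would compute the upper envelope $U(v)=\max_{b\in B}\ell_b(v)$ on $[0,1]$ by the standard sweep that sorts the lines and maintains the envelope via pairwise intersections; this takes $O(|B|\log |B|)$ time. The envelope naturally partitions $[0,1]$ into at most $|B|$ intervals, each labeled by a bid $b\in B$ that attains the maximum on that interval. Reading off the right endpoints of the intervals directly yields the jump points $\alpha_i(b)$, and hence a best-response strategy $\beta_i$ in the representation described in \cref{sec:prelims}.

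Three small issues need to be addressed but are routine. \textbf{Monotonicity:} since increasing one's own bid can only (weakly) increase the winning probability under uniform tie-breaking, the slopes $h_b$ are non-decreasing in $b$. Consequently the lines appear on the upper envelope in order of increasing slope (the classical property of envelopes of lines sorted by slope), so the induced strategy $\beta_i$ is automatically non-decreasing in $v_i$ and admits a valid jump-point representation. \textbf{No-overbidding:} if one restricts attention to bids $b\le v_i$, any overbidding line $\ell_b$ with $b>v_i$ satisfies $\ell_b(v_i)\le 0$ while $\ell_0(v_i)=v_i h_0\ge 0$, so discarding overbids from the envelope loses no best response. \textbf{Ties and corner cases:} when several lines coincide on a subinterval we pick any maximizer (they yield equal utility), and we set $\alpha_i(\max B)=1$ and $\alpha_i(b^-)=0$ for $b=0$ in accordance with the conventions of \cref{sec:prelims}. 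The main (and only) potential difficulty is this bookkeeping to make the envelope output agree with the required left-continuous step-function representation; the substantive computational work is entirely absorbed by \cref{lem:H-functions}.
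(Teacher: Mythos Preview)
Your proposal is correct and follows essentially the same approach as the paper: compute all the winning probabilities $H_i(b,\vec{\beta}_{-i})$ via \cref{lem:H-functions}, observe that each utility $u_i(b,\vec{\beta}_{-i};v_i)=(v_i-b)H_i(b,\vec{\beta}_{-i})$ is affine in $v_i$, and then read off the best-response jump points from the upper envelope of these $|B|$ lines. The only cosmetic difference is that the paper writes down an explicit closed-form for the pairwise intersection points $\tilde\alpha_i(b,b')$ and derives $\alpha_i(b)=\max_{b'\le b}\min_{b''>b'}\tilde\alpha_i(b',b'')$, whereas you invoke a standard envelope sweep; both yield the same object in polynomial time.
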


\begin{proof}
	Given a bidder $i$ and the vector of bidding strategies $\vec{\beta}_{-i}$, one can compute in polynomial time the probabilities $H_i(b,\vec{\beta}_{-i})$ for each bid $b\in B$ using \cref{lem:H-functions}. Now recall that the utility of bidder $i$, when having a valuation of $v_i$ and bidding $b$, is given by $u_i(b,\vec{\beta}_{-i};v_i)=(v_i-b) \cdot H_i(b,\vec{\beta}_{-i})$, which is a linear function on $v_i$ having slope $H_i(b,\vec{\beta}_{-i})$. Thus, maximizing the utility amounts to taking the maximum (or \emph{upper envelope}) of $|B|$ linear functions; the result is a piecewise linear function whose jump points can be efficiently computed by solving linear equations. In particular, given bids $b<b'$, we can compute $\alpha=\tilde\alpha_i(b,b')$ as the solution of $u_i(b,\vec{\beta}_{-i};\alpha)=u_i(b',\vec{\beta}_{-i};\alpha)$, that is,
	\begin{equation*}
	\tilde\alpha_i(b,b')=
	\begin{cases}
	\frac{b'H_i(b',\vec{\beta}_{-i})-bH_i(b,\vec{\beta}_{-i})}{H_i(b',\vec{\beta}_{-i})-H_i(b,\vec{\beta}_{-i})} &\text{if }\;\; H_i(b',\vec{\beta}_{-i})\neq H_i(b,\vec{\beta}_{-i}),\\ 
	+\infty&\text{otherwise}.
	\end{cases}
	\end{equation*}
	Intuitively, $\tilde\alpha_i(b,b')$ is the jump point corresponding to bidding $b$ versus bidding $b'$: bidder $i$ achieves higher utility by bidding $b$ iff $v_i<\tilde\alpha_i(b,b')$. Now the highest value for which bidder $i$ (weakly) prefers bidding $b$ versus any other higher bid is $\min_{b'>b}\tilde\alpha_i(b,b')$; if at this valuation, bidding $b$ also achieves higher utility than bidding any other lower bid, then $\min_{b'>b}\tilde\alpha_i(b,b')$ is indeed one of the desired jump points. Otherwise, $b$ is a degenerate bid, in the sense that there is no valuation for which $b$ is an optimal response. Therefore, the jump points introduced in~\eqref{eq:jump_points} are given by $\alpha_i(b)=\max_{b'\leq b}\min_{b''>b'}\tilde\alpha_i(b',b'')$.\footnote{The maximization over $b'\leq b$ serves to exclude degenerate cases, e.g.\ if $b'<b<b''$ but $\tilde\alpha_i(b,b'')<\tilde\alpha_i(b',b'')<\tilde\alpha_i(b,b')$.} Clearly then, the $\alpha_i(b)$ can be found in polynomial time.
\end{proof}

\section{Existence and Membership in PPAD and FIXP}\label{sec:membership}

The existence of equilibria in our setting can essentially be established by adapting a proof by \citet{Athey2001}, which relies on Kakutani's fixed point theorem. Unfortunately, proofs that are based on this fixed point theorem cannot easily be turned into membership results for computational classes such as \ppad and \fixp. This is especially true for \fixp which is essentially defined as the class of all problems that can be solved by finding a Brouwer fixed point. In order to circumvent this obstacle we present a new proof that uses Brouwer's fixed point theorem. In this section, we first present this proof, and then utilize it to prove membership of our problems of interest in \ppad and \fixp.

\subsection{Existence of Equilibria via Brouwer's Fixed Point Theorem}

\begin{theorem}\label{thm:existence}
Every first-price auction with continuous subjective priors and finite bidding space admits a monotone non-decreasing and non-overbidding pure Bayes-Nash equilibrium.
\end{theorem}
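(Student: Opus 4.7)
The plan is to cast the existence of a monotone non-overbidding pure BNE as a Brouwer fixed-point problem on a finite-dimensional, convex, compact strategy space. Writing $B = \{b_0 < b_1 < \cdots < b_{m-1}\}$, every monotone non-overbidding pure strategy of bidder $i$ is encoded by its jump-point vector $\vec{\alpha}_i = (\alpha_i(b_k))_{k=0}^{m-2}$, with the convention $\alpha_i(b_{m-1}) = 1$, and is constrained by $b_{k+1} \le \alpha_i(b_k) \le \alpha_i(b_{k+1})$ (monotonicity together with non-overbidding, as in \cref{sec:prelims}). The corresponding set $\mathcal{A}_i$ is a simple polytope, so $\mathcal{A} = \prod_i \mathcal{A}_i$ is convex and compact.

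Next I would define a continuous self-map $T : \mathcal{A} \to \mathcal{A}$ whose fixed points are BNE. By \cref{lem:H-functions} and continuity of the priors $F_{i,j}$, the winning probabilities $H_i(b_k, \vec{\beta}_{-i})$ are continuous (in fact polynomial) in $\vec{\alpha}$. Motivated by the characterization of \cref{lem:charepsilonbne} and the indifference computation in the proof of \cref{theorem:BR}, set
\[
T'_i(b_k)(\vec{\alpha}) = \trunc_{[b_{k+1}, 1]}\bigl(\tilde\alpha_i(b_k, b_{k+1})\bigr), \qquad T_i(b_k)(\vec{\alpha}) = \max_{k' \le k} T'_i(b_{k'})(\vec{\alpha}),
\]
where $\tilde\alpha_i(b_k, b_{k+1})$ is the indifference valuation between consecutive bids evaluated at $\vec{\alpha}$. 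The truncation enforces non-overbidding and the running max enforces monotonicity, so $T(\vec{\alpha}) \in \mathcal{A}$.

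The most delicate step is continuity of $T$. Away from the degenerate locus $\{H_i(b_{k+1}) = H_i(b_k)\}$, $T$ is a smooth rational function of continuous quantities. On the part of the locus where $H_i(b_k) > 0$, the unclamped $\tilde\alpha_i$ diverges to $+\infty$ from above, so the clamp to $[b_{k+1},1]$ extends $T'_i$ continuously to the value $1$. The remaining doubly-degenerate sub-case $H_i(b_k) = H_i(b_{k+1}) = 0$ is path-dependent, and I would restore continuity there by a continuous cutoff $\phi$ that blends $T'_i$ with the identity in a small neighbourhood of the doubly-zero set; this is harmless because on that set both bids yield zero utility at every valuation, so the BNE conditions at such bids are vacuous.

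Once $T$ is shown to be continuous, Brouwer yields a fixed point $\vec{\alpha}^*$, and it remains to verify the characterization of \cref{lem:charepsilonbne} with $\varepsilon = 0$. For each bid $b_k$ with non-empty interval $(\alpha^*_i(b_k^-), \alpha^*_i(b_k)]$, the fixed-point equation gives either exact indifference between $b_k$ and $b_{k+1}$ at $v = \alpha^*_i(b_k)$ (interior truncation regime) or the boundary case that translates directly to the required strict preference at the endpoint (clamped regime); the symmetric statement at $\alpha^*_i(b_k^-)$ follows from the definition of $T'_i$ at the preceding index. Monotonicity of $H_i(\cdot, \vec{\beta}_{-i})$ in the bid then lifts these neighbouring comparisons to all other bids, exactly as in the proof of \cref{lem:charepsilonbne}, while empty intervals carry no constraint. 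The main obstacle is this continuity-at-degeneracy analysis; a clean alternative is to apply the same construction to an $\varepsilon$-regularized $T$ (with a perturbation in the denominator of $\tilde\alpha_i$), obtain an $\varepsilon$-BNE for every $\varepsilon > 0$, and extract a convergent subsequence as $\varepsilon \to 0$ using compactness of $\mathcal{A}$ and continuity of the utilities.
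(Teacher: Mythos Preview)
Your high-level plan—encode monotone non-overbidding strategies by their jump points, build a continuous self-map on that compact convex polytope, and apply Brouwer—is exactly the paper's framework. But the specific map $T$ you propose has a genuine gap that is separate from (and more serious than) the continuity issue you flag.

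By defining $T'_i(b_k)$ via the indifference point $\tilde\alpha_i(b_k,b_{k+1})$ between \emph{consecutive} bids only, you do not obtain a best-response map, and fixed points of $T$ need not be BNE. Your claim that ``monotonicity of $H_i$ lifts these neighbouring comparisons to all other bids, exactly as in the proof of \cref{lem:charepsilonbne}'' is a misreading of that proof: there monotonicity extends comparisons \emph{against all bids} from endpoints to the interior of an interval; it does not let you pass from $b_k$-vs-$b_{k+1}$ to $b_k$-vs-$b_{k+2}$. Concretely, take $b_0=0$, $b_1=0.4$, $b_2=0.5$ and winning probabilities $H_i(b_0)=0.1$, $H_i(b_1)=0.15$, $H_i(b_2)=1$. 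Then $\tilde\alpha_i(b_0,b_1)=1.2$ (clamped to $1$) and $\tilde\alpha_i(b_1,b_2)\approx 0.52$, so your running max gives $T_i(b_0)=T_i(b_1)=1$, i.e., ``always bid $b_0$''. At $v=1$ this yields utility $0.1$, while bidding $b_2$ yields $0.5$: not a best response. The chain argument breaks because the $b_1$-interval is empty; you know $u(b_0)\ge u(b_1)$ and $u(b_2)\ge u(b_1)$ at $v=1$, which says nothing about $u(b_0)$ versus $u(b_2)$. The correct best-response formula (cf.\ \cref{theorem:BR}) needs $\min_{b'>b_k}\tilde\alpha_i(b_k,b')$, comparing against \emph{all} higher bids, not just the next one.

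The paper avoids both this pitfall and your division-by-zero continuity problem in one move: rather than computing indifference points, it shifts $\alpha_i(b_{j-1})$ \emph{additively} by the utility gap
\[
\Delta_j^i(\vec\alpha)=u_i\bigl(b_{j-1},\vec\alpha_{-i};\alpha_i(b_{j-1})\bigr)-\max_{\ell\ge j}u_i\bigl(b_\ell,\vec\alpha_{-i};\alpha_i(b_{j-1})\bigr),
\]
and then truncates back into the polytope. This is manifestly continuous (no ratio), and because $\Delta$ already compares against the maximum over \emph{all} higher bids, the sign of $\Delta_j^i$ at a fixed point directly delivers both directions of \cref{lem:charepsilonbne}. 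Your $\varepsilon$-regularisation-plus-subsequence route could in principle be salvaged, but only after first replacing $\tilde\alpha_i(b_k,b_{k+1})$ by $\min_{b'>b_k}\tilde\alpha_i(b_k,b')$ in the definition of $T'_i$.
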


\begin{proof}
Let $N=\{1,2,\dots, n\}$ be the set of bidders, $F_{i,j}$ the continuous subjective priors, and $0=b_0,b_1,\dots,b_m$ be the ordered list of bids, i.e., the elements of $B \subseteq [0,1]$. Recall that a monotone non-decreasing strategy $\beta_i: [0,1] \to B$ can be represented by its jump points $\alpha_i(b)$. Let
$$\mathcal{D} = \{\vec{\alpha} = (\alpha_1,\alpha_2,\dots,\alpha_n) \in ([0,1]^{m})^n \, | \, \forall i \in N, j \in [m]: \alpha_i(b_{j-2}) \leq \alpha_i(b_{j-1}) \land b_j \leq \alpha_i(b_{j-1}) \}$$
where we use the convention $\alpha_i(b_{-1}) := 0$ to keep the notation simple.
The domain $\mathcal{D}$ is the set of all monotone non-decreasing non-overbidding strategy profiles, represented by their jump points. Note that $\mathcal{D}$ is compact and convex.

In what follows we slightly abuse notation by replacing the strategy profile $\vec{\beta}$ by its representation $\vec{\alpha}$ in some terms.
Recall the functions $H_i(b,\vec{\alpha}_{-i})$ defined in \cref{sec:bestresponses}, which represent the probability that bidder $i$ wins the auction, if they bid $b$. By inspecting the proof of \cref{lem:H-functions}, it is easy to see that the quantities $G_{jb^-}$ and $g_{jb}$ are continuous with respect to $\vec{\alpha}_{-i}$, since the distributions are continuous. As a result, the terms $T(b,n-1,j)$ are also continuous in $\vec{\alpha}_{-i}$ (by \eqref{eq:tblj}), which implies that $H_i(b,\vec{\beta}_{-i})$ is also continuous in $\vec{\alpha}_{-i}$. Since the utility functions can be written as $u_i(b,\vec{\alpha}_{-i};v_i) = (v_i-b) \cdot H_i(b,\vec{\alpha}_{-i})$, it follows that the functions $(\vec{\alpha}_{-i}, v_i) \mapsto u_i(b,\vec{\alpha}_{-i};v_i)$ are continuous.

We now construct a function $G: \mathcal{D} \to \mathcal{D}$. For any bidder $i \in N$ and any $j \in [m]$, define the continuous function $\Delta_j^i: \mathcal{D} \to \R$ by
$$\Delta_j^i(\vec{\alpha}) = u_i(b_{j-1},\vec{\alpha}_{-i};\alpha_i(b_{j-1})) - \max_{\ell \geq j} u_i(b_\ell,\vec{\alpha}_{-i};\alpha_i(b_{j-1})).$$
The intuition behind the construction of $\Delta_j^i(\vec{\alpha})$ is as follows. We consider the point $\alpha_i(b_{j-1})$ (the last point where bidder $i$ currently bids $b_{j-1}$) and compare how beneficial it is for bidder $i$ to bid $b_{j-1}$ at this point, compared to using a larger bid. The sign of $\Delta_j^i(\vec{\alpha})$ essentially encodes the result of this comparison. As a result, it encodes whether bidder $i$: (a) is happy with the current value of $\alpha_i(b_{j-1})$ (i.e., $\Delta_j^i(\vec{\alpha}) = 0$), (b) would like to increase $\alpha_i(b_{j-1})$ (i.e., $\Delta_j^i(\vec{\alpha}) > 0$), or (c) would like to decrease $\alpha_i(b_{j-1})$ (i.e., $\Delta_j^i(\vec{\alpha}) < 0$).

Now, for any $\vec{\alpha} \in \mathcal{D}$, let $G(\vec{\alpha}) = \vec{\alpha}'$, where for all $i \in N$ and $j=1,2,\dots,m$ (consecutively and in that order)
\begin{equation}\label{eq:projectiond}\alpha_i'(b_{j-1}) = \trunc_{[\max\{b_j,\alpha_i'(b_{j-2})\},1]} (\alpha_i(b_{j-1}) + \Delta_j^i(\vec{\alpha})).\end{equation}
Note in particular that this is well-defined, since $\alpha_i'(b_{j-2})$ is defined before $\alpha_i'(b_{j-1})$.
The truncation operator immediately ensures that $\vec{\alpha}' \in \mathcal{D}$. Since $G$ is also clearly continuous, and $\mathcal{D}$ is compact and convex, it follows by Brouwer's fixed point theorem that there exists a $\vec{\alpha} \in \mathcal{D}$ with $G(\vec{\alpha})=\vec{\alpha}$. It remains to prove that $\vec{\alpha}$ corresponds to an equilibrium of the auction.

Consider some bidder $i \in N$. We will show that $\alpha_i$ is a best-response to $\vec{\alpha}_{-i}$ using the characterization of \cref{lem:charepsilonbne}. Consider any non-empty interval of non-empty interior $[\alpha_i(b_{j-1}),\alpha_i(b_j)]$, for some $j \in \{0,1,\dots,m\}$, where we use the convention that $\alpha_i(b_{-1})=0$, and recall that $\alpha_i(b_m)=1$.

\begin{itemize}
	\item First, we show that $u_i(b_{j},\vec{\alpha}_{-i};\alpha_i(b_{j})) \geq \max_{\ell > j} u_i(b_\ell,\vec{\alpha}_{-i};\alpha_i(b_{j}))$. Clearly, for $j=m$ this holds trivially. For $j<m$, this can immediately be rephrased as showing $\Delta_{j+1}^i(\vec{\alpha}) \geq 0$. Now, note that by assumption we have $\alpha_i(b_j) > \alpha_i(b_{j-1})$. Thus, since $\alpha_i(b_j)$ remains fixed under $G$, it must be that $\alpha_i(b_j) = b_{j+1}$ or $\Delta_{j+1}^i(\vec{\alpha}) \geq 0$. However, if $\alpha_i(b_j) = b_{j+1}$, then it also trivially holds that $\Delta_{j+1}^i(\vec{\alpha}) \geq 0$.
	\item Next, we show that $u_i(b_{j},\vec{\alpha}_{-i};\alpha_i(b_{j-1})) \geq \max_{\ell < j} u_i(b_\ell,\vec{\alpha}_{-i};\alpha_i(b_{j-1}))$. Again, this holds trivially for $j=0$, so we now consider $j > 0$.
	By the first bullet above, it holds that
	$$u_i(b_{j},\vec{\alpha}_{-i};\alpha_i(b_{j})) = \max_{\ell \geq j} u_i(b_\ell,\vec{\alpha}_{-i};\alpha_i(b_{j})).$$
	By the monotonicity of the $H$-functions, we can simply replace $\alpha_i(b_{j})$ by $\alpha_i(b_{j-1})$ in the equation above. Indeed, since by definition of the $H$-functions we have that $H_i(b,\vec{\alpha}_{-i}) \leq H_i(b',\vec{\alpha}_{-i})$ for any two bids $b \leq b'$, it follows in particular that the function $v \mapsto u_i(b,\vec{\alpha}_{-i};v) - u_i(b',\vec{\alpha}_{-i};v)$ is monotonically non-increasing. To see this, it suffices to write the utilities in terms of the $H$-functions, i.e., $u_i(b,\vec{\alpha}_{-i};v) = (v-b) \cdot H_i(b,\vec{\alpha}_{-i})$, and similarly for $b'$. As a result, we obtain that
	$$u_i(b_{j},\vec{\alpha}_{-i};\alpha_i(b_{j-1})) = \max_{\ell \geq j} u_i(b_\ell,\vec{\alpha}_{-i};\alpha_i(b_{j-1})).$$
	
	On the other hand, since $\alpha_i(b_{j-1}) < \alpha_i(b_j)$, it follows in particular that $\alpha_i(b_{k}) < 1$ for all $k < j$. As a result, since $\alpha_i(b_{k})$ remains fixed under $G$, it must be that $\Delta_{k+1}^i(\vec{\alpha}) \leq 0$ for all $k < j$, i.e.,
	$$u_i(b_{k},\vec{\alpha}_{-i};\alpha_i(b_{k})) \leq \max_{\ell \geq k+1} u_i(b_\ell,\vec{\alpha}_{-i};\alpha_i(b_{k}))$$
	which by monotonicity of the $H$-functions, as explained above, continues to hold if we replace $\alpha_i(b_{k})$ by $\alpha_i(b_{j-1})$, i.e., for all $k < j$ we have
	$$u_i(b_{k},\vec{\alpha}_{-i};\alpha_i(b_{j-1})) \leq \max_{\ell \geq k+1} u_i(b_\ell,\vec{\alpha}_{-i};\alpha_i(b_{j-1})).$$
	As a result it follows by induction that for all $k < j$
	$$u_i(b_{k},\vec{\alpha}_{-i};\alpha_i(b_{j-1})) \leq \max_{\ell \geq j} u_i(b_\ell,\vec{\alpha}_{-i};\alpha_i(b_{j-1})) = u_i(b_{j},\vec{\alpha}_{-i};\alpha_i(b_{j-1})).$$
\end{itemize}
By \cref{lem:charepsilonbne}, it immediately follows that $\alpha_i$ is a best-response to $\vec{\alpha}_{-i}$. Since this holds for all bidders $i \in N$, $\vec{\alpha}$ is an equilibrium.
\end{proof}

\subsection{FIXP Membership}\label{sec:inFIXP}

In order to study the exact equilibrium problem for the first-price auction in the context of \fixp, we consider the model where the distributions $F_{i,j}$ are given by algebraic circuits using the operations $\{+,-,\times,/,\max,\min,\sqrt[k]{\cdot}\}$ and rational constants, as is usual in this setting \citep{etessami2010complexity}. We show that the proof of existence in the previous section can be turned into a reduction.

\begin{theorem}\label{thm:in-fixp}
The problem \exactfpa lies in \fixp.
\end{theorem}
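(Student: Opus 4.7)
The plan is to show that the Brouwer function $G: \mathcal{D} \to \mathcal{D}$ constructed in the proof of \cref{thm:existence} can be implemented by an algebraic circuit over the FIXP basis $\{+,-,\times,/,\max,\min,\sqrt[k]{\cdot}\}$ with rational constants, and that its fixed points encode exactly the exact Bayes-Nash equilibria. Since \cref{thm:existence} already established that fixed points of $G$ correspond to BNEs (via \cref{lem:charepsilonbne}), the only remaining task is to give an efficient algebraic-circuit construction of $G$ on a FIXP-admissible domain, so that the reduction follows the standard template of \citet{etessami2010complexity}.

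First I would verify that $\mathcal{D}$ is a legitimate FIXP domain: it is the subset of $([0,1]^m)^n$ cut out by the finitely many rational linear inequalities $\alpha_i(b_{j-2})\leq \alpha_i(b_{j-1})$ and $b_j\leq\alpha_i(b_{j-1})$, so it is a compact convex polytope described by rational data, which is the canonical shape allowed in FIXP. Next I would build the circuit for $G$ layer by layer, following the quantities that appear in the existence proof. The prior evaluations $F_{i,j}(\alpha_j(b))$ are obtained by composing the input algebraic circuit for $F_{i,j}$ with the variable $\alpha_j(b)$; from these one forms the differences $g_{j,b}=F_{i,j}(\alpha_j(b))-F_{i,j}(\alpha_j(b^-))$ and the probabilities $G_{j,b^-}$ using only $+$ and $-$. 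The table $T(b,\ell,k)$ from \cref{lem:H-functions} is computed by unrolling the dynamic program into a circuit of size $O(n^2)$ using only $+$ and $\times$, and $H_i(b,\vec{\alpha}_{-i})$ is then assembled via \eqref{eq:H_functions_sum} using $+$, $\times$ and division by the rational constants $1/(k+1)$. The utilities $u_i(b,\vec{\alpha}_{-i};v_i)=(v_i-b)\cdot H_i(b,\vec{\alpha}_{-i})$ are products of already-built subcircuits.

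Next I would implement $\Delta_j^i(\vec{\alpha})$ by a $\max$ gate over the finitely many bids followed by a subtraction, and implement the update rule \eqref{eq:projectiond} using one $+$ gate (for $\alpha_i(b_{j-1})+\Delta_j^i(\vec{\alpha})$) and two truncations, each of which is a composition of a $\max$ and a $\min$ against rational bounds or previously computed $\alpha_i'(b_{j-2})$. The sequential dependence of $\alpha_i'(b_{j-1})$ on $\alpha_i'(b_{j-2})$ is handled by simply wiring the circuit in that order; no recursion is needed because the chain has length at most $m$. Each component is polynomial-size in the input, so the whole circuit is polynomial-size and clearly computes a continuous function from $\mathcal{D}$ to $\mathcal{D}$ (the truncations enforce the polytope constraints). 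This yields a polynomial-time many-one reduction from \exactfpa to the generic fixed-point problem for FIXP circuits.

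The main subtlety, rather than any single hard obstacle, is verifying that every fixed point of $G$ genuinely encodes a solution in the sense required by \exactfpa: the equilibrium is represented by the jump points $\vec{\alpha}$, and one must argue that the monotone non-decreasing, non-overbidding strategy profile recovered from $\vec{\alpha}$ is an exact Bayes-Nash equilibrium in the sense of \cref{def:bayes-nash-equilibrium}. This, however, is precisely the content of the second half of the proof of \cref{thm:existence}, which is already established via the characterization in \cref{lem:charepsilonbne}. Putting everything together gives $\exactfpa\in\fixp$.
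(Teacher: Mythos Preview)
Your proposal is correct and follows essentially the same approach as the paper: both verify that $\mathcal{D}$ is a polytope given by rational linear inequalities, then build the circuit for $G$ by composing the input circuits for $F_{i,j}$, unrolling the dynamic program of \cref{lem:H-functions} to obtain the $H_i$-values, forming $\Delta_j^i$ via a $\max$ gate, and implementing the truncation in \eqref{eq:projectiond} with $\max/\min$; the correspondence between fixed points and equilibria is inherited from \cref{thm:existence}. Your write-up is slightly more explicit about the sequential wiring for $\alpha_i'(b_{j-1})$ and about why the solution-correspondence step is already handled, but there is no substantive difference.
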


\begin{proof}
Clearly, the domain $\mathcal{D}$ of the function $G: \mathcal{D} \to \mathcal{D}$ from the proof of \cref{thm:existence} can be represented by a set of linear inequalities that can be constructed in polynomial time in $n$, $m$ and the representation length of $B$. Thus, it remains to show that we can construct in polynomial time an algebraic circuit that computes $G$.

We now describe how to construct a circuit for $G$ that only uses operations $\{+$, $-$, $\times$, $/$, $\max$, $\min$, $\sqrt[k]{\cdot}\}$ and rational constants.
First of all, note that probabilities of the form $\probability_{v_j \sim F_{i,j}} [\beta_j(v_j) \leq b] = F_{i,j}(\alpha_j(b))$ can easily be computed by the circuit, since the (cumulative) distribution functions $F_{i,j}$ are provided as algebraic circuits, and $\vec{\alpha}$ is the input to the circuit. It follows that the quantities $G_{jb^-}$ and $g_{jb}$ defined in the proof of \cref{lem:H-functions} can also be computed by the circuit. As a result, we can use the dynamic programming procedure described in the proof of \cref{lem:H-functions}, to compute the terms $T(b,n-1,j)$ by only using a polynomial number of operations. Note in particular, that the dynamic programming assignment rules can all be implemented using the available set of operations. With the terms $T(b,n-1,j)$ in hand, we can then easily compute the terms $H_i(b,\vec{\alpha}_{-i})$ for all $b \in B$, and thus evaluate the utility function $u_i(b,\vec{\alpha}_{-i};v_i) = (v_i-b) \cdot H_i(b,\vec{\alpha}_{-i})$ at any given $v_i \in [0,1]$. Finally, using the utility functions and the $\max$ operation we can now compute the terms $\Delta_j^i(\vec{\alpha})$ from the proof of \cref{thm:existence}, and then using $+$, $\max$, $\min$ and the constant $1$ we can output $\vec{\alpha}' = G(\vec{\alpha})$ by noting that
\[\alpha_i'(b_{j-1}) = \max\{\max\{b_j,\alpha_i'(b_{j-2})\}, \min\{1,\alpha_i(b_{j-1}) + \Delta_j^i(\vec{\alpha})\}\}. \qedhere\]
\end{proof}

\subsection{PPAD Membership}\label{sec:inPPAD}

In order to study the approximate equilibrium problem for the first-price auction in the context of \ppad, we consider a model where the distributions $F_{i,j}$ are \emph{polynomially-computable}, i.e., can be evaluated in polynomial time by a Turing machine.\footnote{Note that a function represented as an algebraic circuit (as in the previous section on \fixp) is not necessarily polynomially-computable, e.g., because the circuit can use ``repeated squaring'' to construct numbers with exponential bit complexity. Conversely, a function that is polynomially-computable cannot necessarily be represented as an algebraic circuit, because a Turing machine is not restricted to using arithmetic gates. We note that these two different models for representing functions are standard for \fixp and \ppad respectively.} In order to guarantee that an approximate equilibrium with polynomial bit complexity exists, we also assume that the distributions are \emph{polynomially continuous}. For a formal definition of these two standard properties in the context of \ppad, see \cref{app:inputs}. In this section we show that in this model, the problem of computing an $\varepsilon$-BNE lies in the class \ppad. We begin by observing that the polynomial-continuity of the distribution functions $F_{i,j}$ implies that the utility functions are also polynomially-continuous. This is proved in \cref{app:util-poly-cont}.

\begin{lemma}\label{lem:util-poly-cont}
If the distributions $F_{i,j}$ are polynomially-continuous, then so are the utility functions $\vec{\alpha} \mapsto u_i(b,\vec{\alpha}_{-i};v_i)$. In more detail, given $\varepsilon > 0$, we can in polynomial time compute $\delta > 0$ such that for all $i \in N$, $b \in B$ and $v_i \in [0,1]$
$$\|\vec{\alpha} - \vec{\alpha}'\|_\infty \leq \delta \implies |u_i(b,\vec{\alpha}_{-i};v_i) - u_i(b,\vec{\alpha}_{-i}';v_i)| \leq \varepsilon.$$
In particular, $\delta$ can be represented using a polynomial number of bits.
\end{lemma}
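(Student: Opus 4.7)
The plan is to reduce polynomial continuity of $u_i$ to that of $H_i$, and then to that of the prior distributions via the dynamic-programming formula for $H_i$ from \cref{lem:H-functions}. Concretely, since $u_i(b,\vec{\alpha}_{-i};v_i) = (v_i-b)\, H_i(b,\vec{\alpha}_{-i})$ and $|v_i-b| \leq 1$, I immediately get
\[
|u_i(b,\vec{\alpha}_{-i};v_i) - u_i(b,\vec{\alpha}_{-i}';v_i)| \leq |H_i(b,\vec{\alpha}_{-i}) - H_i(b,\vec{\alpha}_{-i}')|,
\]
so it suffices to prove polynomial continuity of $H_i(b,\cdot)$ with respect to $\vec{\alpha}_{-i}$.

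The next step is to control perturbations of the DP base quantities. By polynomial continuity of the $F_{i,j}$, for any target accuracy $\eta > 0$ I can compute in polynomial time a $\delta > 0$ of polynomial bit complexity such that $|x-x'| \leq \delta$ implies $|F_{i,j}(x)-F_{i,j}(x')| \leq \eta$ for every $i,j$. Recalling from the proof of \cref{lem:H-functions} that $G_{j,b^-} = F_{n,j}(\alpha_j(b^-))$ and $g_{j,b} = F_{n,j}(\alpha_j(b)) - F_{n,j}(\alpha_j(b^-))$, a $\delta$-perturbation of $\vec{\alpha}_{-i}$ in the max-norm changes every $G_{j,b^-}$ by at most $\eta$ and every $g_{j,b}$ by at most $2\eta$.

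I would then propagate this error through the recurrence for $T(b,\ell,k)$. Since every intermediate quantity lies in $[0,1]$ and for $a,a',c,c' \in [0,1]$ we have $|ac-a'c'| \leq |a-a'|+|c-c'|$ (and the analogous Lipschitz bound for addition), a straightforward induction over the $O(n^2)$ DP cells shows that each $T(b,n-1,k)$ changes by at most $p(n)\cdot \eta$ for some explicit polynomial $p$. Plugging into $H_n(b,\vec{\alpha}_{-n}) = \sum_{k=0}^{n-1} T(b,n-1,k)/(k+1)$ then yields $|H_i(b,\vec{\alpha}_{-i}) - H_i(b,\vec{\alpha}_{-i}')| \leq n \cdot p(n) \cdot \eta$. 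Setting $\eta := \varepsilon/(n\,p(n))$ and returning the $\delta$ provided by polynomial continuity of the $F_{i,j}$ at this $\eta$ finishes the proof; $\delta$ has polynomial bit complexity by construction.

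There is essentially no conceptual difficulty here — the whole argument is a bookkeeping exercise in error propagation. The one point deserving care is verifying that the error accumulation through the DP remains polynomial, which is why I emphasize that all intermediate quantities in the recurrence are probabilities in $[0,1]$; this makes each elementary arithmetic operation $1$-Lipschitz in each input, so the total Lipschitz constant of $H_i$ in $\vec{\alpha}_{-i}$ grows only polynomially in $n$ and $|B|$, which is exactly what polynomial continuity requires.
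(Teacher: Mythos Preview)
Your approach is essentially the paper's: reduce $u_i$ to $H_i$ via $|v_i-b|\le 1$, reduce $H_i$ to the base probabilities $G_{j,b^-},g_{j,b}$, and invoke polynomial continuity of the $F_{i,j}$. The paper carries out the middle step through the explicit sum~\eqref{eq:tblj} rather than the DP, bounding the change in each $T(b,n-1,\ell)$ by $\binom{n-1}{\ell}\,n\,\eta$ and then summing with weights $1/(\ell+1)$ via the identity $\tfrac{1}{\ell+1}\binom{n-1}{\ell}=\tfrac{1}{n}\binom{n}{\ell+1}$; this is why they pick $\eta=\varepsilon/2^{n+1}$.

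One slip worth flagging: the ``straightforward induction over the $O(n^2)$ DP cells'' you describe does \emph{not} yield a polynomial $p(n)$. With the Lipschitz bounds you quote, the recurrence $T(b,\ell+1,k+1)=T(b,\ell,k)\,g+T(b,\ell,k+1)\,G$ gives an error recursion of the form $E_{\ell+1}\le 2E_\ell+O(\eta)$, so the accumulated error is $\Theta(2^n)\eta$, not $\mathrm{poly}(n)\cdot\eta$. This is harmless for the lemma, since an exponential Lipschitz constant still means $\eta=\varepsilon/2^{O(n)}$ has polynomial bit length and hence so does the resulting $\delta$; indeed the paper itself incurs the same $2^n$ factor. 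But your final sentence (``the total Lipschitz constant of $H_i$ \ldots\ grows only polynomially in $n$ and $|B|$'') is not what the DP argument establishes. If you want a genuinely polynomial Lipschitz bound on $H_i$, argue directly: $H_i(b,\vec\alpha_{-i})$ is an expectation of a $[0,1]$-valued function of the independent ``bid labels'' of the other $n-1$ players, so changing each player's label distribution by total variation $\le 2\eta$ changes $H_i$ by at most $2(n-1)\eta$.
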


\noindent We are now ready to state the main result of this section.

\begin{theorem}\label{thm:in-ppad}
The problem \epsfpa lies in \ppad.
\end{theorem}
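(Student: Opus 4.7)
The plan is to reduce \epsfpa to the approximate Brouwer fixed-point problem for the map $G: \mathcal{D} \to \mathcal{D}$ that we constructed in the proof of \cref{thm:existence}, and then appeal to the standard \ppad-membership toolkit of \citet{etessami2010complexity}. The domain $\mathcal{D}$ is a convex polytope in $\R^{nm}$ described by $O(nm)$ linear inequalities (monotonicity and non-overbidding of the jump points), so it is constructible from the input in polynomial time.

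Two standard ingredients are needed to place the approximate Brouwer problem for $G$ in \ppad. First, $G$ is polynomially computable: this essentially replays the construction in the proof of \cref{thm:in-fixp} with a Turing machine in place of the algebraic circuit. The quantities $G_{j,b^-}$ and $g_{j,b}$ are evaluated using the polynomially-computable $F_{i,j}$; the dynamic programming of \cref{lem:H-functions} yields each $H_i(b,\vec{\alpha}_{-i})$; and from these the utilities, the $\Delta_j^i(\vec{\alpha})$, and the truncated update in \eqref{eq:projectiond} are immediate. Second, $G$ is polynomially continuous: \cref{lem:util-poly-cont} gives polynomial continuity of $\vec{\alpha} \mapsto u_i(b,\vec{\alpha}_{-i};v_i)$ uniformly in $v_i$, and $G$ is assembled from these utilities by a polynomial number of $\max$, $\min$, $+$, and truncation operations, each of which preserves polynomial continuity.

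With these in hand, for any $\delta > 0$ given with $1/\delta$ in unary, we can compute in \ppad a point $\vec{\alpha} \in \mathcal{D}$ with $\|G(\vec{\alpha}) - \vec{\alpha}\|_\infty \leq \delta$. It then suffices to show that $\delta$ can be chosen as a sufficiently small polynomial function of $\varepsilon$ and the input so that every such $\vec{\alpha}$ is an $\varepsilon$-BNE, which completes the polynomial-time reduction.

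This last step, a quantitative version of the fixed-point-to-equilibrium implication in \cref{thm:existence}, is the main obstacle. Via \cref{lem:charepsilonbne}, one has to verify, for each bidder $i$ and each bid $b_j$ with $\alpha_i(b_{j-1}) < \alpha_i(b_j)$, that $b_j$ is within $\varepsilon$ of optimal at $\alpha_i(b_j)$ (against higher bids) and at $\alpha_i(b_{j-1})$ (against lower bids). Whenever a jump point lies strictly in the interior of its truncation interval, the $\delta$-approximate fixed-point condition directly yields $|\Delta_j^i(\vec{\alpha})| \leq \delta$; hitting the upper bound $1$ leaves no nontrivial interval to check, and hitting the lower bound $b_{j+1}$ makes the higher-bid inequality trivial, because the corresponding utilities at $v = b_{j+1}$ are non-positive. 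As in \cref{thm:existence}, monotonicity of the $H$-functions transfers each local slack obtained at a jump point $\alpha_i(b_k)$ to the endpoint $\alpha_i(b_{j-1})$, and the induction there accumulates at most $m$ such slacks. The final subtlety is the lower-truncation case $\alpha_i(b_j) = \alpha_i'(b_{j-1})$, which forces the bidding interval for $b_j$ to have length at most $\delta$; here we appeal directly to the polynomial continuity of the utility functions (\cref{lem:util-poly-cont}) to bound the drift across this near-empty interval. Combining all the pieces yields $\varepsilon \leq \mathrm{poly}(n,m) \cdot \delta$, so setting $\delta = \varepsilon / \mathrm{poly}(n,m)$ suffices.
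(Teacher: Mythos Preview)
Your overall strategy---reduce to approximate Brouwer for the map $G$ from the existence proof, establish polynomial computability and continuity of $G$, and then argue that an approximate fixed point yields an $\varepsilon$-BNE---matches the paper's. The gap is in the last step: you try to show that the $\delta$-approximate fixed point $\vec{\alpha}$ itself is an $\varepsilon$-BNE, whereas the paper proves that $\vec{\alpha}' := G(\vec{\alpha})$ is the $\varepsilon$-BNE. This is not cosmetic.

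Your handling of the lower-truncation case (which should read $\alpha_i'(b_j) = \alpha_i'(b_{j-1})$) is where the argument fails. In that case the interval $(\alpha_i(b_{j-1}),\alpha_i(b_j)]$ has length $O(\delta)$ but is still nonempty, so by \cref{lem:charepsilonbne} you must verify $\Delta_{j+1}^i(\vec{\alpha}) \ge -\varepsilon$ at $\alpha_i(b_j)$. The truncation condition only gives the \emph{upper} bound $\Delta_{j+1}^i(\vec{\alpha}) \le O(\delta)$; ``drift across the near-empty interval'' does not supply a lower bound. And the gap can be a constant independent of $\delta$: take three bids $b_0<b_1<b_2$ with winning probabilities $H_0<H_1<H_2$ such that the upper envelope of the lines $v\mapsto (v-b_k)H_k$ jumps directly from $b_0$ to $b_2$ at some $a>b_2$, with $c:=u_i(b_2;a)-u_i(b_1;a)>0$. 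Then setting $\alpha_i(b_0)=a$, $\alpha_i(b_1)=a+\delta$ yields $\Delta_1^i=0$ and $\Delta_2^i\le -c$, so the update for $\alpha_i(b_1)$ truncates to $\alpha_i'(b_0)=a$ and $\vec{\alpha}$ is a $\delta$-approximate fixed point; yet the $\varepsilon$-BNE condition for $b_1$ at $v=a+\delta$ fails for every $\varepsilon<c$. Hence no bound of the form $\varepsilon\le\mathrm{poly}(n,m)\cdot\delta$ can hold.

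The paper's remedy is simple and worth internalizing: output $\vec{\alpha}'=G(\vec{\alpha})$ instead. Then in the offending case the interval $(\alpha_i'(b_{j-1}),\alpha_i'(b_j)]$ is \emph{exactly} empty, so there is nothing to check; in all remaining cases your analysis (the interior case giving $|\Delta^i_{j+1}|\le\delta$, the $b_{j+1}$-boundary case being trivially nonnegative, and the induction over lower bids accumulating $O(m)$ slacks) goes through essentially as you wrote it, after transferring the utility bounds from $\vec{\alpha}$ to $\vec{\alpha}'$ via \cref{lem:util-poly-cont}.
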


\begin{proof}
We show that the existence proof of \cref{thm:existence} can be turned into a polynomial-time many-one reduction to the problem of computing an approximate Brouwer fixed point of a polynomially-computable and polynomially-continuous function over a bounded polytope given by linear inequalities, known to lie in \ppad \citep[Proposition 2]{etessami2010complexity}.

Since the distributions $F_{i,j}$ are polynomially-computable, and by the arguments provided in the proof of \cref{thm:in-fixp} (including the dynamic programming procedure from \cref{lem:H-functions}), it immediately follows that $G$ is polynomially-computable. The polynomial-continuity of $G$ also immediately follows from the polynomial-continuity of the utility functions (\cref{lem:util-poly-cont}). Thus, the problem of computing an approximate fixed point of $G$ lies in \ppad.

Given $\varepsilon > 0$, by \cref{lem:util-poly-cont} we can compute $\delta > 0$ so that for all $i \in N$, $b \in B$ and $v_i \in [0,1]$
$$\|\vec{\alpha} - \vec{\alpha}'\|_\infty \leq \delta \implies |u_i(b,\vec{\alpha}_{-i};v_i) - u_i(b,\vec{\alpha}_{-i}';v_i)| \leq \frac{\varepsilon}{16m}.$$

Now consider any $\delta$-approximate fixed point of $G$, i.e., $\vec{\alpha} \in \mathcal{D}$ such that $\|G(\vec{\alpha}) - \vec{\alpha}\|_\infty \leq \delta$. Let $\vec{\alpha}' = G(\vec{\alpha})$. We prove that $\vec{\alpha}'$ is an $\varepsilon$-approximate equilibrium of the first-price auction. This shows that \epsfpa reduces to the Brouwer fixed point computation problem, and thus lies in \ppad.

Since $\vec{\alpha}' = G(\vec{\alpha})$ and $\|G(\vec{\alpha}) - \vec{\alpha}\|_\infty \leq \delta$, it holds that $\|\vec{\alpha} - \vec{\alpha}'\|_\infty \leq \delta$ and thus
\begin{equation}\label{eq:utils-close}
|u_i(b,\vec{\alpha}_{-i};v_i) - u_i(b,\vec{\alpha}_{-i}';v_i)| \leq \frac{\varepsilon}{16m}
\end{equation}
for all $i \in N$, $b \in B$ and $v_i \in [0,1]$. In particular, we also have that $|\Delta_j^i(\vec{\alpha}) - \Delta_j^i(\vec{\alpha}')| \leq 2(\varepsilon/16m+\delta) \leq \varepsilon/4m$ (since the utility functions are also $1$-Lipschitz with respect to $v_i$). Note that here we assumed without loss of generality that $\delta \leq \varepsilon/16m$.

Fix some bidder $i \in N$. Consider any non-empty interval $[\alpha_i'(b_{j-1}),\alpha_i'(b_j)]$ for some $j \in \{0,1,\dots,m\}$, where we use the convention that $\alpha_i'(b_{-1})=0$, and recall that $\alpha_i'(b_m)=1$.
\begin{itemize}
	\item First, we show that $u_i(b_{j},\vec{\alpha}_{-i}';\alpha_i'(b_{j})) \geq \max_{\ell > j} u_i(b_\ell,\vec{\alpha}_{-i}';\alpha_i'(b_{j})) - \varepsilon/2$. Clearly, for $j=m$ this holds trivially.
	For $j<m$, this can immediately be rephrased as showing $\Delta_{j+1}^i(\vec{\alpha}') \geq -\varepsilon/2$. By \eqref{eq:utils-close}, it suffices to show that $\Delta_{j+1}^i(\vec{\alpha}) \geq -\varepsilon/2 + \varepsilon/4m$. But if $\Delta_{j+1}^i(\vec{\alpha}) < -\varepsilon/2 + \varepsilon/4m \leq - \varepsilon/16m \leq -\delta$, then by construction of $G$, since $|\alpha_i(b_j) - \alpha_i'(b_j)| \leq \delta$, it must be that $\alpha_i'(b_j) = b_{j+1}$ or $\alpha_i'(b_j) = \alpha_i'(b_{j-1})$. In the former case, it trivially holds that $\Delta_{j+1}^i(\vec{\alpha}') \geq 0 \geq -\varepsilon/2$. The latter case is impossible, since we assumed that $\alpha_i'(b_{j-1}) < \alpha_i'(b_j)$.
	
	\item Next, we show that $u_i(b_{j},\vec{\alpha}_{-i}';\alpha_i'(b_{j-1})) \geq \max_{\ell < j} u_i(b_\ell,\vec{\alpha}_{-i}';\alpha_i'(b_{j-1})) - \varepsilon$. Again, this holds trivially for $j=0$, so we now consider $j > 0$. By the first bullet above, it holds that
	$$u_i(b_{j},\vec{\alpha}_{-i}';\alpha_i'(b_{j})) \geq \max_{\ell \geq j} u_i(b_\ell,\vec{\alpha}_{-i}';\alpha_i'(b_{j})) - \varepsilon/2.$$
	As a result, by the monotonicity of the $H$-functions, as explained in the proof of \cref{thm:existence}, this continues to hold if we replace $\alpha_i'(b_{j})$ by $\alpha_i'(b_{j-1})$, i.e.,
	\begin{equation}\label{eq:in-ppad-bound}
	u_i(b_{j},\vec{\alpha}_{-i}';\alpha_i'(b_{j-1})) \geq \max_{\ell \geq j} u_i(b_\ell,\vec{\alpha}_{-i}';\alpha_i'(b_{j-1})) - \varepsilon/2.
	\end{equation}
	
	On the other hand, since $\alpha_i'(b_{j-1}) < \alpha_i'(b_j)$, it follows in particular that $\alpha_i'(b_{k}) < 1$ for all $k < j$. As a result, by construction of $G$, and since $|\alpha_i(b_k) - \alpha_i'(b_k)| \leq \delta$, it must be that $\Delta_{k+1}^i(\vec{\alpha}) \leq \delta$ for all $k < j$. By \eqref{eq:utils-close} it follows that $\Delta_{k+1}^i(\vec{\alpha}') \leq \delta + \varepsilon/4m \leq \varepsilon/2m$ for all $k < j$, which yields
	$$u_i(b_{k},\vec{\alpha}_{-i}';\alpha_i'(b_{k})) \leq \max_{\ell \geq k+1} u_i(b_\ell,\vec{\alpha}_{-i}';\alpha_i'(b_{k})) + \frac{\varepsilon}{2m}$$
	which by monotonicity of the $H$-functions, as explained in the proof of \cref{thm:existence}, continues to hold if we replace $\alpha_i'(b_{k})$ by $\alpha_i'(b_{j-1})$, i.e., for all $k < j$ we have
	$$u_i(b_{k},\vec{\alpha}_{-i}';\alpha_i'(b_{j-1})) \leq \max_{\ell \geq k+1} u_i(b_\ell,\vec{\alpha}_{-i}';\alpha_i'(b_{j-1})) + \frac{\varepsilon}{2m}.$$
	As a result it follows by induction that for all $k < j$
	$$u_i(b_{k},\vec{\alpha}_{-i}';\alpha_i'(b_{j-1})) \leq \max_{\ell \geq j} u_i(b_\ell,\vec{\alpha}_{-i}';\alpha_i'(b_{j-1})) + (j-k)\frac{\varepsilon}{2m}$$
	which together with \eqref{eq:in-ppad-bound} yields that for all $k < j$
	$$u_i(b_{k},\vec{\alpha}_{-i}';\alpha_i'(b_{j-1})) \leq u_i(b_{j},\vec{\alpha}_{-i}';\alpha_i'(b_{j-1})) + m \frac{\varepsilon}{2m} + \frac{\varepsilon}{2}.$$
\end{itemize}
By \cref{lem:charepsilonbne}, it immediately follows that $\alpha_i'$ is an $\varepsilon$-best-response to $\vec{\alpha}_{-i}'$. Since this holds for all bidders $i \in N$, $\vec{\alpha}'$ is an $\varepsilon$-equilibrium.
\end{proof}

\section{Computational Hardness}\label{sec:hardness}

In this section we prove computational hardness results for the problem of computing an equilibrium of a first-price auction with subjective priors. Namely, we show that computing an $\varepsilon$-BNE is \ppad-hard, while computing an exact BNE is \fixp-hard. Our computational hardness results are particularly robust, because they hold even if we apply all of the following restrictions:
\begin{itemize}
	\item the bidding space is $B = \{0,1/5,2/5,3/5,4/5\}$,
	\item the value distributions $F_{i,j}$ are given by very simple piecewise constant density functions,
	\item $\varepsilon$ is some sufficiently small \emph{constant}. \hfill \textit{(only relevant for $\varepsilon$-BNE)}
\end{itemize}
In particular, by a simple rescaling argument, the hardness results also hold when the bidding space consists of all monetary amounts that are increments of some fixed denomination (e.g., one cent) up to some number $m$.\footnote{Note that $m$ should be provided in the input in \emph{unary} representation. This is necessary to ensure that the bidding space has polynomial size, thus allowing efficient computation of best-responses. See the discussion in \cref{sec:prelims} regarding our assumption of an explicit bidding space.} For example, there exists a sufficiently small constant $\varepsilon$ such that it is \ppad-hard to compute an $\varepsilon$-BNE when the bidding space is $B=\{0,1/100,2/100,\dots,99/100,1,101/100,\dots,m-1/100,m\}$.

Together with the corresponding membership results proved in the previous section (\cref{thm:in-fixp,thm:in-ppad}), we thus obtain the following two theorems, which are the main results of this paper.

\begin{theorem}
There exists a constant $\varepsilon > 0$ such that the problem \epsfpa is \ppad-complete.
\end{theorem}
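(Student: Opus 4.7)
Membership in \ppad was established in \cref{thm:in-ppad}, so the task reduces to proving \ppad-hardness of \epsfpa for some absolute constant $\varepsilon$. The plan is a polynomial-time reduction from $\varepsilon_0$-\gcircuit, which is \ppad-hard for a sufficiently small absolute constant $\varepsilon_0 > 0$. Given a generalized circuit on wires $v_1,\dots,v_k$ with gates drawn from a standard complete set (constants, assignment, scaling, addition/subtraction, and a brittle comparator), I will construct in polynomial time an FPA instance with constant bidding space $B=\{0,1/5,2/5,3/5,4/5\}$ and piecewise-constant subjective priors, such that from any $\varepsilon$-BNE one can read off an $\varepsilon_0$-approximate solution of the circuit.

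The encoding I have in mind dedicates to each wire $v$ a bidder $i(v)$ and interprets the jump point $\alpha_{i(v)}(b^*) \in [0,1]$ at a fixed ``middle'' bid $b^* \in B$ as the value carried by $v$. By \cref{lem:charepsilonbne}, at an exact equilibrium this jump point is pinned down by an indifference between $b^*$ and some neighbouring bid $b'$; substituting $u_i(b,\vec{\beta}_{-i};v_i) = (v_i - b) H_i(b,\vec{\beta}_{-i})$ yields the closed form
\[
\alpha_{i(v)}(b^*) = \frac{b' H_{i(v)}(b',\vec{\beta}_{-i(v)}) - b^* H_{i(v)}(b^*,\vec{\beta}_{-i(v)})}{H_{i(v)}(b',\vec{\beta}_{-i(v)}) - H_{i(v)}(b^*,\vec{\beta}_{-i(v)})}.
\]
Because the priors are subjective, I have complete freedom over how $i(v)$ perceives each other bidder $j$'s bid distribution on $B$, and therefore over how $H_{i(v)}(b,\cdot)$ depends (via \cref{lem:H-functions}) on the jump points of chosen ``input'' bidders. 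First I would build an assignment gadget in which $H_{i(v)}(b^*,\cdot)$ is an affine function of a single $\alpha_{i(u)}(b^*)$, so that the formula above collapses to $\alpha_{i(v)}(b^*) \approx \alpha_{i(u)}(b^*)$; aggregating two or three input bidders with tuned piecewise-constant priors then extends this template to scaling, addition and subtraction by making $H_{i(v)}(b^*,\cdot)$ a prescribed small-coefficient affine combination of the inputs. For the brittle comparator I would instead exploit the discreteness of $B$: as a linear combination of inputs crosses a target threshold, the active neighbouring bid in the indifference switches from $b^{*-}$ to $b^{*+}$, causing $\alpha_{i(v)}(b^*)$ to snap between the two extremes of $[0,1]$.

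The main obstacle I expect is error propagation. From a $\delta$-BNE, \cref{lem:charepsilonbne} only guarantees the indifference equation above up to an additive $\delta$ slack in the utilities, which, after division by the slope $\lambda := H_{i(v)}(b',\cdot) - H_{i(v)}(b^*,\cdot)$ appearing in the denominator, translates into an $O(\delta/\lambda)$ error in each encoded value, with further amplification whenever a value feeds several downstream gadgets. I plan to pick the piecewise-constant priors so that $\lambda$ is bounded below by a universal constant (depending only on $|B|$ and on the constant fan-in of gcircuit gates), and to keep each gadget local and non-amplifying so that the per-gate $\varepsilon_0$ error budget suffices. With these ingredients in place, standard Rubinstein-style bookkeeping shows that a sufficiently small absolute $\varepsilon>0$ works: any $\varepsilon$-BNE of the constructed FPA decodes into an $\varepsilon_0$-solution of the original generalized circuit, completing the \ppad-hardness reduction.
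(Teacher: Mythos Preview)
Your overall architecture is the right one and matches the paper: reduce from \gcircuit, use bidding space $B=\{0,1/5,2/5,3/5,4/5\}$, encode gate values in a designated jump point, and exploit subjective priors to control each bidder's perceived $H_i(\cdot)$. The gap is in the gadgets you propose.

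The central problem is your addition/subtraction gadget. You write that by ``aggregating two or three input bidders with tuned piecewise-constant priors'' you will make $H_{i(v)}(b^*,\cdot)$ a ``small-coefficient affine combination of the inputs.'' But when bidder $i$ perceives two non-trivial opponents $j_1,j_2$, the winning probability is a \emph{product} of the individual CDF evaluations (see \eqref{eq:tblj}), not a sum: e.g.\ $H_i(b)=F_{i,j_1}(\alpha_{j_1}(b^-))\cdot F_{i,j_2}(\alpha_{j_2}(b^-))+\text{tie terms}$. There is no choice of piecewise-constant priors that turns this into an affine function of $\alpha_{j_1}$ and $\alpha_{j_2}$ simultaneously. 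The paper explicitly flags this obstacle (``it is unclear how to simulate a $G_+$-gate'') and circumvents it by reducing from \gcircuit with the special gate $G_\phi$, $\phi(x,y)=\tfrac14(x+1)(y+1)$, whose bilinear form exactly matches the product structure of $H_i$; a separate argument (\cref{lem:gcircuit-special}) shows $\{G_{\times 2},G_{1-},G_\phi\}$ still gives \ppad-hardness. Your plan needs either this detour or some other concrete mechanism to extract addition from a multiplicative $H_i$; ``affine combination'' does not do it. (Incidentally, a brittle comparator is unnecessary here: \cref{prop:gcircuit-ppad} shows $\{G_+,G_{1-}\}$ already suffice.)

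A second gap is structural validity. You treat error only as additive slack $O(\delta/\lambda)$ in the indifference equation, but in a generalized circuit the gate-bidders feed one another cyclically, and nothing in your sketch prevents an input bidder's jump points from drifting outside the intended encoding interval, which would make the downstream gadget's output meaningless (not merely noisy). The paper handles this with a ``valid'' predicate on bidders and a \emph{Projection Gadget} (three chained standard base gadgets) whose output is \emph{unconditionally} valid; every gate-gadget is capped with one, so validity of all gate-bidders holds without any inductive assumption. You would need an analogous unconditional normalization step to close the loop.
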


\begin{theorem}
The problem \exactfpa is \fixp-complete.
\end{theorem}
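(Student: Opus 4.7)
The plan is to complement the \fixp-membership of \cref{thm:in-fixp} with a many-one reduction from a \fixp-complete circuit problem, namely \exactgcircuit, in which one must find an exact assignment in $[0,1]^N$ to variables tied together by gates drawn from an arithmetic basis including $+,-,\times,\max,\min$ together with constant multipliers. The approach follows the reduction template of \citet{etessami2010complexity}, with the twist that the ``computational substrate'' is a single first-price auction rather than a strategic-form game. The bidding-space restriction $B=\{0,1/5,2/5,3/5,4/5\}$ and the piecewise-constant prior restrictions announced at the start of the section must be preserved throughout.

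The encoding will represent each circuit variable $x \in [0,1]$ by the jump point $\alpha_i(0)$ of a dedicated ``wire-bidder'' $i$. At this jump, \cref{lem:charepsilonbne} combined with the identity $u_i(b,\vec{\beta}_{-i};v_i)=(v_i-b)H_i(b,\vec{\beta}_{-i})$ pins $\alpha_i(0)$ down to the unique rational function of $H_i(0,\vec{\beta}_{-i})$ and $H_i(1/5,\vec{\beta}_{-i})$ that equates the two utilities at $v_i=\alpha_i(0)$. The flexibility of subjective priors lets us freely choose \emph{which} other bidders' strategies enter each $H_i$ and with what piecewise-constant mass, so the wire-bidder's jump point can be forced to any desired expression in the jump points of a chosen set of other bidders. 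Jumps between the remaining consecutive bids either carry additional encoded quantities or are degenerated (pinned to $1$) via trivial gadgets.

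The core of the proof is to design a gate-gadget library so that, at equilibrium, the encoded values exactly satisfy the equations of \exactgcircuit. Addition and multiplication by a constant follow from the linearity of $H_i$ in each $g_{j,b}=F_{i,j}(\alpha_j(b))-F_{i,j}(\alpha_j(b^-))$, which is in turn a linear function of a wire's jump point when the relevant piece of the density $F_{i,j}'$ is chosen to be constant. Order and $\min/\max$ gates follow from the characterization together with the monotonicity of the $H$-functions in each argument. Multiplication exploits the inherently product-based form of $H_i$ over independent competitors, made explicit by the dynamic program in the proof of \cref{lem:H-functions}: two wire-bidders routed into the same $H_i$ at disjoint high-priority levels produce a product term in the equilibrium equation. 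Because gadgets only interact through the subjective priors, they compose locally without cross-contamination, and the final instance still uses only the five-element bidding space and piecewise-constant densities.

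The main obstacle will be to realize the nonlinear gates \emph{exactly} rather than only approximately, and to rule out degenerate equilibria in which a jump point collapses to $0$ or $1$ and escapes the intended algebraic relation. Handling this will require boundary-clamping gadgets, small auction subcircuits whose own equilibria can exist only when the encoded wire value lies strictly inside a prescribed subinterval of $(0,1)$, together with a careful accounting showing that no equilibria of the composed instance exist outside the image of the intended encoding. Once this is in place, a polynomial-time inverse map extracts the circuit solution from the equilibrium jump points, yielding the \fixp-hardness reduction; together with \cref{thm:in-fixp} this proves \fixp-completeness. Combined with the polynomial-continuity afforded by \cref{lem:util-poly-cont}, the very same gadget construction should simultaneously furnish the \ppad-hardness of \epsfpa asserted in the previous theorem for a sufficiently small constant $\varepsilon$.
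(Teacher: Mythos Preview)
Your high-level template is right: \fixp-membership comes from \cref{thm:in-fixp}, and for hardness one reduces from a \gcircuit variant by encoding wire values in jump points of dedicated bidders, using the subjectivity of the priors to route inputs into each gadget without interference. But the proposal glosses over exactly the step that is the heart of the reduction, and in doing so makes a claim the paper explicitly identifies as false.

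You assert that ``addition and multiplication by a constant follow from the linearity of $H_i$ in each $g_{j,b}$'' and that ``multiplication exploits the inherently product-based form of $H_i$.'' This is the gap. In a unary gadget one can indeed arrange that the relevant jump point is an affine function of a single input (as in the paper's base gadget, where $\alpha_i(1)=2+p_1\pm 2\varepsilon$). But the equilibrium equation for a jump point is a \emph{ratio} of $H$-values, $\alpha=\frac{2H_i(2)-H_i(1)}{H_i(2)-H_i(1)}$, and once two non-trivial inputs feed into the same $H_i$ you get multilinear numerator and denominator, hence a rational function of $(p_1,q_1)$ rather than $p_1+q_1$ or $p_1q_1$. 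The paper says this outright: ``An obstacle to obtaining the desired reduction is that it is unclear how to simulate a $G_+$-gate or a $G_\times$-gate.'' Their fix is to abandon your gate basis entirely and prove (\cref{lem:gcircuit-special}) that \exactgcircuit remains \fixp-hard with the nonstandard basis $\{G_{\times 2},G_{1-},G_\phi\}$, where $\phi(x,y)=\tfrac14(x+1)(y+1)$ is exactly the kind of bilinear-over-rational expression that the binary jump-point equation \emph{can} be massaged into. The actual gadget work (\cref{clm:phi-gadget}) then chains several auxiliary bidders to clean this up.

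A second issue you do not address is non-monotonicity. The $H$-functions are monotone in each input, so every one-step jump-point relation you produce is monotone in the encoded value; you cannot get $x\mapsto 1-x$ (needed for subtraction and for $\max/\min$ in your basis) this way. The paper's $G_{1-}$ gadget gets around this by temporarily encoding information in the \emph{third} jump point $\alpha_{k_1}(2)$ of an intermediate bidder rather than the second, and then reading it back; this is the ``crucial idea'' highlighted before \cref{clm:complement-gadget}. Finally, your choice to encode in $\alpha_i(0)$ is awkward: at bid $0$ all $\mm$ bidders can tie, so $H_i(0)=p_0/\mm$ injects an $\mm$-dependence into every gadget. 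The paper encodes in $\alpha_i(1)$ for this reason, and the ``projection gadget'' (\cref{clm:projection-gadget}), not a generic boundary-clamp, is what guarantees unconditional validity of gate-bidders.
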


In the rest of this section, we present the proof of our hardness results. A nice feature of our proof is that we provide a \emph{single} reduction to prove both \ppad- and \fixp-hardness. In more detail, we reduce from the so-called \emph{Generalized Circuit problem}, which has been instrumental for proving \ppad-hardness results for Nash equilibrium computation problems \citep{daskalakis2009complexity,chen2009settling,rubinstein2018inapproximability}. In fact, we show that it suffices to consider significantly restricted versions of the Generalized Circuit problem when proving hardness results, and that an exact version of the problem can also be used to prove \fixp-hardness. Since we believe that these points may be of independent interest for future works, they are presented separately in \cref{sec:gcircuit}. Our reduction from this problem to equilibrium computation in first-price auctions is then presented in \cref{sec:reduction}.

\subsection{The Generalized Circuit Problem}\label{sec:gcircuit}

Generalized circuits, defined by \citet{chen2009settling}, can be viewed as a generalization of arithmetic circuits where we also allow \emph{cycles}. This means that instead of representing a function, a generalized circuit represents a certain kind of constraint satisfaction problem. Indeed, the goal in the Generalized Circuit problem is to assign a value to each gate of the circuit such that all the gates are (approximately) satisfied. Importantly, gates are only allowed to take values in $[0,1]$ and arithmetic operations are truncated accordingly. As a result, it can be shown that by Brouwer's fixed point theorem, there always exists an assignment of values that satisfies all the gates. However, computing even an approximate assignment is already \ppad-hard, i.e., essentially as hard as any Brouwer fixed point computation. We now provide some formal definitions.

\begin{definition}
A \emph{generalized circuit}\footnote{Note that in the usual definition of generalized circuits, every gate also contains a rational parameter $\zeta \in [0,1]$, which is used by some gate-types, e.g., a gate performing multiplication by the constant $\zeta$. In our definition, gates do not contain this rational parameter, because, as we show in \cref{prop:gcircuit-ppad,prop:gcircuit-fixp}, these gate-types are actually not needed for the problems to be hard.} with gate-types $\mathcal{G}$ is a list of gates $g_1, g_2, \dots, g_\nn$. Every gate $g_i$ is a 3-tuple $g_i=(G,j,k)$, where $G \in \mathcal{G}$ is the type of the gate, and $j,k \in [\nn] = \ssets{1,\dots,\nn}$ are the indices of the input gates $g_j,g_k$ ($i,j,k$ distinct).
\end{definition}

\noindent Before describing possible types of gates, we introduce some notation. Let $\trunc = \trunc_{[0,1]}$. Furthermore, we use the notation $x = y \pm \varepsilon$ to denote that $|x-y| \leq \varepsilon$.

Consider a generalized circuit $g_1, g_2, \dots, g_\nn$ and an assignment $\val: [\nn] \to [0,1]$ of values to its gates. We say that a gate is $\varepsilon$-satisfied by the assignment, if the constraint imposed by this gate is satisfied with error at most $\varepsilon$. The constraint that a gate $g_i=(G,j,k)$ must satisfy depends on its gate-type $G \in \mathcal{G}$, e.g.,
\begin{itemize}
	\item if $G=G_1$, then $\val [g_i] = 1 \pm \varepsilon$ \hfill \textit{(constant 1)}
	\item if $G=G_+$, then $\val[g_i] = \trunc(\val[g_j]+\val[g_k]) \pm \varepsilon$ \hfill \textit{(addition)}
	\item if $G=G_-$, then $\val[g_i] = \trunc(\val[g_j]-\val[g_k]) \pm \varepsilon$ \hfill \textit{(subtraction)}
	\item if $G=G_{1-}$, then $\val[g_i] = 1 - \val[g_j] \pm \varepsilon$ \hfill \textit{(complement)}
	\item if $G=G_{\times 2}$, then $\val[g_i] = \trunc(2 \cdot \val[g_j]) \pm \varepsilon$ \hfill \textit{(multiplication by 2)}
	\item if $G=G_\times$, then $\val[g_i] = \val[g_j]\cdot\val[g_k] \pm \varepsilon$ \hfill \textit{(multiplication)}
	\item if $G=G_{(\cdot)^2}$, then $\val[g_i] = (\val[g_j])^2 \pm \varepsilon$ \hfill \textit{(square)}
\end{itemize}
We are now ready to define the associated computational problem.

\begin{definition}
Let $\varepsilon > 0$. The problem $\varepsilon$-\gcircuit with gate-types $\mathcal{G}$ is defined as follows: given a generalized circuit $g_1, g_2, \dots, g_\nn$ with gate-types $\mathcal{G}$, find an assignment $\val: [\nn] \to [0,1]$ to the gates such that they are all $\varepsilon$-satisfied.
\end{definition}

\noindent \citet{rubinstein2018inapproximability} proved that this problem is \ppad-complete for some sufficiently small constant $\varepsilon > 0$ and a relatively large set of gate-types $\mathcal{G}$. In \cref{app:gcircuit-ppad}, we prove that the problem remains hard, even with a very restricted set of gate-types.

\begin{proposition}\label{prop:gcircuit-ppad}
There exists a constant $\varepsilon > 0$ such that the problem $\varepsilon$-\gcircuit with gate-types $\mathcal{G} = \{G_+, G_{1-}\}$ is \ppad-complete. This continues to hold if we instead take $\mathcal{G} = \{G_1, G_{-}\}$.
\end{proposition}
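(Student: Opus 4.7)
The plan is to reduce from a version of $\varepsilon'$-\gcircuit with a richer set of gate-types that is already known to be \ppad-complete for some constant $\varepsilon' > 0$ (e.g.\ that of \citet{rubinstein2018inapproximability}), and to replace each gate of the source instance by a constant-sized sub-circuit over $\{G_+, G_{1-}\}$. The overall reduction is polynomial in size, and the error analysis is straightforward: each sub-circuit has constant depth, so errors accumulate by at most a constant factor; choosing the target $\varepsilon$ as a suitable constant multiple of $\varepsilon'$ then preserves hardness with a constant approximation parameter.

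The first step is to bootstrap the constants $0$ and $1$, which is needed because pure self-loops such as $g = G_{1-}(g)$ are forbidden by the ``distinct indices'' rule in the definition of a generalized circuit. A three-gate cycle $g_1 = G_+(g_2,g_3)$, $g_2 = G_{1-}(g_1)$, $g_3 = G_{1-}(g_2)$ forces $g_1 \approx \trunc((1-g_1)+g_1) = 1$ (hence $g_2 \approx 0$) in every $\varepsilon$-satisfying assignment, yielding usable constants $0$ and $1$. With constants in hand, subtraction is implemented via the identity $\trunc(a-b) = 1 - \trunc((1-a)+b)$, which holds in both the $a \geq b$ and $a<b$ cases and uses only two $G_{1-}$ and one $G_+$; a copy (identity) gate is just $G_{1-}\circ G_{1-}$, which also handles the situation where we need to feed ``the same'' value into both ports of a $G_+$; multiplication by $1/2$ can be forced by coupling a gate $g_a$ with its complement $g_b = 1-g_a$ and constraining their sum to equal the constant $1$; and Boolean $\vee$, $\wedge$, $\neg$ on $\{0,1\}$-valued signals come for free from $G_+$, De~Morgan and $G_{1-}$.

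For the alternative gate-set $\{G_1, G_-\}$, I would derive hardness by a second, gate-by-gate reduction from $\varepsilon$-\gcircuit over $\{G_+, G_{1-}\}$. Complement is immediate as $G_{1-}(x) = G_-(g_1, x)$, where $g_1$ is a constant-$1$ gate (now given for free). Truncated addition comes from the dual identity $\trunc(a+b) = 1 - \trunc((1-a)-b)$, valid in both the $a+b\leq 1$ and $a+b>1$ cases, and is implemented with three $G_-$ gates and one $G_1$. Composed with the previous reduction, this establishes the second hardness claim.

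The main obstacle is the first step: we need the source $\varepsilon'$-\gcircuit version to use only gates that are expressible by truncated affine combinations with dyadic-rational coefficients (together with Boolean gates on $\{0,1\}$-valued signals), since non-affine gates such as $G_\times$, $G_{(\cdot)^2}$ and the comparison gate $G_<$ are not obviously simulable from $\{G_+, G_{1-}\}$ alone. Fortunately, the variants of the generalized circuit shown to be \ppad-complete for constant $\varepsilon'$ in the literature (in particular in~\citet{rubinstein2018inapproximability}) can be cast into such a purely ``truncated-linear plus Boolean'' form, which is what makes the two-gate reduction go through.
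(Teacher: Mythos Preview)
Your overall strategy---reduce from a richer constant-$\varepsilon'$ \gcircuit and simulate each gate by a constant-size sub-circuit---matches the paper's, and your constructions for copy, the constant $1$, and subtraction via $\trunc(a-b)=1-\trunc((1-a)+b)$ are essentially the same as theirs. The $\{G_1,G_-\}$ case via the dual identity is also the paper's argument.

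There are, however, two genuine gaps. First, your description of ``multiplication by $1/2$'' (``coupling a gate $g_a$ with its complement $g_b=1-g_a$ and constraining their sum to equal the constant $1$'') is a tautology: $g_a+(1-g_a)=1$ holds for every value of $g_a$ and forces nothing. The paper obtains halving via a small cycle that encodes $g_3=\trunc(x-g_3)$, whose unique (approximate) solution is $g_3=x/2$; you need something of this kind. Without a working halving gadget you cannot simulate $G_{\times\zeta}$ for general rational $\zeta$, which the standard \gcircuit gate sets include.

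Second, and more seriously, the hand-wave at the end is doing real work that you have not actually done. The constant-$\varepsilon'$ versions of \gcircuit you cite (Chen--Deng--Teng, Rubinstein) include a comparison gate $G_<$ on \emph{real-valued} inputs; your assertion that such circuits ``can be cast into a purely truncated-linear plus Boolean form'' is precisely the non-trivial step, and it is far from clear how to do it directly with $\{G_+,G_{1-}\}$. The paper sidesteps this entirely by reducing instead from the purely linear problem $\varepsilon'$-$\gcircuit^{[-1,1]}$ of \citet{goldberg2020consensus}, whose gate set is $\{G_+^{[-1,1]},G_1^{[-1,1]},G_{\times(-\zeta)}^{[-1,1]}\}$; each $[-1,1]$-valued signal is then encoded by a pair $(g^+,g^-)$ with $\val[g]=\val[g^+]-\val[g^-]$, with a careful ``normalization'' step before each addition. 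Choosing a source problem without threshold gates is the crux here, and your proposal has not pinned one down.
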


\noindent We can also define a problem \exactgcircuit, where the goal is to find an assignment that \emph{exactly} satisfies all constraints (i.e., with $\varepsilon=0$). In \cref{app:gcircuit-fixp}, we prove the following result.

\begin{proposition}\label{prop:gcircuit-fixp}
The problem \exactgcircuit with gate-types $\mathcal{G} = \{G_{1-},G_+,G_{(\cdot)^2}\}$ is \fixp-complete. This continues to hold if we instead take $\mathcal{G} = \{G_{1-},G_{\times 2},G_\times\}$.
\end{proposition}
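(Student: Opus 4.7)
\medskip
\noindent \textbf{Proof proposal.} My plan is to establish both FIXP-membership and FIXP-hardness. Membership is the easier direction: given a generalized circuit $g_1, \ldots, g_\nn$, I would build an algebraic circuit computing a Brouwer function $F : [0,1]^\nn \to [0,1]^\nn$ whose $i$-th component simply evaluates the operation of gate $g_i$ on its inputs $\val[g_j], \val[g_k]$. The gate operations $1 - x$, $\trunc_{[0,1]}(x+y)$, $x^2$, $\trunc_{[0,1]}(2x)$ and $xy$ can all be implemented with the standard FIXP arithmetic operations $\{+, -, \times, \max, \min\}$ (truncation being $\max\{0, \min\{1, \cdot\}\}$), and the fixed points of $F$ are, by construction, in bijection with the exactly-satisfying assignments of the generalized circuit. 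Boundedness of $\mathcal{D} = [0,1]^\nn$ and continuity of $F$ are immediate.

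For hardness, I would reduce from a canonical \fixp-hard problem, namely computing an exact Brouwer fixed point of an algebraic circuit $F : [0,1]^n \to [0,1]^n$ built from $\{+, -, \times, \max, \min\}$ and rational constants. Using the standard normal-form transformations of Etessami--Yannakakis, the circuit can be rescaled so that every intermediate value lies in $[0,1]$; in particular, the truncation built into Gcircuit gates will never be ``active'' on intended computations. The reduction then replaces each FIXP gate by a small gadget of $\mathcal{G}$-gates connected through auxiliary variables, exploiting the cyclic nature of generalized circuits to enforce exact algebraic relations.

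For the first gate set $\mathcal{G} = \{G_{1-}, G_+, G_{(\cdot)^2}\}$, the simulations are essentially direct: $G_+$ and $G_{1-}$ supply addition, complementation and (via $G_{1-}(G_+(G_{1-}(x), y))$) truncated subtraction; squaring is given by $G_{(\cdot)^2}$; multiplication follows from the polarization identity $xy = \tfrac{1}{2}((x+y)^2 - x^2 - y^2)$ where halving is realized by the cyclic constraint $G_+(t,t) = s$ (forcing $t = s/2$); $\max$ and $\min$ follow from the identities $\max(x,y) = x + \trunc_{[0,1]}(y-x)$ and $\min(x,y) = 1 - \max(1-x, 1-y)$; and arbitrary rational constants are produced from the cyclic constraint $g = G_{1-}(g)$ (which forces $g = 1/2$) combined with the already constructed arithmetic.

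\emph{The hard part} will be the second gate set $\mathcal{G} = \{G_{1-}, G_{\times 2}, G_\times\}$, in which there is no direct additive primitive. Multiplication and squaring are free and $G_{\times 2}$ supplies truncated doubling, but to simulate $G_+$ I plan to lean on the identity
\[
 1 - \trunc_{[0,1]}(x+y) \;=\; \max\bigl\{0,\; (1-x)(1-y) - xy\bigr\},
\]
whose two product terms $(1-x)(1-y)$ and $xy$ are directly expressible via $G_\times$ and $G_{1-}$. To encode the remaining subtraction and clamping-at-zero, I would introduce a small bundle of auxiliary gate variables together with a cycle of constraints built from $G_{\times 2}$ and $G_\times$ that admits $z = \trunc_{[0,1]}(x+y)$ as its unique solution in $[0,1]$; the crucial design issue is to arrange the cycle so that the monotonicity of $G_{\times 2}$ and the non-negativity of gate values eliminate the spurious root that an unsigned quadratic relation would otherwise permit. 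Once $G_+$ has been simulated, the remaining operations reduce to exactly the same gadgets as for the first gate set, completing the reduction and hence the proof of the proposition.
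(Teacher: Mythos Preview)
Your membership argument and your treatment of the first gate set $\{G_{1-},G_+,G_{(\cdot)^2}\}$ are essentially the paper's proof, up to two fixable slips. First, the halving cycle ``$G_+(t,t)=s$'' is not a well-formed \gcircuit\ constraint: each gate carries exactly one defining equation, so you cannot add an equation to an already-defined gate $s$. What actually works (and what the paper does) is to let $t$ itself be a subtraction gate in a cycle, $t=G_-(s,\,G_{=}(t))$, which forces $t=s/2$. Second, your polarization identity $xy=\tfrac12\big((x+y)^2-x^2-y^2\big)$ overflows: for generic $x,y\in[0,1]$ the inner $G_+$ truncates $x+y$ to $1$ and the identity breaks. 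The paper halves first and uses $(x/2+y/2)^2=(x/2)^2+(y/2)^2+xy/2$, which stays inside $[0,1]$; you should do the same.

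The real gap is in the second gate set $\{G_{1-},G_{\times 2},G_\times\}$. You correctly identify the difficulty (no additive primitive) and write down the identity $1-\trunc(x+y)=\max\{0,(1-x)(1-y)-xy\}$, but then defer the entire content of the argument to an unspecified ``small bundle of auxiliary gate variables together with a cycle'' that is supposed to implement $\max\{0,a-b\}$. This is precisely the step that requires a non-obvious idea, and your appeal to ``monotonicity of $G_{\times 2}$'' does not suggest a concrete construction; I do not see how to complete it along those lines. The paper's route is quite different. It first obtains the constant $\tfrac12$ from the cycle $g=G_{1-}(g)$ and hence $G_{\times 1/2}$ via $G_\times$; from these it builds the gate $G_\phi(x,y)=\tfrac14(x+1)(y+1)$ by noting $(x+1)/2=1-(1-x)/2$ and multiplying. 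The key step is then a \emph{division} cycle: setting $z=G_\phi(1-x,\,G_{=}(z))$ forces $z=(2-x)/(2+x)$, i.e.\ an inverse-type gate $G_{inv}$. Finally, an algebraic identity combining $G_\phi$ and $G_{inv}$,
\[
\phi\!\Big(\phi\big(G_{inv}(y),\,1-x\big),\,\tfrac{y}{2}\Big)=\tfrac12+\tfrac18(y-x),
\]
recovers (after shifting and scaling with $G_{1-}$ and $G_{\times 2}$) truncated subtraction, and hence addition. Without an idea of this kind---producing a division/inverse via a multiplicative cycle---your plan for the second gate set is incomplete.
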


\subsection{Reduction to BNE-FPA}\label{sec:reduction}
In this section, we present a reduction that achieves the following: given a generalized circuit, it constructs (in polynomial time) an instance of the first-price auction problem, such that for all $\varepsilon \in [0,1/10^5]$, from any $\varepsilon$-BNE we can extract an $500\varepsilon$-satisfying assignment for the generalized circuit. Furthermore, this ``extraction'' of the assignment from an $\varepsilon$-BNE can be done efficiently and, in fact, using a simple so-called separable linear transformation. This ensures that in the case $\varepsilon=0$, we obtain a so-called SL-reduction from \exactgcircuit, which yields the \fixp-hardness result \citep{etessami2010complexity}. If we let $\gceps>0$ be a constant such that $\gceps$-\gcircuit is \ppad-hard, then for $\varepsilon=\min\{1/10^5,\gceps/500\}$ the reduction is a valid polynomial-time many-one reduction, which yields the \ppad-hardness result.

An obstacle to obtaining the desired reduction is that it is unclear how to simulate a $G_+$-gate or a $G_\times$-gate. As a result, we reduce from the \gcircuit problem with gate-types $\mathcal{G} = \{G_{\times 2},G_{1-},G_\phi\}$, where $\phi: [0,1]^2 \to [0,1]$, $(x,y) \mapsto \frac{1}{4} (x+1)(y+1)$. This means that a gate $g_i=(G_\phi,j,k)$ enforces the constraint $\val[g_i]=\phi(\val[g_j],\val[g_k]) \pm \varepsilon$. In \cref{app:gcircuit-special} we prove that this set of gate-types is sufficient for our desired hardness results.

\begin{lemma}\label{lem:gcircuit-special}
Let $\mathcal{G} = \{G_{\times 2},G_{1-},G_\phi\}$.
There exists a constant $\gceps > 0$ such that the problem $\gceps$-\gcircuit with gate-types $\mathcal{G}$ is \ppad-complete. Furthermore, \exactgcircuit with gate-types $\mathcal{G}$ is \fixp-complete.
\end{lemma}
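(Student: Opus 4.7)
The plan is to reduce from the already-hard variants of $\gcircuit$ supplied by \cref{prop:gcircuit-ppad,prop:gcircuit-fixp} to the variant with gate-types $\mathcal{G} = \{G_{\times 2}, G_{1-}, G_\phi\}$. Concretely, I would reduce from $\gceps'$-$\gcircuit$ with gates $\{G_+, G_{1-}\}$ for the PPAD claim, and from $\exactgcircuit$ with gates $\{G_{1-}, G_{\times 2}, G_\times\}$ for the FIXP claim. Since $G_{1-}$ and $G_{\times 2}$ are already in $\mathcal{G}$, the essence of both reductions is a single constant-size gadget that simulates the remaining ``missing'' gate ($G_+$ for PPAD, $G_\times$ for FIXP) by a bounded number of $\{G_{\times 2}, G_{1-}, G_\phi\}$-gates.

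Both gadgets share three basic building blocks. First, small self-loops over $\{G_{\times 2}, G_{1-}\}$ produce rational constants: the equation $v = G_{\times 2}(G_{1-}(v))$ uniquely forces $v = 2/3$, and then $G_{\times 2}(2/3)=1$, $G_{1-}(1)=0$, $G_{1-}(2/3)=1/3$, $\phi(1,0)=1/2$, yielding a library $\{0, 1/3, 1/2, 2/3, 1\}$ of useful constants. Second, halving is implemented by $x/2 = G_{1-}(\phi(G_{1-}(x), 1))$, and together with $G_{\times 2}$ this gives all dyadic-rational rescalings of a gate value that are safe with respect to $[0,1]$. Third, the algebraic heart of the gadgets is the family of identities obtained by probing $G_\phi$ at complementary arguments so as to cancel all $xy$-terms, most importantly
\[
\phi(x,y) - \phi(1-x,1-y) = \tfrac{3}{4}(x+y-1), \qquad 4\phi(x,y) = 1 + x + y + xy,
\]
which isolate $x+y$ and $xy$ up to affine transformations of the inputs. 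The $G_\times$ gadget then uses the latter identity once $x+y$ has been constructed by the $G_+$ gadget; if instead we reduce from the variant containing $G_{(\cdot)^2}$, the same machinery applied to $\phi(x,x) = (x+1)^2/4$ handles that gate too.

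The main obstacle will be that the quantities appearing in these identities (such as $\tfrac{3}{4}(x+y-1)$) live outside $[0,1]$ and hence cannot be represented by any single gate. My plan is to carry each such quantity through the gadget as an \emph{ordered pair of non-negative gates} --- a ``positive'' and a ``negative'' part, e.g.\ representing $\tfrac{3}{4}(x+y-1)$ by $(\phi(x,y), \phi(1-x,1-y))$ --- perform all shifting and scaling operations in this ``pair-valued'' arithmetic, and collapse back to a single gate only via $G_{\times 2}$ acting as a one-sided clipper at the moment the target value is genuinely in $[0,1]$. Tracking exactly when $G_{\times 2}$ behaves like doubling versus clipping, and verifying that every intermediate gate really does lie in $[0,1]$ thanks to the chosen constants and scalings, is the most delicate part of the argument.

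For the PPAD direction it suffices that each gadget is correct up to additive error $O(\gceps)$, so an $\gceps$-satisfying assignment to the reduced circuit can be translated into an $O(\gceps)$-satisfying assignment for the source $\gcircuit$ instance; since each gadget has constant size, hardness is preserved for sufficiently small $\gceps$. For the FIXP direction, the gadgets are exact by construction and the construction is naturally a separable linear (SL) reduction in the sense of \citet{etessami2010complexity}: every auxiliary gate is pinned to a fixed algebraic expression in the source-circuit values, and the source gates appear unchanged inside the target circuit, giving the bijective correspondence on solutions that is needed to inherit FIXP-completeness.
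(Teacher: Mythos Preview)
Your high-level plan is correct and matches the paper's: simulate the missing gate ($G_+$ or $G_-$ for the \ppad claim, $G_\times$ for the \fixp claim) using only $\{G_{\times 2},G_{1-},G_\phi\}$, and then invoke \cref{prop:gcircuit-ppad,prop:gcircuit-fixp}. Your building blocks are also right: halving via $x/2=1-\phi(1-x,1)$, constants via small cycles, and the identity $4\phi(x,y)=1+x+y+xy$ all appear in the paper's argument as well.

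The gap is in the ``pair-valued arithmetic'' step. You propose to carry a signed quantity as a pair $(a,b)$ standing for $a-b$, and to ``collapse back to a single gate only via $G_{\times 2}$ acting as a one-sided clipper''. But $G_{\times 2}$ is unary; to collapse the pair you must, at some point, produce a \emph{single} gate whose value is an affine function of $a$ and $b$ with no $ab$ cross term. The only binary gate available is $G_\phi$, and every application of $G_\phi$ introduces a cross term. Your identity $\phi(x,y)-\phi(1-x,1-y)=\tfrac{3}{4}(x+y-1)$ cancels the cross term on paper, but realising that difference inside the circuit is itself a subtraction of two gate values --- precisely the operation you are trying to build. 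As stated, the plan is circular, and ``shifting and scaling operations in pair-valued arithmetic'' does not escape this, since shifting a pair by a constant already requires addition on one coordinate.

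The paper breaks the circularity with a gadget you do not mention: the cycle $w=\phi(1-y,w)$, whose unique fixed point is $G_{inv}(y)=\tfrac{2-y}{2+y}=-1+\tfrac{4}{2+y}$. One then checks the identity
\[
\phi\bigl(\phi(G_{inv}(y),\,1-x),\; y/2\bigr)\;=\;\tfrac{1}{2}+\tfrac{y-x}{8},
\]
which is a \emph{single} gate value that is already affine in $x$ and $y$, free of any $xy$ term, and always in $[3/8,5/8]\subseteq[0,1]$. From this point $\trunc(x-y)$ follows by straightforward post-processing with $G_{\times 2}$ and $G_{1-}$ (the paper obtains $G_-$ with error at most $99\varepsilon$), and then $G_\times$ (for the exact case only) via $xy/4=\phi(x,y)-1/4-x/4-y/4$. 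The paper finishes by invoking the $\{G_1,G_-\}$ variant of \cref{prop:gcircuit-ppad} and the $\{G_{1-},G_{\times 2},G_\times\}$ variant of \cref{prop:gcircuit-fixp}, exactly as you intended.
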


\noindent\textbf{The reduction.} We begin with a high-level description of the reduction. Consider a generalized circuit $g_1, g_2, \dots, g_\nn$ with gate-types $\mathcal{G} = \{G_{\times 2},G_{1-},G_\phi\}$. We construct a first-price auction with bidding space $B = \{0,1/5,2/5,3/5,4/5\}$ and a set of bidders $N=\{1,2,\dots,\mm\}$ where $\mm=10\nn$. For every $i \in [\nn]$, bidder $i$ will ``correspond'' to gate $g_i$, in the sense that, in any $\varepsilon$-BNE $\vec{\beta}$, the position of the second jump point of $\beta_i$, i.e., $\alpha_i(1/5)$ will encode the value $\val[g_i]$ that we will assign to gate $g_i$. Thus, we will refer to the bidders $1,2,\dots,\nn$ as \emph{gate-bidders}. The rest of the bidders will be used as intermediate steps to enforce the desired constraints on the strategies of the gate-bidders. Accordingly, we will refer to them as \emph{auxiliary-bidders}. Note that for every gate-bidder, there are 9 auxiliary-bidders available (if needed). For convenience, we describe the construction with the value space $[0,5]$ instead of $[0,1]$. This is without consequence, since this re-scaling of the instance simply means that we have to replace $\varepsilon$ by $5\varepsilon$ at the end. Note that as a result of the re-scaling, the bidding space is now simply $B = \{0,1,2,3,4\}$.

\medskip

\noindent\textbf{Valid strategies and encoded value.}
Let $\vec{\beta}$ be any $\varepsilon$-BNE of the auction. A bidder $i \in N$ is said to be \emph{valid}, if $\alpha_i(0) \in [1,1+1/2]$, $\alpha_i(1) \in [2+1/3-2\varepsilon,2+2/3+2\varepsilon]$, $\alpha_i(2) \in [3+1/2,5]$ and $\alpha_i(3)=5$. The bidder $i$ is \emph{almost-valid}, if the condition on $\alpha_i(1)$ is relaxed to $\alpha_i(1) \in [2,3]$.
For every bidder $i \in N$, we define the value encoded by bidder $i$ according to $\vec{\beta}$, as
\begin{equation*}
\val_{\vec{\beta}}[i] = \left\{\begin{tabular}{ll}
$\trunc_{[0,1]}(3 (\alpha_i(1)-2-1/3))$ & if $i$ is valid,\\
\nul & otherwise.
\end{tabular}\right.
\end{equation*}
Note that we always have $\val_{\vec{\beta}}[i] \in [0,1] \cup \{\nul\}$.
In the rest of the proof, we drop the subscript $\vec{\beta}$, since it is understood from the context. Our construction will ensure that for all $i \in [\nn]$, bidder $i$ is valid and as a result $\val[i] \in [0,1]$. Furthermore, letting $\val[g_i] := \val[i]$ will yield an $100\varepsilon$-satisfying assignment of the generalized circuit.

\medskip

\noindent\textbf{Gadgets.} The rest of the proof describes the construction of the distribution functions $F_{i,j}$. We begin by constructing some \emph{unary} gadgets. A unary gadget has a single ``input'' bidder $j \in N$  and an output bidder $i \in N \setminus \{j\}$. The goal of such a gadget is to establish a constraint on $\beta_i$ that depends on $\beta_j$, but not on the strategy of any other bidder. This is achieved by setting $F_{i,k}$ for all $k \in N \setminus \{i,j\}$, such that its (piecewise constant) density function has a single piece of volume $1$ lying in $[0,1]$. As a result, because of the no-overbidding assumption, bidder $i$ will believe that all bidders $k \in N \setminus \{i,j\}$ bid $0$ with probability $1$. The behavior of the gadget is then determined by the precise construction of $F_{i,j}$. 

\medskip

\noindent\textbf{Base Gadget.}
The base gadget is a unary gadget with input bidder $j$ and output bidder $i$ that has four parameters $\gamma_\ell, \gamma_r, \ell, r \in [0,1]$ with $\gamma_\ell + \gamma_r < 1$ and $r-\ell > 0$. The piecewise constant density function of $F_{i,j}$ is defined as follows. There is a piece of volume $\gamma_\ell$ in the interval $[1+1/2,1+3/4]$, a piece of volume $1-\gamma_\ell-\gamma_r$ in $[2+\ell,2+r]$, and finally a piece of volume $\gamma_r$ in $[3+1/4,3+1/2]$. See \cref{fig:base_gadget} for an illustration.

When the parameters are $(\gamma_\ell,\gamma_r,\ell,r) = (1/3,1/3,1/3,2/3)$, we call this the \emph{standard} base gadget. It will immediately follow from \cref{clm:base-gadget} below that if the input bidder $j$ of the standard base gadget is valid, then so is the output bidder $i$, and furthermore $\val[i] = \val[j] \pm 6 \varepsilon$. In other words, this gadget can be used to copy the value encoded by one bidder onto some other bidder.

\begin{figure}
\fbox{%
    \colorbox{blue!5!white}{
    \parbox{0.965\textwidth}{%
    \centering
\tikzset{every picture/.style={line width=0.75pt}} 
\begin{tikzpicture}[x=0.74pt,y=0.75pt,yscale=-0.75,xscale=0.75,every node/.style={scale=1}]

\draw    (54,310) -- (800,310) ;
\draw    (53,301) -- (53,319.5) ;
\draw    (233,301) -- (233,319.5) ;
\draw    (413,301) -- (413,319.5) ;
\draw    (593,301) -- (593,319.5) ;
\draw    (773,301) -- (773,319.5) ;
\draw    (323,301) -- (323,319.5) ;
\draw    (368,301) -- (368,319.5) ;
\draw  [color={rgb, 255:red, 0; green, 0; blue, 0 }  ,draw opacity=1 ][fill={rgb, 255:red, 155; green, 155; blue, 155 }  ,fill opacity=0.51 ] (323,260.5) -- (368,260.5) -- (368,309) -- (323,309) -- cycle ;
\draw    (462,301) -- (462,319.5) ;
\draw    (544,301) -- (544,319.5) ;
\draw  [color={rgb, 255:red, 0; green, 0; blue, 0 }  ,draw opacity=1 ][fill={rgb, 255:red, 155; green, 155; blue, 155 }  ,fill opacity=0.51 ] (462,276.5) -- (544,276.5) -- (544,310) -- (462,310) -- cycle ;
\draw    (638,301) -- (638,319.5) ;
\draw    (683,301) -- (683,319.5) ;
\draw  [color={rgb, 255:red, 0; green, 0; blue, 0 }  ,draw opacity=1 ][fill={rgb, 255:red, 155; green, 155; blue, 155 }  ,fill opacity=0.51 ] (638,260.5) -- (683,260.5) -- (683,309) -- (638,309) -- cycle ;

\draw (48,327) node [anchor=north west][inner sep=0.75pt]   [align=left] {0};
\draw (228,327) node [anchor=north west][inner sep=0.75pt]   [align=left] {1};
\draw (409,327) node [anchor=north west][inner sep=0.75pt]   [align=left] {2};
\draw (588,327) node [anchor=north west][inner sep=0.75pt]   [align=left] {3};
\draw (768,327) node [anchor=north west][inner sep=0.75pt]   [align=left] {4};
\draw (304,327) node [anchor=north west][inner sep=0.75pt]   [align=left] {};
\draw (309,327.4) node [anchor=north west][inner sep=0.75pt]  [font=\scriptsize]  {${\textstyle 1+\frac{1}{2}}$};
\draw (462,327.4) node [anchor=north west][inner sep=0.75pt]  [font=\scriptsize]  {${\textstyle 2+\ell }$};
\draw (354,327.4) node [anchor=north west][inner sep=0.75pt]  [font=\scriptsize]  {${\textstyle 1+\frac{3}{4}}$};
\draw (520,327.4) node [anchor=north west][inner sep=0.75pt]  [font=\scriptsize]  {${\textstyle 2+r}$};
\draw (335,273.4) node [anchor=north west][inner sep=0.75pt]    {$\gamma _{\ell }$};
\draw (652,273.4) node [anchor=north west][inner sep=0.75pt]    {$\gamma _{r}$};
\draw (465,286.4) node [anchor=north west][inner sep=0.75pt]  [font=\scriptsize]  {$1-\gamma _{\ell } \ -\ \gamma _{r}$};
\draw (807,295.4) node [anchor=north west][inner sep=0.75pt]    {$F_{i,j}$};
\draw (70,244) node [anchor=north west][inner sep=0.75pt]   [align=left] {{\LARGE Base Gadget}};
\draw (625,327.4) node [anchor=north west][inner sep=0.75pt]  [font=\scriptsize]  {${\textstyle 3+\frac{1}{4}}$};
\draw (671,327.4) node [anchor=north west][inner sep=0.75pt]  [font=\scriptsize]  {${\textstyle 3+\frac{1}{2}}$};

\end{tikzpicture}
}}}
    \caption{An illustration of the base gadget. The density of $F_{i,j}$ is depicted. When $\gamma_\ell=\gamma_r = 1/3$, $\ell=1/3$ and $r=2/3$ we obtain a \emph{standard} base gadget, which essentially (approximately) ``copies'' the value $\val[i]$ of the input bidder $i$ to the value $\val[j]$ of the output bidder $j$.}
    \label{fig:base_gadget}
\end{figure}
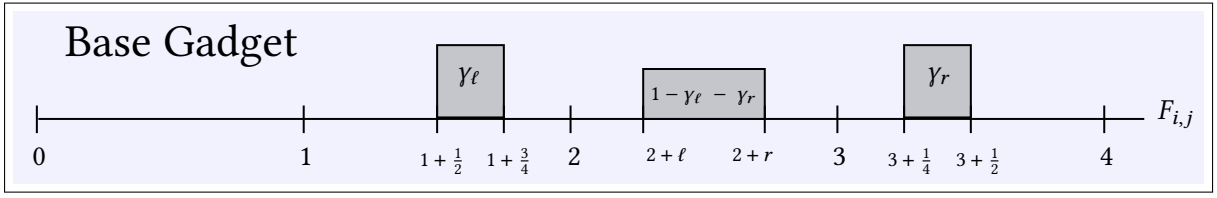

\begin{claim}\label{clm:base-gadget}
Let $\gamma_\ell, \gamma_r, \ell, r \in [0,1]$ with $\gamma_\ell, \gamma_r \geq 1/20$, $\gamma_\ell + \gamma_r < 1$ and $\ell < r$.
Consider a base gadget with input bidder $j$ and output bidder $i$, and parameters $(\gamma_\ell, \gamma_r, \ell, r)$. It holds that:
\begin{itemize}
	\item If the input bidder $j$ is almost-valid, then the output bidder $i$ is also almost-valid.
	\item If $\gamma_\ell,\gamma_r \geq 1/3$ and $j$ is almost-valid, then $i$ is valid and
	$$\val[i] = (3\gamma_\ell-1) + 3(1-\gamma_\ell-\gamma_r) \frac{T_{[2+\ell,2+r]}(\alpha_j(1)) - (2+\ell)}{r-\ell} \pm 6\varepsilon.$$
\end{itemize}
\end{claim}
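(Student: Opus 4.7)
The plan is to reconstruct bidder $i$'s $\varepsilon$-best response from her (very simple) perspective of the other bidders' bids, and then apply \cref{lem:charepsilonbne} together with the no-overbidding constraint to pin down each jump point $\alpha_i(b)$.

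First I would compute what $i$ perceives. Each auxiliary bidder $k\notin\{i,j\}$ has $F_{i,k}$ supported on $[0,1]$, so $i$ sees $v_k\leq 1$ almost surely; since the only bid in $\{0,1,2,3,4\}$ that is $\leq v_k$ for $v_k<1$ is~$0$, no-overbidding forces $\beta_k(v_k)=0$ with probability~$1$. For $j$, the support of $F_{i,j}$ lies in $[3/2,7/2]$, and the almost-validity bounds $\alpha_j(0)\leq 3/2$ and $\alpha_j(2)\geq 7/2$ give that $j$ bids $0$ and $3$ with probability~$0$, bids $1$ with probability $p:=F_{i,j}(\alpha_j(1))$, and bids $2$ with probability $1-p$. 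Evaluating $F_{i,j}(\alpha_j(1))$ piecewise across the three density pieces (using $\alpha_j(1)\in[2,3]$) yields
$$p = \gamma_\ell + (1-\gamma_\ell-\gamma_r) \cdot \frac{\trunc_{[2+\ell,2+r]}(\alpha_j(1)) - (2+\ell)}{r-\ell}.$$
Consequently the winning probabilities collapse to $H_i(0)=0$, $H_i(1)=p/2$, $H_i(2)=(1+p)/2$, $H_i(3)=H_i(4)=1$, and the utilities are the linear functions $u_i(b;v)=(v-b)H_i(b)$.

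With these utilities in hand, the exact indifferences between consecutive bids are $\alpha_i^*(0)=1$, $\alpha_i^*(1)=2+p$, and $\alpha_i^*(2)=\min\{(4-2p)/(1-p),5\}$, and bid~$4$ is strictly dominated by bid~$3$ for any $\varepsilon<1$. The slopes of the consecutive utility differences are $p/2$, $1/2$ and $(1-p)/2$, so \cref{lem:charepsilonbne} (combined with no-overbidding $\alpha_i(b^-)\geq b$) constrains
$$\alpha_i(0)\in[1,\,1+2\varepsilon/p], \qquad \alpha_i(1)\in[2+p-2\varepsilon,\,2+p+2\varepsilon],$$
together with $\alpha_i(2)\geq\min\{(4-2p-2\varepsilon)/(1-p),\,5\}$ and $\alpha_i(3)=5$. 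For the first bullet, $\gamma_\ell,\gamma_r\geq 1/20$ forces $p\in[1/20,19/20]$, so for sufficiently small $\varepsilon$ all four quantities fall inside the almost-validity windows $[1,3/2]$, $[2,3]$, $[7/2,5]$, $\{5\}$. For the second bullet, $\gamma_\ell,\gamma_r\geq 1/3$ tightens $p$ to $[1/3,2/3]$, which maps the window for $\alpha_i(1)$ onto precisely the validity window $[2+1/3-2\varepsilon,\,2+2/3+2\varepsilon]$; the encoded value is then
$$\val[i]=\trunc_{[0,1]}\bigl(3(\alpha_i(1)-7/3)\bigr)=3p-1\pm 6\varepsilon,$$
and the truncation is inactive because $3p-1\in[0,1]$. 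Substituting the formula for~$p$ gives the claimed expression.

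The most delicate step is verifying $\alpha_i(2)\geq 7/2$ in the first bullet, because the $2$-vs-$3$ slack $2\varepsilon/(1-p)$ blows up as $p\to 1$. I would handle this by a case split on whether bid~$3$'s interval is empty: if it is nonempty the characterization directly forces $\alpha_i(2)\geq(4-2p-2\varepsilon)/(1-p)$, which for $p\in[1/20,19/20]$ and small $\varepsilon$ stays above $7/2$; and if it is empty then $\alpha_i(2)=\alpha_i(3)=5\geq 7/2$ automatically.
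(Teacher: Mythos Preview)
Your proposal is correct and follows essentially the same approach as the paper: compute the perceived winning probabilities $H_i(b)$ from the (trivial) auxiliary bidders and the one nontrivial bidder $j$, then apply the characterization in \cref{lem:charepsilonbne} together with no-overbidding to pin each jump point within an $O(\varepsilon)$-window of the exact indifference value. The only cosmetic difference is that the paper first derives general formulas \eqref{eq:jump-0}--\eqref{eq:jump-3} valid for any unary gadget (with the truncation $\trunc_{[\alpha_i(b^-),\alpha_i(b^+)]}$ absorbing the ``empty interval'' corner cases automatically) and then specializes, whereas you compute the $H$-values directly for this gadget and handle the $\alpha_i(2)$ corner case by an explicit case split on whether bid~$3$'s interval is empty.
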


\begin{proof}
We begin by obtaining some equations that will be useful for various proofs in this section. Consider any unary gadget with input bidder $j$ and output bidder $i$. To simplify notation, for $b \in \{0,1,2,3,4\}$, let $p_b$ be the probability that bidder $j$ bids $b$, as perceived by bidder $i$. Formally,
\begin{equation*}
	p_b := \probability_{v_j \sim F_{i,j}}[\beta_j(v_j) = b] = \left\{\begin{tabular}{ll}
		$F_{i,j}(\alpha_j(b)) - F_{i,j}(\alpha_j(b-1))$ & if $b \in \{1,2,3,4\}$\\
		$F_{i,j}(\alpha_j(0))$ & if $b=0$.
	\end{tabular}\right.
\end{equation*}
Recall the quantity $H_i(b,\vec{\beta}_{-i})$ defined in \cref{sec:bestresponses}, which represents the probability that bidder $i$ wins the auction if she bids $b$, and the other bidders act according to $\vec{\beta}_{-i}$. We drop $\vec{\beta}_{-i}$ from the notation, since it is clear from the context. Going back to our unary gadget, it is easy to see that $H_i(0)=p_0/\mm$, $H_i(1)=p_0+p_1/2$, $H_i(2)=p_0+p_1+p_2/2$, $H_i(3)=p_0+p_1+p_2+p_3/2$ and $H_i(4)=p_0+p_1+p_2+p_3+p_4/2$. Here we used the fact that, by construction of $F_{i,k}$, bidder $i$ perceives that all other bidders $k \in N \setminus \{i,j\}$ bid $0$ with probability $1$.

Now, by \cref{lem:charepsilonbne} the first jump point $\alpha_i(0)$ of $\beta_i$ must necessarily satisfy $u_i(0,\vec\beta_{-i};\alpha_i(0)) \geq u_i(1,\vec\beta_{-i};\alpha_i(0)) - \varepsilon$ (because the interval $(0,\alpha_i(0))$ is non-empty by the non-overbidding assumption). We can rewrite this as $H_i(0) \cdot (\alpha_i(0)-0) \geq H_i(1) \cdot (\alpha_i(0)-1) - \varepsilon$, which yields
\begin{equation}\label{eq:jump-0}
\alpha_i(0) \leq \frac{H_i(1)+\varepsilon}{H_i(1)-H_i(0)} = 1 + \frac{H_i(0) + \varepsilon}{H_i(1)-H_i(0)} = 1 + \frac{p_0/\mm + \varepsilon}{p_0(\mm-1)/\mm + p_1/2}
\end{equation}
where the fraction is interpreted as $+\infty$ when $p_0+p_1=0$. Similarly, by \cref{lem:charepsilonbne}, the fourth jump point must satisfy $u_i(4,\vec\beta_{-i};\alpha_i(3)) \geq u_i(3,\vec\beta_{-i};\alpha_i(3)) - \varepsilon$, unless $\alpha_i(3) = 5$. Rewriting this as $H_i(4) \cdot (\alpha_i(3)-4) \geq H_i(3) \cdot (\alpha_i(3)-3) - \varepsilon$, we obtain that $\alpha_i(3)=5$ or
\begin{equation}\label{eq:jump-3}
\alpha_i(3) \geq \frac{4H_i(4)-3H_i(3) - \varepsilon}{H_i(4)-H_i(3)} = 4 + \frac{H_i(3) - \varepsilon}{H_i(4)-H_i(3)} = 4 + \frac{p_0+p_1+p_2+p_3/2 - \varepsilon}{p_3/2+p_4/2}.
\end{equation}
Again, by \cref{lem:charepsilonbne}, the third jump point must satisfy $u_i(3,\vec\beta_{-i};\alpha_i(2)) \geq u_i(2,\vec\beta_{-i};\alpha_i(2)) - \varepsilon$, unless $\alpha_i(2) = \alpha_i(3)$, and it must satisfy $u_i(2,\vec\beta_{-i};\alpha_i(2)) \geq u_i(3,\vec\beta_{-i};\alpha_i(2)) - \varepsilon$, unless $\alpha_i(2) = \alpha_i(1)$. Thus it follows that
\begin{equation}\label{eq:jump-2}
\alpha_i(2) = \trunc_{[\alpha_i(1),\alpha_i(3)]}\left( \frac{3H_i(3) - 2H_i(2) \pm \varepsilon}{H_i(3)-H_i(2)} \right) = \trunc_{[\alpha_i(1),\alpha_i(3)]}\left( 3 + \frac{2p_0+2p_1+p_2 \pm 2\varepsilon}{p_2+p_3} \right).
\end{equation}
Finally, by \cref{lem:charepsilonbne}, the second jump point must satisfy $u_i(2,\vec\beta_{-i};\alpha_i(1)) \geq u_i(1,\vec\beta_{-i};\alpha_i(1)) - \varepsilon$, unless $\alpha_i(1) = \alpha_i(2)$, and it must satisfy $u_i(1,\vec\beta_{-i};\alpha_i(1)) \geq u_i(2,\vec\beta_{-i};\alpha_i(1)) - \varepsilon$, unless $\alpha_i(1) = \alpha_i(0)$. As a result, it must be that
\begin{equation}\label{eq:jump-1}
\alpha_i(1) = \trunc_{[\alpha_i(0),\alpha_i(2)]}\left( \frac{2H_i(2) - H_i(1) \pm \varepsilon}{H_i(2)-H_i(1)} \right) = \trunc_{[\alpha_i(0),\alpha_i(2)]}\left( 2 + \frac{2p_0+p_1 \pm 2\varepsilon}{p_1+p_2} \right).
\end{equation}

\medskip

We are now ready to prove \cref{clm:base-gadget}. Consider a base gadget with input bidder $j$, output bidder $i$ and parameters $(\gamma_\ell, \gamma_r, \ell, r)$, such that $\gamma_\ell,\gamma_r \geq 1/20$, $\gamma_\ell + \gamma_r < 1$ and $\ell < r$. Let $p_b$ denote the probability that bidder $j$ bids $b$, as perceived by bidder $i$.

Assume first that bidder $j$ is almost-valid. Then, by the construction of $F_{i,j}$, we obtain that $p_0=p_3=p_4=0$, $p_1 \in [\gamma_\ell,1-\gamma_r]$ and $p_2 = 1 - p_1$. Using \eqref{eq:jump-0} we have that $\alpha_i(0) \leq 1 + \frac{\varepsilon}{p_1/2} \leq 1 + \frac{2\varepsilon}{\gamma_\ell} \leq 1 + 1/2$ since $\gamma_\ell \geq 4\varepsilon$. Using \eqref{eq:jump-3} we obtain that $\alpha_i(3) = 5$, since $p_3=p_4=0$ and $1 - \varepsilon > 0$. \eqref{eq:jump-2} yields that $\alpha_i(2) \geq 3 + \frac{1 + p_1 - 2\varepsilon}{1-p_1} \geq 3+1/2$, since $\varepsilon \leq 1/4$. Thus, in order to show that bidder $i$ is almost-valid, it remains to prove that $\alpha_i(1) \in [2,3]$. Using \eqref{eq:jump-1} we can write
$$\alpha_i(1) = \trunc_{[\alpha_i(0),\alpha_i(2)]}\left( 2 + \frac{2p_0 + p_1 \pm 2\varepsilon}{p_1+p_2} \right) = \trunc_{[\alpha_i(0),\alpha_i(2)]}\left( 2 + p_1 \pm 2\varepsilon \right) = 2 + p_1 \pm 2\varepsilon$$
where we used the fact that $p_1 + 2\varepsilon \leq 1$, since $p_1 \leq 1 - \gamma_r$ and $\gamma_r \geq 2\varepsilon$. Note that this also yields that $\alpha_i(1) \leq 3$, while the bound $\alpha_i(1) \geq 2$ holds because $p_1 \geq \gamma_\ell$ and $\gamma_\ell \geq 2\varepsilon$ (or simply because of the no-overbidding assumption). As a result, bidder $i$ is almost-valid.

Now consider the case where, in addition, $\gamma_\ell,\gamma_r \geq 1/3$. We can write
$$p_1 = \gamma_\ell + (1-\gamma_\ell-\gamma_r)\frac{\trunc_{[2+\ell,2+r]}(\alpha_j(1)) - (2+\ell)}{r-\ell}.$$
In particular, it holds that $p_1 \in [1/3,2/3]$. Since, as shown above, $\alpha_i(1) = 2 + p_1 \pm 2\varepsilon$, we immediately obtain that $\alpha_i(1) \in [2+1/3-2\varepsilon,2+2/3+2\varepsilon]$, i.e., bidder $i$ is valid. Furthermore, we can write
$$\val[i] = \trunc_{[0,1]}(3(\alpha_i(1)-2-1/3)) = 3p_1-1 \pm 6\varepsilon = (3\gamma_\ell-1) + 3(1-\gamma_\ell-\gamma_r)\frac{\trunc_{[2+\ell,2+r]}(\alpha_j(1)) - (2+\ell)}{r-\ell} \pm 6\varepsilon$$
which proves the claim.
\end{proof}

\medskip

\noindent\textbf{Projection Gadget.}
The projection gadget with input bidder $j$ and output bidder $i$, uses two additional auxiliary-bidders $k$ and $k'$, and consists of three uses of the standard base gadget. Concretely, the first standard base gadget has input $j$ and output $k$, the second such gadget has input $k$ and output $k'$, and the third has input $k'$ and output $i$. See \cref{fig:projection_gadget} for an illustration. As stated in the claim below, the projection gadget has the notable property that the output bidder $i$ is \emph{always} valid. This gadget will be used to ultimately ensure that all the gate-bidders are valid.

\begin{claim}\label{clm:projection-gadget}
The projection gadget with input bidder $j$ and output bidder $i$ ensures that:
\begin{itemize}
	\item the output bidder $i$ is valid, and
	\item if the input bidder $j$ is valid, then $\val[i] = \val[j] \pm 18\varepsilon$.
\end{itemize}
\end{claim}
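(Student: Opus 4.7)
The proof analyzes the chain of three standard base gadgets $j \to k \to k' \to i$, using the characterization in \cref{lem:charepsilonbne} and the jump-point equations \eqref{eq:jump-0}--\eqref{eq:jump-3} derived inside the proof of \cref{clm:base-gadget}. The two bullets of the statement are established separately.

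For the second bullet, suppose $j$ is valid. Since the standard base gadget has $\gamma_\ell=\gamma_r=1/3$, the second bullet of \cref{clm:base-gadget} applies to the first gadget and yields that $k$ is valid and
\[
\val[k] \;=\; 3\,\trunc_{[2+1/3,\,2+2/3]}(\alpha_j(1)) \;-\; 7 \;\pm\; 6\varepsilon.
\]
Applied with $y=\alpha_j(1)-2-1/3$, the elementary identity $3\,\trunc_{[0,1/3]}(y)=\trunc_{[0,1]}(3y)$ rewrites the right-hand side as $\val[j]\pm 6\varepsilon$. Iterating the same step for the remaining two gadgets (a valid input remains valid) gives $\val[k']=\val[k]\pm 6\varepsilon$ and $\val[i]=\val[k']\pm 6\varepsilon$, which chain to $\val[i]=\val[j]\pm 18\varepsilon$.

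The first bullet is the technical heart of the claim. The plan is to show that the three gadgets progressively enforce validity, starting from an arbitrary $j$ constrained only by monotonicity and non-overbidding ($\alpha_j(b-1)\geq b$ for $b\in\{1,2,3,4\}$ and $\alpha_j(4)=5$). For the first gadget $j\to k$, the density of $F_{k,j}$ is supported in $[3/2,\,7/2]$, so $\alpha_j(3)\geq 4$ immediately forces $p_4=0$, while $\alpha_j(2)\geq 3$ and $\alpha_j(1)\geq 2$ together with $F_{k,j}(3)=2/3$ and $F_{k,j}(2)=1/3$ give $p_0+p_1+p_2\geq 2/3$ and $p_0+p_1\geq 1/3$. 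Plugging these bounds into \eqref{eq:jump-0}--\eqref{eq:jump-3} and simplifying yields the \emph{partial validity} conditions $\alpha_k(0)\leq 3/2$, $\alpha_k(2)\geq 7/2$, $\alpha_k(3)=5$ (for $\varepsilon$ and $1/\mm$ sufficiently small, which is ensured by the regime $\varepsilon\leq 1/10^5$ and by $\mm$ being large enough in the global reduction). Crucially, $\alpha_k(1)$ can sit anywhere in $[2,5]$, so $k$ need not be almost-valid.

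Next, I show that the partial validity of $k$ already suffices to make the second gadget $k\to k'$ behave cleanly: $\alpha_k(0)\leq 3/2$ eliminates $p_0$, $\alpha_k(2)\geq 7/2$ eliminates $p_3$, and $\alpha_k(3)=5$ eliminates $p_4$. Hence $p_1+p_2=1$ with $p_1=F_{k',k}(\alpha_k(1))\in[1/3,1]$ (the lower bound coming from $\alpha_k(1)\geq 2$ by non-overbidding). Then \eqref{eq:jump-1} simplifies to $\alpha_{k'}(1)=\trunc(2+p_1\pm 2\varepsilon)\in[2,\,3+2\varepsilon]$, and analogous computations give $\alpha_{k'}(0)\leq 3/2$, $\alpha_{k'}(2)\geq 7/2$, $\alpha_{k'}(3)=5$. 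So $k'$ is \emph{almost-valid up to an $O(\varepsilon)$ slack} on the condition $\alpha_{k'}(1)\leq 3$. Finally, the density of $F_{i,k'}$ vanishes on the gap $[2+2/3,\,3+1/4]$ surrounding the value $3$, so the $O(\varepsilon)$ overshoot of $\alpha_{k'}(1)$ above $3$ cannot alter the probabilities $p_b$ entering the third gadget's analysis---they coincide with the $p_b$ that would arise from a strictly almost-valid input. Consequently the second bullet of \cref{clm:base-gadget} applies to the third gadget and certifies that $i$ is valid.

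The main obstacle is precisely the $O(\varepsilon)$ slack in $\alpha_{k'}(1)$ after the second gadget, which prevents a textbook appeal to the strict almost-valid hypothesis of \cref{clm:base-gadget}. The resolution is the structural observation that all three standard base gadgets share a density gap of width $7/12$ straddling the value $3$, so any small overshoot of $\alpha_{k'}(1)$ above $3$ is \emph{invisible} to the third gadget's probability computation; with that in hand, the proof of \cref{clm:base-gadget}'s second bullet transfers verbatim and delivers the validity of $i$.
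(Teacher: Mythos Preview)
Your proof is correct and follows essentially the same three-stage analysis as the paper: partial validity after one gadget, $\alpha_{k'}(1)\le 3+O(\varepsilon)$ after two, and full validity of $i$ after three via the density gap in $[2+2/3,\,3+1/4]$. The paper phrases the final step by directly bounding $\alpha_{k'}(1)\le 3+1/4$ and recomputing $p_1,p_2\in[1/3,2/3]$ for the third gadget rather than appealing to the proof of \cref{clm:base-gadget}, but the underlying observation (that the overshoot lands in a zero-density region and hence leaves the probabilities unchanged) is identical.
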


\begin{proof}
The second point follows immediately from \cref{clm:base-gadget} applied to the standard base gate. Thus, it remains to show that the output bidder $i$ is always valid.
Consider the first standard base gadget, which has input bidder $j$ and output bidder $k$. Let $p_b$ denote the probability that bidder $j$ bids $b$, as perceived by bidder $k$. Since the density function of $F_{k,j}$ has a block of volume $1/3$ lying in $[1+1/2,1+3/4]$, and since we do not allow overbidding, it follows that $p_0+p_1 \geq 1/3$. Using \eqref{eq:jump-0} this implies that
$$\alpha_k(0) \leq 1 + \frac{p_0/\mm + \varepsilon}{p_0(\mm-1)/\mm + p_1/2} \leq 1 +  6/\mm  + 6\varepsilon \leq 1 + 1/2$$
since wlog $\mm \geq 24$ and $\varepsilon \leq 1/24$. Next, using \eqref{eq:jump-3} we immediately get that $\alpha_k(3) \geq 4$ since $\varepsilon < 1/3$ (or just by using the no-overbidding assumption). Then, \eqref{eq:jump-2} implies that
$$\alpha_k(2) = \trunc_{[\alpha_k(1),\alpha_k(3)]}\left( 3 + \frac{2p_0+2p_1+p_2 \pm 2\varepsilon}{p_2+p_3} \right) \geq 4 - 2\varepsilon \geq 3 + 1/2$$
where we used $p_0+p_1 \geq 1/3$, $p_2+p_3 \leq 2/3$, and $\varepsilon \leq 1/4$. Finally, note that $\alpha_k(1) \geq 2$ by the no-overbidding assumption.

Next, consider the second standard base gadget, which has input bidder $k$ and output bidder $k'$. Let $p_b$ denote the probability that bidder $k$ bids $b$, as perceived by bidder $k'$. From the construction of the density function of $F_{k',k}$ and the bounds obtained on the jump points of $k$ in the first step, it follows that $p_0=p_3=p_4=0$ and $p_1 \geq 1/3$. Using \cref{eq:jump-0,eq:jump-2,eq:jump-3} similarly to above, we obtain that $\alpha_{k'}(0) \leq 1+1/2$, $\alpha_{k'}(2) \geq 3+1/2$ and $\alpha_{k'}(3)=5$. As before, we have that $\alpha_{k'}(1) \geq 2$ by the no-overbidding assumption, and using \eqref{eq:jump-1} we also obtain that
$$\alpha_{k'}(1) = \trunc_{[\alpha_{k'}(0),\alpha_{k'}(2)]}\left( 2 + \frac{2p_0+p_1 \pm 2\varepsilon}{p_1+p_2} \right) \leq 2 + 1 + 2\varepsilon \leq 3 + 1/4$$
since $\varepsilon \leq 1/8$.

Finally, consider the third and last standard base gadget, which has input bidder $k'$ and output bidder $i$. Let $p_b$ denote the probability that bidder $k'$ bids $b$, as perceived by bidder $i$. From the construction of the density function of $F_{i,k'}$ and the bounds obtained on the jump points of $k'$ in the previous step, it follows that $p_0=p_3=p_4=0$, $p_1 \geq 1/3$ and $p_2 \geq 1/3$. Again using \cref{eq:jump-0,eq:jump-2,eq:jump-3} as in the previous step, we obtain that $\alpha_{i}(0) \leq 1+1/2$, $\alpha_{i}(2) \geq 3+1/2$ and $\alpha_{i}(3)=5$. Using \eqref{eq:jump-1} we have that
$$\alpha_i(1) = \trunc_{[\alpha_i(0),\alpha_i(2)]}\left( 2 + \frac{2p_0+p_1 \pm 2\varepsilon}{p_1+p_2} \right) = 2 + p_1 \pm 2\varepsilon \in [2+1/3-2\varepsilon,2+2/3+2\varepsilon]$$
and thus bidder $i$ is indeed valid.
\end{proof}

\medskip

\noindent\textbf{$\bm{G_{\times 2}}$ Gadget.} The $G_{\times 2}$ gadget with input bidder $j$ and output bidder $i$, uses an additional auxiliary-bidder $k$, and consists of one use of the base gadget and one use of the projection gadget. In more detail, the base gadget has input $j$, output $k$ and parameters $(\gamma_\ell,\gamma_r,\ell,r) = (1/3,1/3,1/3,1/2)$, while the projection gate has input $k$ and output $i$. See \cref{fig:x2-gadget} for an illustration.

\begin{claim}\label{clm:mult-2-gadget}
The $G_{\times 2}$ gadget with input bidder $j$ and output bidder $i$ ensures that:
\begin{itemize}
	\item the output bidder $i$ is valid, and
	\item if the input bidder $j$ is valid, then $\val[i] = \trunc(2\cdot \val[j]) \pm 24\varepsilon$.
\end{itemize}
\end{claim}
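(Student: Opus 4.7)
The plan is to prove this claim by simply composing the two gadgets it is built from. The first bullet is immediate from Claim \ref{clm:projection-gadget}: since the output $i$ of the $G_{\times 2}$ gadget is the output of a projection gadget (with input $k$), that claim guarantees $i$ is always valid, regardless of the behavior of $k$ or $j$.

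For the second bullet, assume $j$ is valid. The first step is to apply Claim \ref{clm:base-gadget} to the base gadget with input $j$, output $k$, and parameters $(\gamma_\ell,\gamma_r,\ell,r)=(1/3,1/3,1/3,1/2)$. These satisfy $\gamma_\ell,\gamma_r\ge 1/3$, so Claim \ref{clm:base-gadget} gives that $k$ is valid and, plugging the parameters into the formula,
\[
\val[k] \;=\; 0 \;+\; 3\cdot\tfrac{1}{3}\cdot \frac{\trunc_{[2+1/3,\,2+1/2]}(\alpha_j(1)) - (2+1/3)}{1/6} \;\pm\; 6\varepsilon \;=\; 6\cdot \trunc_{[0,1/6]}\!\bigl(\alpha_j(1)-2-\tfrac{1}{3}\bigr) \;\pm\; 6\varepsilon.
\]
Setting $y := \alpha_j(1)-2-1/3$, this becomes $\val[k]=\trunc_{[0,1]}(6y)\pm 6\varepsilon$, while by validity of $j$ we have $\val[j] = \trunc_{[0,1]}(3y)$.

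The next (routine) step is to observe the identity $\trunc_{[0,1]}(6y) = \trunc_{[0,1]}(2\cdot \trunc_{[0,1]}(3y)) = \trunc(2\val[j])$, which is verified by splitting into the three cases $3y\le 0$, $3y\ge 1$, and $0<3y<1$ (in the middle case both sides equal $\trunc_{[0,1]}(6y)$, and in the extreme cases both sides are $0$ or $1$ respectively). Therefore $\val[k] = \trunc(2\val[j])\pm 6\varepsilon$.

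Finally, apply Claim \ref{clm:projection-gadget} to the projection gadget with input $k$ and output $i$: since $k$ is valid, this yields $\val[i] = \val[k]\pm 18\varepsilon$. Combining with the previous equation gives $\val[i] = \trunc(2\val[j]) \pm 24\varepsilon$, as required. There is no real obstacle here; the proof is just a composition of the two previously established claims together with the elementary truncation identity above, and the additive error tracks through cleanly since the two gadgets contribute $6\varepsilon$ and $18\varepsilon$ respectively.
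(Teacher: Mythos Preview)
Your proof is correct and follows essentially the same approach as the paper's own proof: apply \cref{clm:base-gadget} to the base gadget (yielding $\val[k]=\trunc_{[0,1]}(6(\alpha_j(1)-2-1/3))\pm 6\varepsilon$ and validity of $k$), then apply \cref{clm:projection-gadget} to get $\val[i]=\val[k]\pm 18\varepsilon$. The only difference is that you spell out the truncation identity $\trunc_{[0,1]}(6y)=\trunc(2\trunc_{[0,1]}(3y))$ explicitly by case analysis, whereas the paper asserts the corresponding equality $\trunc_{[0,1]}(6\alpha_j(1)-14)=\trunc_{[0,1]}(2\val[j])$ without comment.
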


\begin{proof}
The fact that bidder $i$ is valid follows from our use of the projection gadget and the first bullet point in \cref{clm:projection-gadget}. Now consider the case where bidder $j$ is valid. Since $\gamma_\ell=\gamma_r=1/3$, by \cref{clm:base-gadget} we know that bidder $k$ is also valid and it holds that
$$\val[k] =  \frac{\trunc_{[2+\ell,2+r]}(\alpha_j(1)) - (2+\ell)}{r-\ell} \pm 6\varepsilon = \trunc_{[0,1]} \left( 6\alpha_j(1) - 14 \right) \pm 6\varepsilon = \trunc_{[0,1]}(2\cdot \val[j]) \pm 6\varepsilon.$$
Since $k$ is valid, we can use the second bullet point in \cref{clm:projection-gadget}, which yields $\val[i] = \val[k] \pm 18\varepsilon = \trunc_{[0,1]}(2\cdot \val[j]) \pm 24\varepsilon$.
\end{proof}

\medskip

\noindent\textbf{$\bm{G_{1-}}$ Gadget.} The $G_{1-}$ gadget with input bidder $j$ and output bidder $i$ uses three additional auxiliary-bidders $k_1,k_2,k_3$. First, a base gadget is used with input $j$, output $k_1$ and parameters $(\gamma_\ell,\gamma_r,\ell,r) = (1/6,2/3,1/3,2/3)$. Next, the density function of $F_{k_2,k_1}$ has a block of volume $2/3$ in $[1+1/2,1+3/4]$, and a block of volume $1/3$ in $[4,5]$. Then, we use a base gadget with input $k_2$, output $k_3$ and parameters $(\gamma_\ell,\gamma_r,\ell,r) = (1/3,1/3,2/3,5/6)$. Finally, we use a projection gadget with input $k_3$ and output $i$. See \cref{fig:oneminus-gadget} for an illustration.

The crucial idea behind the construction of this gadget is that the third jump point (instead of the second one) is used to encode information in some intermediate step. This allows us to simulate the non-monotone operation $x \mapsto 1-x$.

\begin{claim}\label{clm:complement-gadget}
The $G_{1-}$ gadget with input bidder $j$ and output bidder $i$ ensures that:
\begin{itemize}
	\item the output bidder $i$ is valid, and
	\item if the input bidder $j$ is valid, then $\val[i] = 1 - \val[j] \pm 60\varepsilon$.
\end{itemize}
\end{claim}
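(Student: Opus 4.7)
The validity of the output bidder $i$ follows immediately: since the gadget terminates with a projection gadget whose input is $k_3$ and whose output is $i$, the first bullet of \cref{clm:projection-gadget} already guarantees that $i$ is valid, independently of whatever $k_3$ looks like. The conceptual crux of the remaining work is that an ordinary base gadget, which ``reads'' the second jump point $\alpha_j(1)$ of its input, can only transport value along a monotone map; the construction here simulates the non-monotone operation $x \mapsto 1 - x$ by routing information through the \emph{third} jump point $\alpha_{k_1}(2)$ in step~2, whose position moves in the opposite direction to $\val[j]$ in a controlled way.

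Under the assumption that $j$ is valid, the plan is to trace through the four sub-steps and track the accumulated $O(\varepsilon)$ error. For step~1, the asymmetric parameters $(\gamma_\ell,\gamma_r,\ell,r)=(1/6,2/3,1/3,2/3)$ only satisfy the weaker hypothesis $\gamma_\ell,\gamma_r \geq 1/20$ of \cref{clm:base-gadget}, so one can directly conclude only that $k_1$ is almost-valid. The key quantity I need to extract is $\alpha_{k_1}(2)$, which I obtain by plugging the perceived probabilities $p_0=p_3=p_4=0$, $p_1=(1+\val[j])/6$, $p_2=(5-\val[j])/6$ into \eqref{eq:jump-2}, yielding
\[
\alpha_{k_1}(2) \;=\; 3 + \frac{7+\val[j]}{5-\val[j]} \pm 3\varepsilon,
\]
which lies safely in $[4,5]$. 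For step~2, almost-validity of $k_1$ forces $p_1=2/3$ from the mass on $[3/2,7/4]$, while the mass $1/3$ on $[4,5]$ is split at $\alpha_{k_1}(2)$ into $p_2=(\alpha_{k_1}(2)-4)/3$ and $p_3=(5-\alpha_{k_1}(2))/3$. Applying \eqref{eq:jump-0}--\eqref{eq:jump-3} shows that $k_2$ is almost-valid; the cleanest intermediate computation is
\[
p_1+p_2 \;=\; \tfrac{2}{3} + \tfrac{\alpha_{k_1}(2)-4}{3} \;=\; \tfrac{\alpha_{k_1}(2)-2}{3} \;=\; \tfrac{4}{5-\val[j]} \pm O(\varepsilon),
\]
which makes \eqref{eq:jump-1} collapse to $\alpha_{k_2}(1) = 2 + (5-\val[j])/6 \pm O(\varepsilon)$, lying inside $[2+2/3,\,2+5/6]$.

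Step~3 uses a base gadget with the \emph{symmetric} parameters $\gamma_\ell=\gamma_r=1/3$ and $(\ell,r)=(2/3,5/6)$, so the second bullet of \cref{clm:base-gadget} now applies to the almost-valid $k_2$ and gives $k_3$ valid with
\[
\val[k_3] \;=\; 6\bigl(\alpha_{k_2}(1)-(2+\tfrac{2}{3})\bigr) \pm 6\varepsilon \;=\; (5-\val[j]) - 4 \pm O(\varepsilon) \;=\; 1 - \val[j] \pm O(\varepsilon).
\]
Finally, step~4's projection gadget yields $\val[i] = \val[k_3] \pm 18\varepsilon$ by the second bullet of \cref{clm:projection-gadget}, completing the chain. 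The main obstacle is book-keeping the propagated error: step~2 is not covered by any previously established gadget, so the jump points of $k_2$ must be re-derived directly from the characterization \eqref{eq:jump-0}--\eqref{eq:jump-3}. Because $5-\val[j] \geq 4$ and $p_1+p_2 \geq 2/3$ stay bounded away from zero, each division introduces only a small constant blow-up of the $\pm O(\varepsilon)$ perturbations, and summing the contributions of steps~1--3 with the $18\varepsilon$ from step~4 fits comfortably within the claimed $\pm 60\varepsilon$ bound.
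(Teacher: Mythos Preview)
Your proposal is correct and follows essentially the same route as the paper: trace through the four stages, use \cref{clm:base-gadget} for step~1 to pin down $\alpha_{k_1}(2)$, re-derive $k_2$'s jump points directly from \eqref{eq:jump-0}--\eqref{eq:jump-3} in step~2 (arriving at $\alpha_{k_2}(1)=2+(5-\val[j])/6 \pm O(\varepsilon)$), apply the second bullet of \cref{clm:base-gadget} in step~3, and finish with the projection gadget. The paper tracks explicit constants ($\pm 6\varepsilon$ after step~2, $\pm 42\varepsilon$ after step~3, $\pm 60\varepsilon$ after step~4) rather than your $O(\varepsilon)$ bookkeeping, but the argument is the same.
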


\begin{proof}
First of all, note that $i$ must be valid, because of the corresponding property of the projection gadget (\cref{clm:projection-gadget}). Now consider the case where $j$ is valid. By \cref{clm:base-gadget} it follows that bidder $k_1$ is almost-valid, in particular $\alpha_{k_1}(3)=5$ and $\alpha_{k_1}(1) \leq 3$. Let $p_b$ denote the probability that bidder $j$ bids $b$, as perceived by bidder $k_1$. Since $j$ is valid, we immediately obtain that $p_0=p_3=p_4=0$. Furthermore, by the construction of $F_{k_1,j}$, it is easy to see that $p_1 = 1/6 + (1-1/6-2/3) \val[j] = 1/6 + \val[j]/6$. Next, using \eqref{eq:jump-2} we can write
\begin{equation*}
\begin{split}
\alpha_{k_1}(2) = \trunc_{[\alpha_{k_1}(1),\alpha_{k_1}(3)]}\left( 3 + \frac{2p_0+2p_1+p_2 \pm 2\varepsilon}{p_2+p_3} \right) &= \trunc_{[\alpha_{k_1}(1),\alpha_{k_1}(3)]} \left(3 + \frac{1+p_1 \pm 2\varepsilon}{1-p_1}\right)\\
&= 3 + \frac{7/6 + \val[j]/6}{5/6- \val[j]/6} \pm 3\varepsilon\\
&= 4 + \frac{2 + 2\val[j]}{5- \val[j]} \pm 3\varepsilon.
\end{split}
\end{equation*}
Now consider bidder $k_2$. Let $p_b$ denote the probability that bidder $k_1$ bids $b$, as perceived by bidder $k_2$. By construction of $F_{k_2,k_1}$ and since $k_1$ is almost-valid, it is easy to see that $p_0=p_4=0$, $p_1 = 2/3$ and $p_2+p_3 = 1/3$. By the same arguments used in the proof of \cref{clm:base-gadget} it follows that $\alpha_{k_2}(0) \leq 1+1/2$. By using \eqref{eq:jump-3} we obtain $\alpha_{k_2}(3) \geq 4 + \frac{2/3+1/6-\varepsilon}{1/6} \geq 5$. Next, using \eqref{eq:jump-2} we obtain
$$\alpha_{k_2}(2) \geq \trunc_{[\alpha_{k_2}(1),\alpha_{k_2}(3)]}\left( 3 + \frac{2p_0+2p_1+p_2 \pm 2\varepsilon}{p_2+p_3} \right) \geq \trunc_{[\alpha_{k_2}(1),5]}\left(3 + \frac{4/3 - 2\varepsilon}{1/3}\right) = 5.$$
Now observe that by construction of $F_{k_2,k_1}$ and the expression obtained earlier for $\alpha_{k_1}(2)$
$$p_2 = \frac{\trunc_{[4,5]}(\alpha_{k_1}(2))-4}{3} = \frac{2 + 2\val[j]}{15- 3\val[j]} \pm \varepsilon.$$
As a result, it follows that
$$\frac{2p_0+p_1}{p_1+p_2} = \frac{2/3}{2/3+\frac{2 + 2\val[j]}{15- 3\val[j]} \pm \varepsilon} = \frac{2/3}{2/3+\frac{2 + 2\val[j]}{15- 3\val[j]}} \pm 3\varepsilon = 5/6 - \val[j]/6 \pm 3\varepsilon$$
where we used $\varepsilon \leq 2/15$. Finally, using \eqref{eq:jump-1} we obtain
\begin{equation*}\begin{split}
\alpha_{k_2}(1) = \trunc_{[\alpha_{k_2}(0),\alpha_{k_2}(2)]}\left( 2 + \frac{2p_0+p_1 \pm 2\varepsilon}{p_1+p_2} \right) &= \trunc_{[\alpha_{k_2}(0),\alpha_{k_2}(2)]}\left(2 + 5/6 - \val[j]/6 \pm 3\varepsilon \pm \frac{2\varepsilon}{p_1+p_2} \right)\\
&= 2 + 5/6 - \val[j]/6 \pm 6\varepsilon.
\end{split}\end{equation*}
Note in particular that bidder $k_2$ is almost-valid, since $\varepsilon \leq 1/36$.

Since bidder $k_2$ is almost-valid, and we use a base gadget with $\gamma_\ell=\gamma_r=1/3$ with input $k_2$ and output $k_3$, it follows by \cref{clm:base-gadget} that bidder $k_3$ is valid and
$$\val[k_3] =  \frac{\trunc_{[2+\ell,2+r]}(\alpha_{k_2}(1)) - (2+4/6)}{1/6} \pm 6\varepsilon = 1 - \val[j] \pm 42\varepsilon.$$
Finally, the projection gadget with input $k_3$ and output $i$ ensures that $\val[i] = \val[k_3] \pm 18\varepsilon = 1 - \val[j] \pm 60\varepsilon$.
\end{proof}

\medskip

\noindent\textbf{$\bm{G_\phi}$ Gadget.} The $G_\phi$ gadget with input bidders $j_1$ and $j_2$ and output bidder $i$ is a binary gadget with additional auxiliary-bidders $k_1,k_2,k_3$. First of all, for all $t \in N \setminus \{j_1,j_2,k_1\}$, we set $F_{k_1,t}$ to have density function with a single block of volume $1$ in $[0,1]$. We set \emph{both} $F_{k_1,j_1}$ and $F_{k_1,j_2}$ to be distributions as in our construction of the base gadget with parameters $(\gamma_\ell,\gamma_r,\ell,r) = (1/20,8/20,1/3,2/3)$. The density function of $F_{k_2,k_1}$ has a block of volume $1/2$ in $[1+1/2,1+3/4]$, and a block of volume $1/2$ in $[3+1/2, 5]$. Next, we use a base gadget with input $k_2$, output $k_3$ and parameters $(\gamma_\ell,\gamma_r,\ell,r) = (1/3,1/3(1+1/4),104/200,779/800)$. Finally, we use a $G_{1-}$ gadget with input $k_3$ and output $i$. See \cref{fig:phi-gadget} for an illustration. We have the following claim.

\begin{figure}
     \centering
     \fbox{%
    \colorbox{blue!5!white}{
    \parbox{0.965\textwidth}{%
     \begin{subfigure}[t]{0.95\textwidth}
         \centering
         \input{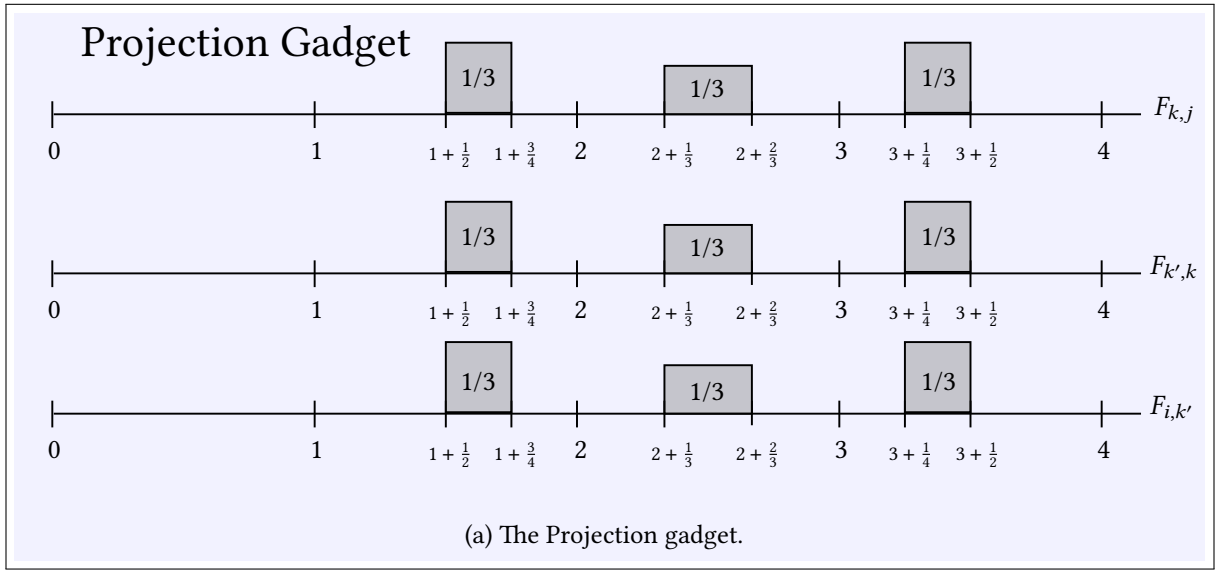}
         \caption{The Projection gadget.}
        \label{fig:projection_gadget}
     \end{subfigure}}}}\vspace{0.2cm}
     \hfill
     \centering
     \fbox{%
    \colorbox{blue!5!white}{
    \parbox{0.965\textwidth}{%
     \begin{subfigure}[t]{0.95\textwidth}
         \centering
        \scalebox{0.97}{
\tikzset{every picture/.style={line width=0.75pt}} 

\begin{tikzpicture}[x=0.75pt,y=0.75pt,yscale=-0.75,xscale=0.75,every node/.style={scale=1}]
\draw    (54,119) -- (800,119) ;
\draw    (53,110) -- (53,128.5) ;
\draw    (233,110) -- (233,128.5) ;
\draw    (413,110) -- (413,128.5) ;
\draw    (593,110) -- (593,128.5) ;
\draw    (773,110) -- (773,128.5) ;
\draw    (323,110) -- (323,128.5) ;
\draw    (368,110) -- (368,128.5) ;
\draw  [color={rgb, 255:red, 0; green, 0; blue, 0 }  ,draw opacity=1 ][fill={rgb, 255:red, 155; green, 155; blue, 155 }  ,fill opacity=0.51 ] (323,69.5) -- (368,69.5) -- (368,118) -- (323,118) -- cycle ;
\draw    (473,110) -- (473,128.5) ;
\draw    (503,110) -- (503,127) ;
\draw  [color={rgb, 255:red, 0; green, 0; blue, 0 }  ,draw opacity=1 ][fill={rgb, 255:red, 155; green, 155; blue, 155 }  ,fill opacity=0.51 ] (473,45.5) -- (503,45.5) -- (503,119.5) -- (473,119.5) -- cycle ;
\draw    (638,110) -- (638,128.5) ;
\draw    (683,110) -- (683,128.5) ;
\draw  [color={rgb, 255:red, 0; green, 0; blue, 0 }  ,draw opacity=1 ][fill={rgb, 255:red, 155; green, 155; blue, 155 }  ,fill opacity=0.51 ] (638,69.5) -- (683,69.5) -- (683,118) -- (638,118) -- cycle ;
\draw  [fill={rgb, 255:red, 155; green, 155; blue, 155 }  ,fill opacity=0.61 ] (48,176.5) -- (831,176.5) -- (831,219.5) -- (48,219.5) -- cycle ;

\draw (48,136) node [anchor=north west][inner sep=0.75pt]   [align=left] {0};
\draw (228,136) node [anchor=north west][inner sep=0.75pt]   [align=left] {1};
\draw (409,136) node [anchor=north west][inner sep=0.75pt]   [align=left] {2};
\draw (588,136) node [anchor=north west][inner sep=0.75pt]   [align=left] {3};
\draw (768,136) node [anchor=north west][inner sep=0.75pt]   [align=left] {4};
\draw (304,136) node [anchor=north west][inner sep=0.75pt]   [align=left] {};
\draw (309,136.4) node [anchor=north west][inner sep=0.75pt]  [font=\scriptsize]  {${\textstyle 1+\frac{1}{2}}$};
\draw (454,136.4) node [anchor=north west][inner sep=0.75pt]  [font=\scriptsize]  {${\textstyle 2+\frac{1}{3}}$};
\draw (354,136.4) node [anchor=north west][inner sep=0.75pt]  [font=\scriptsize]  {${\textstyle 1+\frac{3}{4}}$};
\draw (502,136.4) node [anchor=north west][inner sep=0.75pt]  [font=\scriptsize]  {${\textstyle 2+\frac{1}{2}}$};
\draw (807,104.4) node [anchor=north west][inner sep=0.75pt]    {${F_{k,j}}$};
\draw (61,185) node [anchor=north west][inner sep=0.75pt]   [align=left] {{\large Projection Gadget}};
\draw (625,136.4) node [anchor=north west][inner sep=0.75pt]  [font=\scriptsize]  {${\textstyle 3+\frac{1}{4}}$};
\draw (671,136.4) node [anchor=north west][inner sep=0.75pt]  [font=\scriptsize]  {${\textstyle 3+\frac{1}{2}}$};
\draw (300,187) node [anchor=north west][inner sep=0.75pt]   [align=left] {Input: $k$};
\draw (400,188) node [anchor=north west][inner sep=0.75pt]   [align=left] {Output: $i$};
\draw (71,55) node [anchor=north west][inner sep=0.75pt]   [align=left] {{\LARGE $G_{\times 2}$\ \  Gadget}};

\draw (331,85) node [anchor=north west][inner sep=0.75pt]   [align=left] {1/3};
\draw (474,76) node [anchor=north west][inner sep=0.75pt]   [align=left] {1/3};
\draw (647,85) node [anchor=north west][inner sep=0.75pt]   [align=left] {1/3};
\end{tikzpicture}}
         \caption{The $G_{\times 2}$ gadget.}
         \label{fig:x2-gadget}
     \end{subfigure}}}}\vspace{0.2cm}
     \hfill
     \centering
     \fbox{%
    \colorbox{blue!5!white}{
    \parbox{0.965\textwidth}{%
     \begin{subfigure}[t]{0.95\textwidth}
         \centering
         \input{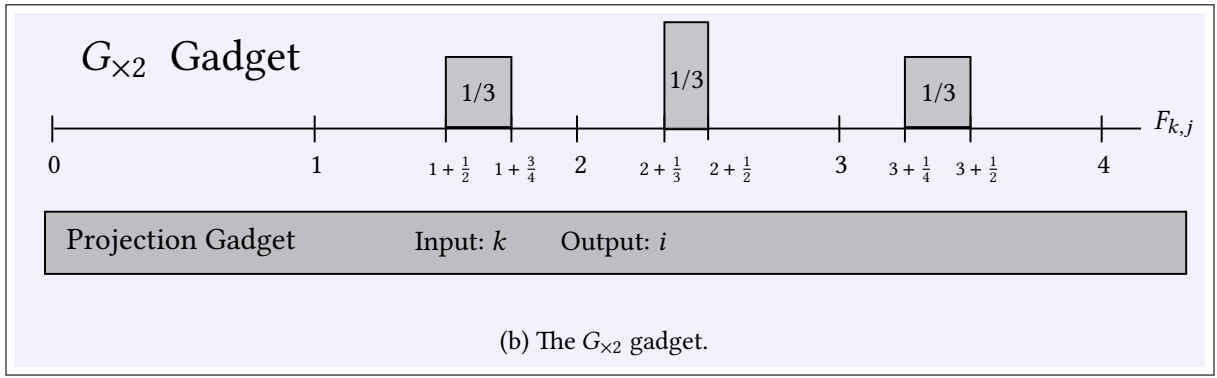}
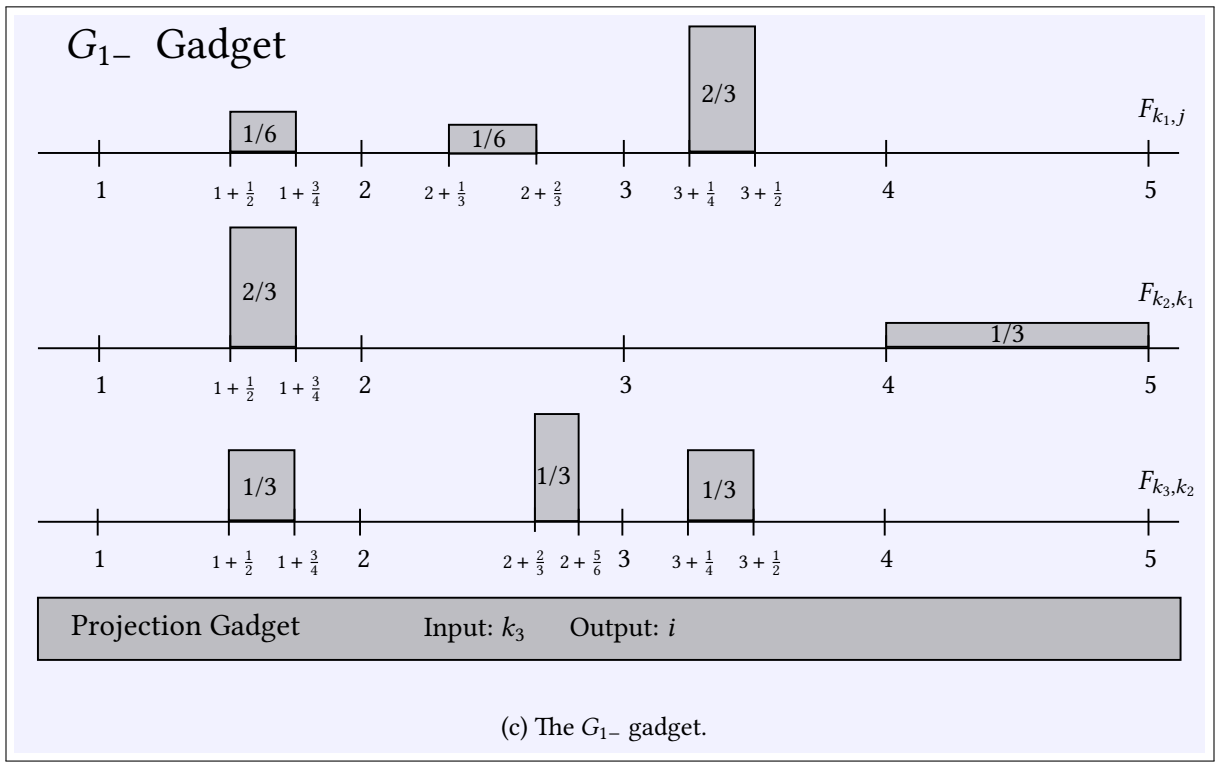
         \caption{The $G_{1-}$ gadget.}
        \label{fig:oneminus-gadget}
         \end{subfigure}}}}
         \caption{The Projection, $G_{\times 2}$ and $G_{1-}$ gadgets. The probability density functions of the corresponding subjective priors are shown.}
\end{figure}

\begin{figure}
\fbox{%
    \colorbox{blue!5!white}{
    \parbox{0.965\textwidth}{%
    \centering
    \input{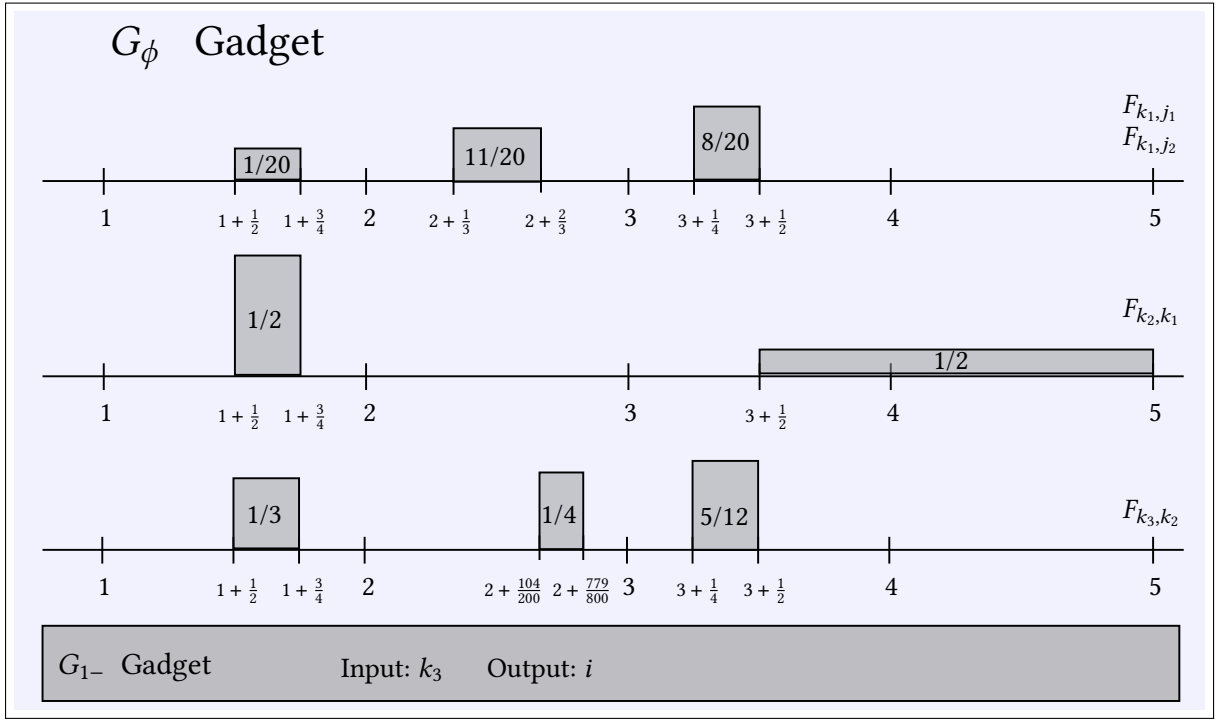}
    }}}
    \caption{The $G_\phi$ gadget. The probability density functions of the corresponding subjective priors are shown.}
    \label{fig:phi-gadget}
\end{figure}

\begin{claim}\label{clm:phi-gadget}
The $G_\phi$ gadget with input bidders $j_1,j_2$ and output bidder $i$ ensures that:
\begin{itemize}
	\item the output bidder $i$ is valid, and
	\item if the input bidders $j_1$ and $j_2$ are valid, then
	$$\val[i] = \phi(\val[j_1],\val[j_2]) \pm 86\varepsilon = \frac{1}{4}(\val[j_1]+1)(\val[j_2]+1) \pm 86\varepsilon.$$
\end{itemize}
\end{claim}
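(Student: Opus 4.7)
The validity of bidder $i$ follows immediately from the first bullet of \cref{clm:complement-gadget}, since the output of the gadget is fed through a $G_{1-}$ subgadget with input $k_3$ and output $i$. So the heart of the proof will be to establish $\val[k_3] = 1 - \phi(\val[j_1],\val[j_2]) \pm O(\varepsilon)$, after which the second bullet of \cref{clm:complement-gadget} will immediately give the claimed formula for $\val[i]$ (modulo a careful error accounting to obtain the constant $86$).

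The crux is bidder $k_1$: uniquely in our construction, $k_1$ perceives \emph{two} active opponents, $j_1$ and $j_2$ (all other opponents $t$ have $F_{k_1,t}$ supported in $[0,1]$, and therefore bid $0$ almost surely under no-overbidding). Since $F_{k_1,j_1}$ and $F_{k_1,j_2}$ are base-gadget densities with parameters $(\gamma_\ell,\gamma_r,\ell,r)=(1/20,8/20,1/3,2/3)$ and $j_1,j_2$ are valid, the same calculation as in the proof of \cref{clm:base-gadget} shows that $k_1$ sees each $j_t$ bidding only in $\{1,2\}$, with $p_1^{(t)} = 1/20 + (11/20)\val[j_t]$. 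Writing $p = p_1^{(1)}$, $q = p_1^{(2)}$ and summing over the four outcomes $(b_{j_1},b_{j_2}) \in \{1,2\}^2$ with uniform tie-breaking, a direct computation gives
\[
H_{k_1}(1) = \tfrac{pq}{3}, \qquad H_{k_1}(2) = \tfrac{2pq + p + q + 2}{6}, \qquad H_{k_1}(3) = H_{k_1}(4) = 1.
\]
Note that $H_{k_1}(2)$ is exactly a bilinear expression in $(p,q)$, which is what will ultimately produce the product structure of $\phi$. Applying \eqref{eq:jump-2} from the proof of \cref{clm:base-gadget} yields $\alpha_{k_1}(2) = 3 + H_{k_1}(2)/(1 - H_{k_1}(2)) \pm O(\varepsilon)$, with the $\varepsilon$-error controlled because $1 - H_{k_1}(2) \geq 2.08/6$ is bounded away from $0$ for $p,q \leq 12/20$.

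Next I turn to $k_2$, which has only $k_1$ as active opponent. The density $F_{k_2,k_1}$ is engineered so that the first block forces $p_1 = 1/2$ (from $k_2$'s perspective) while the second block converts $\alpha_{k_1}(2)$ into $p_2 = (\alpha_{k_1}(2) - 7/2)/3$, using the fact $\alpha_{k_1}(2)\in[7/2,5]$ established above; also $p_0 = p_4 = 0$ and $p_3 = 1/2 - p_2$. Feeding these into the $H$-expressions in the proof of \cref{clm:base-gadget} and applying \eqref{eq:jump-1}, the algebra collapses (after substituting the closed form of $H_{k_1}(2)$ into $p_2$) to $\alpha_{k_2}(1) = 3 - (2pq + p + q)/4 \pm O(\varepsilon)$. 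The standard checks via \eqref{eq:jump-0}, \eqref{eq:jump-2}, \eqref{eq:jump-3} then confirm that $k_2$ is almost-valid.

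The rest is chaining the earlier claims. The base gadget $k_2 \to k_3$, with parameters $(1/3, 5/12, 13/25, 779/800)$ chosen precisely for this purpose, gives by \cref{clm:base-gadget} that $\val[k_3] = (200/121)(\alpha_{k_2}(1) - 2 - 13/25) \pm 6\varepsilon$. Substituting the expression for $\alpha_{k_2}(1)$ and rewriting $p,q$ in terms of $\val[j_1],\val[j_2]$ yields $\val[k_3] = (96 - 100pq - 50p - 50q)/121 \pm O(\varepsilon)$, which coincides with $1 - \phi(\val[j_1],\val[j_2])$ since $\phi(x_1,x_2) = 25(2p+1)(2q+1)/121$ for $p = 1/20 + (11/20)x_1$ and analogously for $q$. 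Finally, \cref{clm:complement-gadget} applied to the $G_{1-}$ subgadget flips this to $\val[i] = \phi(\val[j_1],\val[j_2]) \pm O(\varepsilon)$. The one genuinely annoying part of the proof is the error bookkeeping: the multiplicative factor $50/121$ picked up when the $O(\varepsilon)$-error in $\alpha_{k_2}(1)$ is pulled through the base gadget, combined with the additive $6\varepsilon$ from \cref{clm:base-gadget} and $60\varepsilon$ from \cref{clm:complement-gadget}, must be bounded tightly enough to yield the claimed constant $86\varepsilon$; this is the only real obstacle, and it is purely computational rather than conceptual.
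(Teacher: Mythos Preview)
Your plan is correct and mirrors the paper's proof essentially step for step: the same $H_{k_1}$ computation, the same extraction of $\alpha_{k_1}(2)$ and then $\alpha_{k_2}(1) = 3 - (p+q+2pq)/4 \pm 10\varepsilon$, the same use of \cref{clm:base-gadget} for $k_3$ and \cref{clm:complement-gadget} for $i$, and the same final error count ($200/121 \cdot 10\varepsilon + 6\varepsilon \leq 26\varepsilon$, then $+60\varepsilon$ gives $86\varepsilon$). One small slip: the error multiplier you mention at the end is $200/121$, not $50/121$ --- but you already had the correct $200/121$ in your formula for $\val[k_3]$, so this is just a typo in the bookkeeping remark.
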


\begin{proof}
Bidder $i$ is guaranteed to be valid, because it is the output bidder of the $G_{1-}$ gadget (\cref{clm:complement-gadget}). Now assume that $j_1$ and $j_2$ are valid. Let $p_b$ denote the probability that bidder $j_1$ bids $b$, as perceived by bidder $k_1$. Similarly, let $q_b$ denote the probability that bidder $j_2$ bids $b$, as perceived by bidder $k_1$. By construction of $F_{k_1,j_1}$ and $F_{k_1,j_2}$, and because $j_1$ and $j_2$ are valid, we know that $p_0=p_3=p_4=q_0=q_3=q_4=0$, $p_1,q_1 \geq 1/20$ and $p_2,q_2 \geq 8/20$. Recall that $H_{k_1}(b)$ is used to denote the probability that bidder $k_1$ wins if she bids $b$ (from $k_1$'s perspective). Thus we immediately obtain that $H_{k_1}(0)=0$, $H_{k_1}(1)=p_1q_1/3$, $H_{k_1}(2)=p_1q_1 + p_2q_1/2 + p_1q_2/2 + p_2q_2/3 = 1/3 + (p_1+q_1)/6 + p_1q_1/3$ and $H_{k_1}(3) = H_{k_1}(4) = 1$. With this in hand, we now obtain (just as we did for \cref{eq:jump-0,eq:jump-3,eq:jump-2,eq:jump-1}):
$$\alpha_{k_1}(0) \leq 1 + \frac{H_{k_1}(0) + \varepsilon}{H_{k_1}(1)-H_{k_1}(0)} = 1 + \frac{\varepsilon}{p_1q_1/3} \leq 1 + 1200 \varepsilon \leq 1 + 1/2$$
since $\varepsilon \leq 1/2400$. Similarly, since $H_{k_1}(4)-H_{k_1}(3)=0$ and $H_{k_1}(3) = 1 > \varepsilon$, we have that $\alpha_{k_1}(3)=5$. We also have
$$\alpha_{k_1}(1) \leq 2 + \frac{H_{k_1}(1) + \varepsilon}{H_{k_1}(2)-H_{k_1}(1)} \leq 2 + \frac{p_1q_1/3 + \varepsilon}{1/3 + (p_1+q_1)/6} \leq 3$$
where we used the bounds we have on these probabilities and $\varepsilon \leq 1/4$. Finally, we have
\begin{equation*}\begin{split}
\alpha_{k_1}(2) = \trunc_{[\alpha_{k_1}(1),\alpha_{k_1}(3)]} \left( 3 + \frac{H_{k_1}(2) \pm \varepsilon}{H_{k_1}(3) - H_{k_1}(2)} \right) &= \trunc_{[3,5]} \left(3 + \frac{1/3 + (p_1+q_1)/6 + p_1q_1/3 \pm \varepsilon}{1-(1/3 + (p_1+q_1)/6 + p_1q_1/3)}\right)\\
&= 3 + \frac{1 + (p_1+q_1)/2 + p_1q_1}{2-(p_1+q_1)/2 - p_1q_1} \pm 3\varepsilon\\
&= 3 + \frac{1}{2} + \frac{3}{2}\frac{(p_1+q_1)/2 + p_1q_1}{2-(p_1+q_1)/2 - p_1q_1} \pm 3\varepsilon
\end{split}\end{equation*}
where we used the fact that $\frac{(p_1+q_1)/2 + p_1q_1}{2-(p_1+q_1)/2 - p_1q_1} \leq 1$, since $p_1,q_1 \leq 12/20$. As $p_1,q_1 \geq 1/20$ and $\varepsilon \leq 1/60$, we also have that $\alpha_{k_1}(2) \geq 3+1/2$. In particular, $k_1$ is almost-valid. Note that since $j_1$ and $j_2$ are valid, we have $p_1=1/20 + 11 \val[j_1]/20$ and $q_1=1/20 + 11 \val[j_2]/20$.

Next, we consider bidder $k_2$. Let $p_b'$ denote the probability that bidder $k_1$ bids $b$, as perceived by bidder $k_2$. By the previous paragraph, we have $p_0'=0$, $p_1'=1/2$, $p_4'=0$ and
$$p_2' = \frac{1}{3}\frac{3}{2}\frac{(p_1+q_1)/2 + p_1q_1}{2-(p_1+q_1)/2 - p_1q_1} \pm 3\varepsilon = \frac{1}{2}\frac{(p_1+q_1)/2 + p_1q_1}{2-(p_1+q_1)/2 - p_1q_1} \pm 3\varepsilon$$
where we used the fact that the height of the block of volume of $F_{k_2,k_1}$ in $[3+1/2,5]$ is $1/3$.
Since the density function of $F_{k_2,k_1}$ has a block of volume $1/2$ in $[1+1/2,1+3/4]$, as before we obtain that $\alpha_{k_2}(0) \leq 1+1/2$. Using \eqref{eq:jump-3} and \eqref{eq:jump-2}, we also have
$$\alpha_{k_2}(3) \geq 4 + \frac{p_0'+p_1'+p_2'+p_3'/2 - \varepsilon}{p_3'/2+p_4'/2} \geq 5$$
as well as
$$\alpha_{k_2}(2) \geq \trunc_{[\alpha_{k_2}(1),\alpha_{k_2}(3)]}\left( 3 + \frac{2p_0'+2p_1'+p_2' \pm 2\varepsilon}{p_2'+p_3'} \right) \geq 3+1/2.$$
Finally, \eqref{eq:jump-1} yields
\begin{equation*}\begin{split}
\alpha_{k_2}(1) &= \trunc_{[\alpha_{k_1}(0),\alpha_{k_1}(2)]}\left( 2 + \frac{2p_0'+p_1' \pm 2\varepsilon}{p_1'+p_2'} \right) \\ &= \trunc_{[\alpha_{k_1}(0),\alpha_{k_1}(2)]}\left( 2 + \frac{1/2}{1/2+\frac{1}{2}\frac{(p_1+q_1)/2 + p_1q_1}{2-(p_1+q_1)/2 - p_1q_1} \pm 3\varepsilon} \right) \pm 4\varepsilon\\
&=  2 + \frac{2 - (p_1+q_1)/2 - p_1q_1}{2} \pm 10\varepsilon.
\end{split}\end{equation*}
Substituting in $p_1=1/20 + 11 \val[j_1]/20$ and $q_1=1/20 + 11 \val[j_2]/20$, we compute
\begin{equation*}\begin{split}
\alpha_{k_2}(1) &= 2 + 1+1/8-\frac{1}{2}(11/20+11 \val[j_1]/20)(11/20+ 11 \val[j_2]/20) \pm 10\varepsilon\\
&= 2+9/8-\frac{121}{200} \phi(\val[j_1],\val[j_2]) \pm 10\varepsilon.
\end{split}\end{equation*}
Note that we have $\alpha_{k_2}(1) \in [2+104/200,2+779/800] \pm 10\varepsilon$. In particular, bidder $k_2$ is almost-valid.

Since bidder $k_3$ is the output of a base gadget with input $k_2$ and parameters $(\gamma_\ell,\gamma_r,\ell,r) = (1/3,1/3(1+1/4),104/200,779/800)$, it follows by \cref{clm:base-gadget} that $k_3$ is valid and
\begin{equation*}\begin{split}
\val[k_3] &= 3(1-\gamma_\ell-\gamma_r) \frac{\trunc_{[2+\ell,2+r]}(\alpha_{k_2}(1)) - (2+\ell)}{r-\ell} \pm 6\varepsilon\\
&= \frac{200}{121} (\alpha_{k_2}(1) - (2+\ell)) \pm 6\varepsilon\\
&= \frac{200}{121} \left(2+9/8-\frac{121}{200} \phi(\val[j_1],\val[j_2]) - (2+104/200)\right) \pm 26\varepsilon\\
&= 1 - \phi(\val[j_1],\val[j_2]) \pm 26\varepsilon.
\end{split}\end{equation*}
Finally, it is easy to see that the $G_{1-}$ gadget with input $k_3$ and output $i$ ensures the desired value for bidder $i$ (\cref{clm:complement-gadget}).
\end{proof}

\noindent\textbf{Finishing the proof.} Using the gadgets we have described above we can now enforce the constraints of the \gcircuit instance. Indeed, for each gate $g_i=(G,j,k)$ where $G \in \mathcal{G} = \{G_{\times 2}, G_{1-}, G_{\phi}\}$, it suffices to use the gadget corresponding to the gate-type $G$, with output bidder $i$ and input bidder $j$ (as well as $k$, in the case $G=G_\phi$). Since the distributions are subjective, we can re-use a bidder $j$ as an input to multiple different gadgets, without any interference. By \cref{clm:mult-2-gadget,clm:complement-gadget,clm:phi-gadget} it immediately follows that the gate-bidders $1,2,\dots, \nn$ must all be valid, since each of them is the output of some gadget. But this means that for any gate $g_i=(G,j,k)$, the input bidder $j$ (and $k$, if applicable) will be valid, because she is also a gate-bidder. As a result, again by \cref{clm:mult-2-gadget,clm:complement-gadget,clm:phi-gadget}, it follows that the gadgets will correctly enforce their constraints on all values $\val[i]$.

To obtain a solution, it suffices to set $\val[g_i] := \val[i]$ for all $i \in [\nn]$. For the case $\varepsilon=0$, note that since every gate-bidder $i$ is valid, we have that $\alpha_i(1) \in [2+1/3,2+2/3]$ and as a result $\val[i] = \trunc_{[0,1]} (3(\alpha_i(1)-2-1/3)) = 3(\alpha_i(1)-2-1/3)$, which indeed yields an SL-reduction \citep{etessami2010complexity}. By scaling back to the original value space $[0,1]$, the proof yields that for all $\varepsilon \in [0,1/10^5]$, from any $\varepsilon$-BNE of the auction we can extract an $500\varepsilon$-satisfying assignment for the generalized circuit. As discussed at the beginning of the section, this yields both \ppad- and \fixp-hardness.

\section{An Efficient Algorithm for a Constant Number of Bidders and Bids}\label{sec:positive}
In this section, we design an algorithm which computes an $\varepsilon$-Bayes-Nash equilibrium of the FPA when (a) the number of bidders $n$ is constant, (b) the size of the bidding space $|B|$ is constant, and (c) the value distributions $F_{i,j}$ of the bidders are \emph{piecewise polynomial}.\\ 

\noindent To be more precise, our input comprises of:
\begin{itemize}
    \item[-] a set of bids\footnote{Recall that here $\card{B}$ is \emph{fixed}, i.e., not part of the input.} $B=\ssets{b_0,b_1,\dots,b_{\card{B}-1}} \subset [0,1]$
    \item[-] a partition\footnote{Our assumption here of a \emph{common} interval partition for the piecewise polynomial representation of all subjective priors $F_{i,j}$ is for the sake of simplicity, and it is not critical for the positive results of this section. In particular, it is not difficult to see that our model can handle different partitions $[x_{\ell-1}^{i,j}, x_\ell^{i,j}]$ with just a polynomial blow-up in the size of the representation; essentially one needs to take the interval partition induced by all points $\ssets{x_\ell^{i,j}}$.} of $[0,1]$ into $K$ intervals $[x_{\ell-1}, x_\ell]$, $\ell = \{1, 2, \ldots, K\}$, with rational endpoints
    \item[-] for each distribution $F_{i,j}$ and each subinterval $[x_{\ell-1}, x_\ell]$, a vector of rationals $(a_0^{i,j,\ell},a_1^{i,j,\ell},\dots,a_d^{i,j,\ell})$. 
\end{itemize}
Then, (the cumulative distribution function of) $F_{i,j}$ is defined as
$$
F_{i,j}(z)= F_{i,j}^\ell(z),
\qquad\text{for}\;\; z\in[x_{\ell-1},x_{\ell}],
$$
where 
\begin{equation}
\label{eq:piecewise_poly_algorithm_representation}
F_{i,j}^\ell(z)=\sum_{\kappa=0}^d a_\kappa^{i,j,\ell} z^\kappa
\end{equation}
is the polynomial representation of $F_{i,j}$ in the $\ell$-th interval. Of course, the input should respect the conditions
$$
F^1_{i,j}(0) \geq 0,\quad F^{K}_{i,j}(1)= 1,\quad F^\ell_{i,j}(x_\ell)=F^{\ell+1}_{i,j}(x_\ell)\;\;\text{for}\;\; \ell=1,2,\dots,K-1,
$$
and that each $F^{\ell}_{i,j}$ is nondecreasing on $[x_{\ell-1},x_\ell]$.

Finally, when we say that $n$ and $|B|$ are fixed, we mean that they are constant functions of the other parameters of the input. \medskip

\noindent We have the following theorem.

\begin{theorem}\label{thm:positive}
For a fixed number of bidders, a fixed bidding space, and piecewise polynomial value distributions, an $\varepsilon$-BNE of the first-price auction can be computed in polynomial time, even for subjective priors and even when $\varepsilon$ is inversely-exponential in the input size.
\end{theorem}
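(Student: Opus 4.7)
The plan is to exploit the fact that, with $n$ and $|B|$ both constant, the equilibrium is described by a constant number of real unknowns $\alpha_i(b)$, and the $\varepsilon$-BNE condition from \cref{lem:charepsilonbne} reduces to a system of polynomial inequalities of constant degree in these unknowns. This situates the problem within the scope of polynomial-time decision procedures for the existential theory of the reals.

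First, I would represent a strategy profile by its $n\cdot|B|$ jump points $\vec{\alpha}=(\alpha_i(b))_{i\in N,\, b\in B}$. Since each $F_{i,j}$ is piecewise polynomial over a common partition into $K$ intervals, I would branch over all possible assignments of each jump point to an interval index in $\{1,\dots,K\}$. There are $K^{n|B|}$ such assignments, which is \emph{polynomial} in $K$ because $n|B|$ is constant. Once an assignment is fixed, each $F_{i,j}(\alpha_j(b))$ is identified with a single polynomial $F^\ell_{i,j}$ of degree $d$, and the requirement that the jump point lie in its chosen interval, together with monotonicity $\alpha_i(b^-)\leq\alpha_i(b)$ and no-overbidding, becomes a list of polynomial inequalities.

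Second, I would translate the characterization of \cref{lem:charepsilonbne} into polynomial constraints. Applying the dynamic programming of \cref{lem:H-functions}, each $H_i(b,\vec{\beta}_{-i})$ becomes a polynomial of degree $O(nd)$ in the jump points, so every utility $u_i(b,\vec{\beta}_{-i};\alpha_i(b))=(\alpha_i(b)-b)\cdot H_i(b,\vec{\beta}_{-i})$ is a polynomial of constant degree. The characterization yields $O(n|B|^2)$ such inequalities of the form $u_i(b,\vec{\beta}_{-i};\alpha_i(b^{\pm}))\geq u_i(b',\vec{\beta}_{-i};\alpha_i(b^{\pm}))-\varepsilon$ (conditioned on $\alpha_i(b^-)<\alpha_i(b)$, which I handle by branching over the partition of $B$ into ``active'' and ``collapsed'' bids). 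Combined with the interval-assignment constraints, we obtain a system of constantly many polynomial inequalities of constant degree in $n|B|$ real unknowns.

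Third, I would invoke a decision procedure for the existential theory of the reals \emph{\`a la} Renegar or Basu--Pollack--Roy: when the number of variables and the degree are constant, satisfiability can be decided in polynomial time, and a rational point in each nonempty connected component can be produced to any target precision in time polynomial in the bit-size of the coefficients and in $\log(1/\varepsilon)$. \cref{thm:existence} guarantees that an exact BNE exists, hence for at least one choice of interval assignment and active/collapsed partition the corresponding system is satisfiable, and the procedure returns a witness. Because $u_i$ is a polynomial of constant degree on a bounded domain, a rational approximation within a polynomially-small tolerance in $\log(1/\varepsilon)$ of an exact solution still satisfies the $\varepsilon$-BNE inequalities by a standard Lipschitz argument.

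The main obstacle I expect is the quantitative bridge from ``satisfies the $\varepsilon$-slackened polynomial system'' back to ``is an $\varepsilon$-BNE of the auction'', together with the accounting needed to ensure that all numerical quantities (the polynomial coefficients derived through the dynamic programming of \cref{lem:H-functions}, the Lipschitz bounds on $u_i$, the final rational rounding) have bit-size polynomial in the input and in $\log(1/\varepsilon)$. This is routine but must be spelled out carefully in order to justify the ``inversely-exponential $\varepsilon$'' claim; everything else follows from the constant-dimensional structure of the problem and the cited real-algebraic algorithms.
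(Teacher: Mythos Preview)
Your proposal is correct and follows essentially the same approach as the paper: enumerate the $K^{n|B|}$ interval assignments and the ``active/collapsed'' partitions of bids (what the paper calls \emph{effective} jump points), write the BNE characterization of \cref{lem:charepsilonbne} as a constant-sized system of polynomial inequalities of bounded degree in the constantly many jump-point variables, and solve it with a real-algebraic procedure (the paper cites Grigor'ev rather than Renegar/BPR, but the usage is the same). The paper also makes explicit exactly the step you flag as the main obstacle: the $\delta$-approximate solution returned by the solver may violate monotonicity and no-overbidding, so it is first \emph{projected} back onto the feasible domain $\mathcal D$ before invoking the Lipschitz bound of \cref{lem:util-poly-cont} to conclude that the projected point is an $\varepsilon$-BNE.
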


\noindent The remainder of the section is devoted to developing the algorithm that will prove \cref{thm:positive}. 

At a high level, the algorithm will perform the following four steps:\medskip 
\begin{enumerate}
    \item It ``guesses'', for each bidder, an assignment of the jump points of her best-response strategy to the $K$ sub-intervals $[x_{\ell-1},x_\ell]$ above; intervals may be allocated zero or multiple jump points. Since the number of bidders and the size of the bidding space are constant, there is a total constant number of jump points for all bidders. Therefore, this ``guessing'' step is an enumeration of all such possible assignments; the subsequent steps of the algorithm are run for any such assignment.\medskip
    \item It ``guesses'' a set of \emph{effective} jump points and bids. This is a technical corner case, to eliminate degenerate cases in which multiple jump points coincide. Again, this can be done via enumeration given that the number of jump points is constant. \medskip
    \item It formulates the problem of finding the \emph{exact positions} of the effective jump points (within the intervals corresponding to the guessed allocation above) as a system of polynomial inequalities of polynomially-large degree. A $\delta$-approximate solution to this system can be found using standard methods, in time polynomial in $\log(1/\delta)$ and the input parameters.\medskip
    \item It ``projects'' the approximate solution to the ``equilibrium space'', as defined by the constraints of the aforementioned system, ensuring that the resulting object is indeed an $\varepsilon$-BNE, for some $\varepsilon$ that can be made as small as needed, by making $\delta$ as small as needed.
\end{enumerate}

\noindent Below we describe these steps in more detail.

\subsubsection*{Step 1: Guessing an allocation of jump points to intervals}

Recall the definition of the jump points $\alpha_i(b)$ from \cref{sec:prelims}, which represent the equilibrium strategy of bidder $i$. Intuitively, $\alpha_i(b)$ is the largest value for which bidder $i$ would bid $b$ or lower. Since $|B|$ is constant, there is a constant number of such jump points for each bidder, and since $n$ is also constant, there is a constant number of jump points overall. The algorithm enumerates over all the possible ways of assigning the $n \cdot (|B|-1)$ jump points to the intervals $[x_{\ell-1}, x_\ell]$, for $\ell=1, \ldots, K$; this can be done in time $O(K^{n|B|})$. Then, for any possible such allocation, it moves to the next step. We introduce variables $y_{i,j}$, $j=1,2,\dots,\card{B}-1$ for the positions of the jump points of the strategy of bidder $i$ in $[0,1]$, and we set $y_{i,0}=0$, $y_{i,\card{B}}=1$.

\subsubsection*{Step 2: Guessing a set of effective jump points and bids} 

We ``guess'' possible ``collisions'' of sequential jump points, where a collision happens when the positions of two or more jump points coincide. In that case, we would like to only keep a \emph{single} representative from each coinciding jump point; the positions of these representatives are denoted using the variables $z_{ij}$. We also use the variables $b'_{i,j}$ to denote the corresponding bids, as subscribed by the chosen jump points. We refer to the chosen jump points and bids as \emph{effective} jump points and bids respectively. See \cref{fig:effective-points} for an illustration. \medskip

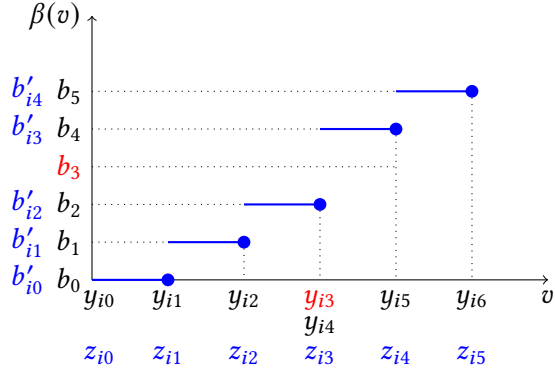
\begin{figure}
\centering
\begin{tikzpicture}
\draw[->] (0,0) -- (6,0);
\draw[->] (0,0) -- (0,3.5);
\node[below] at (6,0) {$v$};
\node[below] at (0.1,0) {$y_{i0}$};
\node[below] at (0.1,-0.75) {{\color{blue}$z_{i0}$}};
\node[below] at (1,0) {$y_{i1}$};
\node[below] at (1,-0.75) {{\color{blue}$z_{i1}$}};
\node[below] at (3,0) {{\color{red} $y_{i3}$}};
\node[below] at (3,-0.35) {$y_{i4}$};
\node[below] at (3,-0.75) {{\color{blue}$z_{i3}$}};
\node[below] at (2,0) {$y_{i2}$};
\node[below] at (2,-0.75) {{\color{blue}$z_{i2}$}};
\node[below] at (4,0) {$y_{i5}$};
\node[below] at (4,-0.75) {{\color{blue}$z_{i4}$}};
\node[below] at (5,0) {$y_{i6}$};
\node[below] at (5,-0.75) {{\color{blue}$z_{i5}$}};
\node[left] at (0,3.5) {$\beta(v)$};
\node[left] at (0,0) {$b_0$};
\node[left] at (-0.5,0) {{\color{blue}$b'_{i0}$}};
\node[left] at (0,0.5) {$b_1$};
\node[left] at (-0.5,0.5) {{\color{blue}$b'_{i1}$}};
\node[left] at (0,1.0) {$b_2$};
\node[left] at (-0.5,1.0) {{\color{blue}$b'_{i2}$}};
\node[left] at (0,1.5) {{\color{red} $b_3$}};
\node[left] at (0,2.0) {$b_4$};
\node[left] at (-0.5,2.0) {{\color{blue}$b'_{i3}$}};
\node[left] at (0,2.5) {$b_5$};
\node[left] at (-0.5,2.5) {{\color{blue}$b'_{i4}$}};
\coordinate (b1l) at (0,0.0) {};
\coordinate (b1r) at (1,0.0) {};
\coordinate (b2l) at (1,0.5) {};
\coordinate (b2r) at (2,0.5) {};
\coordinate (b3l) at (2,1.0) {};
\coordinate (b3r) at (3,1.0) {};
\coordinate (b4l) at (3,1.5) {};
\coordinate (b4r) at (4,1.5) {};
\coordinate (b5l) at (3,2.0) {};
\coordinate (b5r) at (4,2.0) {};
\coordinate (b6l) at (4,2.5) {};
\coordinate (b6r) at (5,2.5) {};
\draw[blue,thick] (b1l) -- (b1r);
\draw[blue,thick] (b2l) -- (b2r);
\draw[blue,thick] (b3l) -- (b3r);
\draw[blue,thick] (b5l) -- (b5r);
\draw[blue,thick] (b6l) -- (b6r);
\draw[dotted] (0,0.5) -- (b2l);
\draw[dotted] (0,1.0) -- (b3l);
\draw[dotted] (0,2.0) -- (b5l);
\draw[dotted] (0,2.5) -- (b6l);
\draw[dotted] (0,1.5) -- (4,1.5);
\draw[dotted] (b3r) -- (3,0.0);
\draw[dotted] (b2r) -- (2,0.0);
\draw[dotted] (b5r) -- (4,0.0);
\draw[dotted] (b6r) -- (5,0.0);
\draw[blue,thick,fill=blue] (b1r) circle (2pt);
\draw[blue,thick,fill=blue] (b2r) circle (2pt);
\draw[blue,thick,fill=blue] (b3r) circle (2pt);
\draw[blue,thick,fill=blue] (b5r) circle (2pt);
\draw[blue,thick,fill=blue] (b6r) circle (2pt);
\end{tikzpicture}
\caption{An illustration of the selection of effective jump points ($b'$) and effective bids ($z'$), for $\card{B}=6$. In the figure, jump points 3 and 4 coincide, and therefore among those, only jump point 4 will be in the sequence used in the next step. Also, bid $b_3$ is never used in the best-response function, as the strategy jumps directly from $b_2$ to $b_4$, and therefore $b_3$ will be excluded from the set of effective bids. In the end, the effective jump points would be $1,2,4$ and $5$ and the effective bids will be $b_0,b_1,b_2,b_4$ and $b_5$.}
\label{fig:effective-points}
\end{figure}

\noindent Formally, this corresponds to picking, for each bidder $i$, an (increasing) subsequence $\mu_i(j)\subseteq\ssets{1,\dots,\card{B}-1}$, such that 
$$ z_{i,j}=y_{i,\mu_i(j)}\quad\text{and}\quad
\ssets{0=z_{i,0}<z_{i,1}<\dots<z_{i,m_i}=1}=\ssets{0=y_{i,0}\leq y_{i,1} \leq y_{i,2}\leq \dots \leq y_{i,\card{B}}=1}.
$$
Notice that $m_i \leq \card{B}$. 
Given the ``guessing'' in the current step, we let $L_{i,j}, R_{i,j}$ denote the left and right, respectively, endpoints of the sub-interval in which the $j$-th effective break point of player $i$ lies; i.e., $z_{i,j}\in [L_{i,j},R_{i,j}]$. For ease of notation, we also use the shortcut $b_{i,j}'=b_{\mu_i(j)}$ for the $j$-th effective bid of player $i$.

Again, since $|B|$ is constant, we can enumerate over all possible effective jump point subsequences $\mu_i$ in constant time and for each such subsequence, we proceed to the next step.

\subsubsection*{Step 3: Solving a system of polynomial inequalities}

From the previous two steps we have, for each bidder $i$, an assignment of effective jump points $z_{i,0}, \ldots, z_{i,m_i}$ to intervals $[x_{\ell-1},x_\ell]$. 
In particular, $z_{i,j}$ is mapped to $[L_{i,j},R_{i,j}]$. 
Below, we express all the properties that must be satisfied by the effective jump points at an (exact) BNE of the FPA as a system of polynomial inequalities; the system includes inequalities to ensure
\begin{itemize}
    \item[-] that the positions of the jump points of each bidder $i$ respect the ordering implied by the set of indices, i.e., $z_{i,j-1} < z_{i,{j}}$ for all $j=1,\dots, m_i$, 
    \item[-] that the bidding strategies are non-overbidding,
    \item[-] that the variables $z_{i,j}$ indeed correspond to jump points of best-responses, in terms of the implications to the utility functions.
\end{itemize}

\fbox{%
\colorbox{gray!10!white}{
\hspace{-2cm}\parbox{\textwidth}{%
\begin{align}
& z_{i,j-1} < z_{i,j} &&\forall i,\;\forall j \label{eq:monotonicity_programme}\\ 
&L_{ij} \leq  z_{i,j} \leq R_{ij} &&\forall i,\; \forall j \label{eq:interval_points_programme}\\ 
& z_{i,j} \geq b'_{i,j} &&\forall i,\; \forall j \label{eq:no_overbidding_programme}\\ 
&u_i(b'_{i,j},\vec{z}_{-i};z_{i,j}) \geq
u_i(b,\vec{z}_{-i};z_{i,j})
&&\forall i,\;\forall j, \; \forall b<b'_{i,j} \label{eq:break_points_indiferance_programme}\\
&u_i(b'_{i,j-1},\vec{z}_{-i};z_{i,j}) \geq
u_i(b,\vec{z}_{-i};z_{i,j}) &&\forall i,\; \forall j,\;\forall b>b'_{i,j-1} \label{eq:best_response_programme}
\end{align}}}}
\providecommand{\refsystem}{System~\eqref{eq:monotonicity_programme}--\eqref{eq:best_response_programme}}
\medskip
\begin{lemma}
Fix a bidder $i$ and a bid $b\in B$. Then, for every $j=1,\dots,m_i$, her utility $u_i(b,\vec{z}_{-i};z_{i,j})$ can be expressed (in polynomial time) as a polynomial 
of degree at most $dn$ with respect to the effective jump point variables $\sset{z_{i',j'}}_{i'\in N,\; j'=0,\dots, m_i}$.
\end{lemma}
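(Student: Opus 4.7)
The plan is to start from the factorization
\[
u_i(b,\vec{z}_{-i};z_{i,j}) \;=\; (z_{i,j}-b)\,\cdot\, H_i(b,\vec{z}_{-i}),
\]
so that the first factor is linear in $z_{i,j}$, and the bulk of the work reduces to showing that $H_i(b,\vec{z}_{-i})$ is a polynomial of degree at most $(n-1)d$ in the effective jump point variables. For this, I will re-use the decomposition from the proof of \cref{lem:H-functions},
\[
H_i(b,\vec{z}_{-i}) \;=\; \sum_{k=0}^{n-1}\frac{1}{k+1}\,T(b,n-1,k),
\]
together with the fact that each $T(b,n-1,k)$ is, by \eqref{eq:tblj}, a sum of products of $n-1$ terms drawn from the family $\{G_{j,b^-},\,g_{j,b}\}_{j\neq i}$.

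Next, I will argue that each individual $G_{j,b^-}$ and $g_{j,b}$ is a univariate polynomial of degree at most $d$ in a single effective jump point of bidder $j$. Indeed, by construction $G_{j,b^-}=F_{i,j}(\alpha_j(b^-))$ and $g_{j,b}=F_{i,j}(\alpha_j(b))-G_{j,b^-}$, and the quantity $\alpha_j(b)$ is either equal to some effective jump point $z_{j,j'}$ (as prescribed by the map $\mu_j$ from Step~2), or to $0$ or $1$ in the degenerate boundary cases, or $g_{j,b}$ is simply $0$ if $b$ does not appear among bidder $j$'s effective bids. In the nontrivial case, the guess from Step~1 pins down the sub-interval $[x_{\ell-1},x_\ell]$ containing $z_{j,j'}$, and hence we may substitute the polynomial piece $F_{i,j}^\ell$ from \eqref{eq:piecewise_poly_algorithm_representation}, which has degree at most $d$ in the single variable $z_{j,j'}$.

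Combining, each summand of $T(b,n-1,k)$ is a product of $n-1$ degree-$\leq d$ polynomials in disjoint groups of variables (one group per bidder $j\neq i$), hence has degree at most $(n-1)d$. Summing over $k$ preserves this bound, so $H_i(b,\vec{z}_{-i})$ has degree at most $(n-1)d$ and $u_i(b,\vec{z}_{-i};z_{i,j})$ has degree at most $(n-1)d+1\leq nd$. For the polynomial-time part, I will observe that since $n$ and $|B|$ are fixed, the total number of effective jump point variables $z_{i',j'}$ is $O(n|B|)=O(1)$, so the number of monomials of degree at most $nd$ is $\binom{nd+n|B|}{n|B|}=O(d^{n|B|})$, a polynomial in the input size. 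The dynamic-programming recursion from the proof of \cref{lem:H-functions} then builds the required polynomial via $O(n^2)$ additions and multiplications by degree-$\leq d$ univariate polynomials, each of which is carried out in time polynomial in the monomial count.

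The main nuisance will not be the degree arithmetic, which is routine, but rather the bookkeeping around the guessing steps: making sure the correspondence between abstract bids $b\in B$ and effective bids $b'_{j,\cdot}$ correctly determines whether each $g_{j,b}$ is zero or a genuine $F_{i,j}^\ell$-difference, and making sure we consistently select the polynomial piece $F_{i,j}^\ell$ dictated by Step~1 for every effective jump point of bidder $j$. Once this is in place, the polynomial expression for $u_i$ and its degree bound follow directly from the product structure, and computing it explicitly reduces to standard multivariate polynomial arithmetic in a constant number of variables.
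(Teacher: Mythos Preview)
Your proposal is correct and follows essentially the same route as the paper: factor $u_i=(z_{i,j}-b)\cdot H_i$, expand $H_i$ via the $T(b,n-1,k)$ sums from \cref{lem:H-functions}, and observe that each $G_{j,b^-}$, $g_{j,b}$ is a degree-$\leq d$ polynomial in bidder~$j$'s jump-point variables (by the Step~1/Step~2 guesses and the piecewise representation~\eqref{eq:piecewise_poly_algorithm_representation}), so the $(n-1)$-fold products give degree $\leq (n-1)d$ and the final bound $\leq dn$ follows. One small imprecision: $g_{j,b}=F_{i,j}(\alpha_j(b))-F_{i,j}(\alpha_j(b^-))$ generally involves \emph{two} effective jump points of bidder~$j$, not one, so it is not literally univariate; but since it is a sum of two univariate degree-$\leq d$ pieces its total degree is still at most $d$, and your subsequent ``disjoint groups of variables'' argument is unaffected.
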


\begin{proof}
Without loss of generality, similar to what we did in the
proof of~\cref{lem:H-functions}, we will show the lemma from the perspective of bidder
$n$. Fix an index $j=0,\dots,m_n$ for an effective jump point $z_{n,j}\in
[L_{n,j},R_{n,j}]$, and consider a bid $b$. Then, importing some notation from our proof
of~\cref{lem:H-functions}, the utility of player $n$ when she has a true value of
$v_n=z_{n,j}$ is
$$
u_n(b,\vec{z}_{-n};z_{n,j}) = H_n(b,\vec{z}_{-n}) (z_{n,j}-b),
$$
where $H_n(b,\vec{z}_{-n})$ is the probability that bidder $n$ wins the item. Due
to~\eqref{eq:H_functions_sum} and~\eqref{eq:tblj} (and the fact that $n$ is now constant), it is enough to show that, for
any bidder $i\leq n-1$, the quantities $G_{i,b^-}$ and $g_{i,b}$, defined in the
proof of~\cref{lem:H-functions}, are polynomials of the jump point variables $z_{i',j'}$. Furthermore, to guarantee a maximum degree of $dn$, as in the statement of our lemma, it is enough to show that each of these polynomials are of degree at most $d$: the number of factors in the products appearing as summands in~\eqref{eq:tblj} are at most $n$.   

Recall that  $G_{i,b^-}$ and $g_{i,b}$ are the probabilities (from the perspective of bidder $n$) that bidder $i$ bids below $b$ and exactly $b$, respectively. So, if $b=b_{i,j'}'$ for some index $j'=0,1,\dots,m_i-1$, then
$G_{i,b^-}= F_{n,j}(z_{i,j'})$ 
and 
$g_{i,b}=F_{n,j}(z_{i,j'+1})-F_{n,j}(z_{i,j'})$.
If, on the other hand, $b_{i,j'}'<b<b_{i,j'+1}'$ for an index $j'$, then $G_{i,b^-}= F_{n,j}(z_{i,j'})$ and $g_{i,b}=0$.
In any case, deploying the representation
from~\eqref{eq:piecewise_poly_algorithm_representation}, quantities $G_{i,b^-}$ and
$g_{i,b}$ can indeed be written (in polynomial time with respect to the input of the problem) as polynomials, of degree at most $d$, of the jump point variables.
\end{proof}

As the following lemma suggests, a solution to \refsystem\ corresponds to a BNE of the first-price auction. Note that although the existence of a BNE is guaranteed by \cref{thm:existence}, it might be the case that the equilibrium strategies are \emph{not} consistent with the specific ``preliminary'' guesses of Steps~1 and~2 that gave rise to the particular instantiation of \refsystem\ above. However, there has to exist \emph{some guess} for which the system has a solution, and since we are enumerating over all possible choices, we are guaranteed to find it.
\begin{lemma}
Given the ``guessed'' allocations of jump points to intervals and the ``guessed'' effective jump points and bids, a compatible BNE of the FPA exists if and only if \refsystem\ has a solution. 
\end{lemma}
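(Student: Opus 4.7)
The plan is to observe that the system \eqref{eq:monotonicity_programme}--\eqref{eq:best_response_programme} is essentially a direct transcription of the $\varepsilon=0$ case of the equilibrium characterization (\cref{lem:charepsilonbne}) in the specific variables singled out by the Step-1 and Step-2 guesses. In particular, the variables $z_{i,j}$ play the role of the inverse-bidding jump points $\alpha_i(\cdot)$ for the effective bids, and both directions of the equivalence reduce to applying (or reading off) \cref{lem:charepsilonbne}.

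For the forward direction ($\Rightarrow$), assume a monotone non-decreasing non-overbidding BNE $\vec{\beta}$ exists that is compatible with the guesses. I would define the variables $y_{i,j}$ as the jump points $\alpha_i(b_j)$ of $\vec{\beta}$, and the variables $z_{i,j}$ as the associated effective jump points selected by the Step-2 subsequence $\mu_i$. Then inequality \eqref{eq:monotonicity_programme} is immediate from the fact that the effective jump points have, by construction, all coinciding duplicates collapsed; \eqref{eq:interval_points_programme} holds by Step-1 compatibility; and \eqref{eq:no_overbidding_programme} is the standing no-overbidding assumption. For \eqref{eq:break_points_indiferance_programme} and \eqref{eq:best_response_programme}, observe that each effective jump point $z_{i,j}$ is simultaneously the left endpoint $\alpha_i((b'_{i,j})^-)$ of the non-trivial interval on which bidder $i$ bids $b'_{i,j}$, and the right endpoint $\alpha_i(b'_{i,j-1})$ of the non-trivial interval on which bidder $i$ bids $b'_{i,j-1}$. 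Applying \cref{lem:charepsilonbne} with $\varepsilon=0$ at each such interval yields exactly \eqref{eq:break_points_indiferance_programme} and \eqref{eq:best_response_programme}, respectively.

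For the backward direction ($\Leftarrow$), given a solution $\vec{z}$ to the system I would reconstruct the strategy $\beta_i$ by setting $\beta_i(v) = b'_{i,j-1}$ for $v \in (z_{i,j-1}, z_{i,j}]$. The strict inequalities \eqref{eq:monotonicity_programme} make this well-defined and guarantee that these are precisely its non-trivial intervals, while \eqref{eq:interval_points_programme} together with the choice of subsequence $\mu_i$ ensure compatibility with the Step-1 and Step-2 guesses, and \eqref{eq:no_overbidding_programme} gives non-overbidding. To verify the BNE property, I would invoke the ``only if'' direction of \cref{lem:charepsilonbne} with $\varepsilon=0$: at the endpoints of each non-trivial interval $(z_{i,j-1}, z_{i,j}]$, the two required inequalities are exactly \eqref{eq:best_response_programme} at index $j$ (comparing $b'_{i,j-1}$ to higher bids at the right endpoint $z_{i,j}$) and \eqref{eq:break_points_indiferance_programme} at index $j-1$ (comparing $b'_{i,j-1}$ to lower bids at the left endpoint $z_{i,j-1}$).

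The main --- rather mild --- obstacle is purely bookkeeping: making sure the index shift between the $\alpha_i(b^-)/\alpha_i(b)$ pair in \cref{lem:charepsilonbne} and the $z_{i,j}/b'_{i,j}$ pair in the system matches up, and handling the boundary indices $j=0$ and $j=m_i$ via the conventions $z_{i,0}=0$, $z_{i,m_i}=1$ (so that the non-existent ``bids'' on either end impose no constraint). A minor but essential point worth emphasizing in the write-up is that the quantifiers ``$\forall b < b'_{i,j}$'' and ``$\forall b > b'_{i,j-1}$'' range over the entire bidding set $B$, not merely over effective bids; this is what rules out the \emph{skipped} (non-effective) bids of $B$ as profitable deviations and thereby guarantees that the reconstructed strategy is an honest BNE over $B$.
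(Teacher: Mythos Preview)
Your proposal is correct and follows precisely the paper's approach: the paper's proof is a one-line invocation of \cref{lem:charepsilonbne} with $\varepsilon=0$, identifying $\alpha_i(b^-)=z_{i,j}$ in \eqref{eq:charepsilonbne1} and $\alpha_i(b)=z_{i,j}$ in \eqref{eq:charepsilonbne2}, and you have simply fleshed out the index bookkeeping for both directions. Your observation that the quantifiers range over all of $B$ (not just the effective bids) is a useful clarification that the paper leaves implicit.
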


\begin{proof}
Immediate by the characterization of BNE in~\cref{lem:charepsilonbne} (using $\varepsilon=0$), by setting $\alpha_i(b^-)=z_{i,j}$ in condition~\eqref{eq:charepsilonbne1} and $\alpha_i(b)=z_{i,j}$  in~\eqref{eq:charepsilonbne2}. 
\end{proof}

\subsubsection*{Step 4: ``Projecting'' back to the equilibrium domain}

From Step~3 above, we know that by solving \refsystem\, we can compute an exact BNE of the auction. More precisely, we can compute a $\delta$-approximation to \refsystem\ in time polynomial in $\log(1/\delta)$, by making use of the following result by~\citet[Remark, p.~38]{grigor1988solving}:

\begin{theorem}
For any $\delta \in (0,1]$, it is possible to find a rational $\delta$-approximation to \refsystem\ in time polynomial in $\log(1/\delta)$ and the size of the input.  
\end{theorem}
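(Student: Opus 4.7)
The plan is to reduce the statement to a general-purpose algorithm for the existential theory of the reals, after observing that the relevant complexity parameters of \refsystem\ are all benign. First, I would quantify these parameters: the number of variables is at most $n(|B|-1) = O(1)$ by the standing assumption that $n$ and $|B|$ are fixed; the number of polynomial (in)equalities is $O(n|B|^2) = O(1)$; and by the degree lemma established just before this theorem, every polynomial appearing in \refsystem\ has total degree at most $nd = O(d)$, while its rational coefficients have bit length polynomial in the input.

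Next, I would invoke a sample-point (critical-point) algorithm for semi-algebraic sets, such as the one of \citet{grigor1988solving} (or the Renegar / Basu--Pollack--Roy variants). For a system of $s$ polynomials of total degree $D$ in $k$ variables whose coefficients have bit length $L$, such algorithms produce at least one point in each semi-algebraically connected component of the feasible set in time $(sD)^{O(k)} \cdot \mathrm{poly}(L)$. Each output point is a tuple of algebraic numbers, represented for instance via isolating intervals together with minimal polynomials whose degree and height are polynomially bounded. In our setting, $s = O(1)$, $k = O(1)$, and $D = O(d)$, so the sample-point phase already runs in time polynomial in the input size, independent of $\delta$.

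Finally, to extract a rational $\delta$-approximation I would numerically refine each algebraic coordinate of a chosen sample point. Mignotte-type root-separation bounds guarantee that bisection, or an interval-Newton iteration, converges unambiguously to the intended root; and because the minimal polynomials have polynomially bounded degree and height, each halving of the isolating interval costs polynomial time, while $O(\log(1/\delta))$ halvings suffice to reach precision $\delta$. The composition of the two phases therefore yields a rational $\delta$-approximation in time polynomial in $\log(1/\delta)$ and in the input size.

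The main obstacle is the careful bookkeeping of bit complexity throughout the sample-point construction and the subsequent numerical refinement, especially ensuring that separation bounds dominate the errors accumulated during refinement. This is precisely what is handled in \citet{grigor1988solving}, so treating that result as a black box and merely plugging in our parameters $s, k, D, L$ keeps the argument clean and modular.
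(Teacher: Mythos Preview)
Your proposal is correct and takes essentially the same approach as the paper: the paper does not prove this theorem but simply attributes it to \citet[Remark, p.~38]{grigor1988solving}, and your write-up is a faithful unpacking of why that citation applies here --- constant number of variables and constraints (because $n$ and $|B|$ are fixed), degree $O(d)$ by the preceding lemma, and polynomial bit length --- followed by the standard sample-point plus refinement story. There is nothing to add beyond noting that the paper treats the result entirely as a black box, whereas you have spelled out the parameter accounting and the two-phase structure of the algorithm.
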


By $\delta$-approximation here, we mean a point which is geometrically close, with respect to the max norm, to an exact solution of \refsystem. This is almost a strong approximation to an exact BNE; if we were to translate this point to a feasible strategy profile, it would yield jump points which are close to the jump points of an exact equilibrium strategy. However, these would only \emph{approximately} satisfy the conditions in \refsystem; in particular special care should be taken for the monotonicity and no-overbidding conditions, which we want to be satisfied \emph{exactly}, rather than approximately. 

To remedy this, we must first ``project'' the $\delta$-approximate solution of \refsystem\ back to the equilibrium domain $\dcal$ introduced in the proof of \cref{thm:existence}. Formally, let us denote by $\vec{z}$ the $\delta$-approximate solution of \refsystem, and by $\vec{z}^\ast$ the exact solution which it approximates, so that $\|\vec{z}-\vec{z}^\ast\|_\infty\leq\delta$. 
We compute the projection $\tilde{\vec{z}}$ from $\vec{z}$ as $$\tilde{z}_{i,0}=0\quad\text{and}\quad\tilde{z}_{i,j}=\trunc_{[\max\{b'_{i,j},\tilde{z}_{i,j-1}\},1]}(z_{i,j}).$$
Our next claim is that $\|\tilde{\vec{z}}-\vec{z}^\ast\|_\infty\leq\delta$ as well. This is equivalent to saying that $|\tilde{z}_{i,j}-z^\ast_{i,j}|\leq\delta$ for every $i,j$, which can be done by induction on $j$, the base case $j=0$ being trivial. For $j>0$, observe that $\tilde{z}_{i,j}$ must coincide with one of $b'_{i,j},\tilde{z}_{i,j-1},1,z_{i,j}$.
\begin{itemize}
    \item If $\tilde{z}_{i,j}=z_{i,j}$ then obviously $|\tilde{z}_{i,j}-z^\ast_{i,j}|\leq\delta$.
    \item If  $\tilde{z}_{i,j}=b'_{i,j}$ then we must have had $z_{i,j}\leq b'_{i,j}$. Since $z^\ast_{i,j}\geq b'_{i,j}$ and $|z_{i,j}-z^\ast_{i,j}|\leq\delta$, we must also have $|\tilde{z}_{i,j}-z^\ast_{i,j}|\leq\delta$.
    \item Similarly, if $\tilde{z}_{i,j}=1$ then we must have had $z_{i,j}\geq 1$. Since $z^\ast_{i,j}\leq 1$ and $|z_{i,j}-z^\ast_{i,j}|\leq\delta$, we must also have $|\tilde{z}_{i,j}-z^\ast_{i,j}|\leq\delta$.
    \item Finally, suppose $\tilde{z}_{i,j}=\tilde{z}_{i,j-1}$. Then we must have had $z_{i,j}\leq \tilde{z}_{i,j-1}$. Using the induction hypothesis, we have that $\tilde{z}_{i,j-1}\leq z^\ast_{i,j-1}+\delta\leq z^\ast_{i,j}+\delta$; thus we also have $|\tilde{z}_{i,j}-z^\ast_{i,j}|\leq\delta$.
\end{itemize}

Therefore, $\tilde{\vec{z}}$ constitutes a valid monotone non-decreasing, non-overbidding joint strategy profile, which is within distance $\delta$ of the exact BNE $\vec{z}^\ast$. In other words, $\tilde{\vec{z}}$ is a valid joint strategy profile that is a \emph{strong} $\delta$-approximation to a BNE.

Finally, we need to show that if $\delta$ is chosen to be sufficiently small, then any strong $\delta$-approximate BNE is also an $\varepsilon$-BNE of the auction. For this, we use the fact that the family of piecewise polynomial distributions is polynomially continuous (see \cref{app:inputs} for the formal definition). Indeed, given such a piecewise polynomial distribution, it is easy to see that it must be Lipschitz-continuous, and, crucially, we can in polynomial time compute a corresponding Lipschitz-constant. (Note that any polynomial function $F(z_j)=a_0+a_1z_j+\ldots+a_dz_j^d$ is $L$-Lipschitz-continuous over $[0,1]$, where $L = |a_1|+2|a_2|+\ldots d|a_d|$.)
With this observation in hand, we can now use \cref{lem:util-poly-cont} to efficiently construct $\delta > 0$ sufficiently small such that for all $i \in N$, $b \in B$ and $v_i \in [0,1]$
$$\|\vec{z} - \vec{z}'\|_\infty \leq \delta \implies |u_i(b,\vec{z}_{-i};v_i) - u_i(b,\vec{z}_{-i}';v_i)| \leq \varepsilon/2.$$
Since $\tilde{\vec{z}}$ is a strong $\delta$-approximation, i.e., $\|\tilde{\vec{z}} - \vec{z}^\ast\|_\infty \leq \delta$, it immediately follows that inequalities \eqref{eq:break_points_indiferance_programme} and \eqref{eq:best_response_programme} of the System are satisfied with additive error at most $\varepsilon$. Using \cref{lem:charepsilonbne}, it immediately follows that $\tilde{\vec{z}}$ is an $\varepsilon$-BNE.

As a result, to summarize, given $\varepsilon > 0$ and the problem instance, we can in polynomial time compute $\delta > 0$ such that running the algorithm described in this section is guaranteed to find an $\varepsilon$-BNE. Since the number of agents and bids is fixed, and the algorithm runs in polynomial time in $\log(1/\delta)$ and the instance size, \cref{thm:positive} follows.

\section{Conclusion and Future Directions}\label{sec:conclusion}

In this paper, we have classified the complexity of computing a Bayes-Nash equilibrium of the first-price auction with subjective priors, by proving that it is PPAD-complete. As we explained in the introduction, our result contributes fundamentally to our understanding of this celebrated auction format, as well as the literature on total search problems and TFNP. The challenging next step is to move towards the special case of the common priors assumption, where the value distribution of each bidder is common knowledge ($F_{i,j} = F_{i',j}$ for all $i,i'$). Our PPAD-membership result obviously already extends to this case, as it is a special case of the subjective priors setting. The really intriguing question is to extend our PPAD-hardness result to this case as well. To this end, we state the following open problem, which we consider to be one of the most important problems both in computational game theory and in the literature of total search problems.   

\begin{opproblem}
What is the complexity of computing an $\varepsilon$-Bayes-Nash equilibrium of the first-price auction with common priors? Is it \ppad-complete? Is it polynomial-time solvable? Or could it be complete for some other (smaller) sub-class of \ppad?
\end{opproblem}

\noindent A potential candidate for such a smaller class could be the class $\ppad \cap \pls$, which was recently shown by \citet{fearnley2021complexity} and \citet{babichenko2020settling} to capture the complexity of interesting problems related to optimization via gradient descent, and computing mixed Nash equilibria in congestion games \citep{rosenthal1973class} respectively. The class \pls\ was introduced by \citet{johnson1988easy} and captures the computation of local minima of some objective function, and notably characterizes the complexity of finding \emph{pure} Nash equilibria in congestion games \citep{fabrikant2004complexity}.

A possible ``intermediate'' step before settling the open problem above for common priors would be to consider priors that are still subjective, but \emph{consistent}, meaning that there exists some common prior distribution $P$ (a ``ground truth'') over the set of value profiles, such that each bidder's subjective prior distribution given her own value can be directly computed from $P$. As \citet[Sec.~2.8]{myerson2013game} argues, when the subjective priors are consistent, the differences in beliefs can be explained by differences in information, rather than differences in opinion (which are captured even by inconsistent beliefs). In settings like the First-Price Auction, it is meaningful to assume that beliefs are often formed based on observing public signals (e.g., the bidding history of the competitors), possibly with varying degrees of information, and hence subjective priors are quite meaningful.  

Another very interesting question is to study the case where both the value distributions and the bidding space are discrete. A special case of this setting was studied by \citet{escamocher2009existence}, but they only obtained conclusive results for the case of two bidders with bi-valued distributions. We believe that some of our technical contributions (e.g., the computation of the best response functions or the gadgets used in the PPAD-hardness proof) can be adapted to show similar results for that case as well; we leave the details for future work. Finally, it would be very interesting to identify further (in)tractable special cases for our problem; for example, can we obtain a positive result similar to \cref{thm:positive} for more general value distributions? Do the hardness results also hold in the setting where the number of bidders is constant, but the bidding space is allowed to be large?

\clearpage

\appendix

\section*{APPENDIX}

\section{The Input Model for the Value Distributions}\label{app:inputs}

Let $\mathcal{F}$ be a class of cumulative distribution functions on the interval $[0,1]$. In other words, for any $F \in \mathcal{F}$ and any $x \in [0,1]$, $F(x)$ is the probability of the interval $[0,x]$ according to $F$. For every $F \in \mathcal{F}$, let $\sz(F)$ denote the representation size of $F$, i.e., the number of bits needed to represent $F$. (Here we implicitly assume that some representation scheme is given in the definition of $\mathcal{F}$.)

For any rational number $x$, let $\sz(x)$ denote the representation size of $x$, namely the length of the binary representation of the denominator and numerator of $x$. The definitions in this section are based on the corresponding notions introduced by \citet{etessami2010complexity}.

\begin{definition}\label{def:polycompfun}
A class of cumulative distribution functions $\mathcal{F}$ is \emph{polynomially computable}, if there exists some polynomial $p$ such that for all $F \in \mathcal{F}$ and all rational $x \in [0,1]$, $F(x)$ can be computed in time $p(\sz(F)+\sz(x))$.
\end{definition}

\noindent In order to guarantee the existence of approximate equilibria with polynomial representation size we add an extra requirement on $\mathcal{F}$.

\begin{definition}
A class of cumulative distribution functions $\mathcal{F}$ is \emph{polynomially continuous}, if there exists some polynomial $q$ such that for all $F \in \mathcal{F}$ and all rational $\varepsilon > 0$, there exists rational $\delta > 0$ with $\sz(\delta) \leq q(\sz(F)+\sz(\varepsilon))$ such that
$$|F(x)-F(y)| \leq \varepsilon$$
for all $x,y \in [0,1]$ with $|x-y| \leq \delta$.
\end{definition}

\noindent Note that distribution functions given by piecewise-constant density functions on the interval $[0,1]$ are an example of such a class of polynomially-computable and polynomially-continuous $\mathcal{F}$. The density functions are represented explicitly, i.e., as a list of ``blocks'', where for every block we give the sub-interval of $[0,1]$ that it occupies and the height of the block.

\section{Impossibilities for Implicit Bidding Spaces}\label{app:bidding-space}

In \cref{sec:prelims}, we emphasized that it is necessary for our computational problem to have the bidding space explicitly as part of the input, as otherwise it is hard to even compute the best responses of the auction. We provide more details on this topic in this section.

If the bidding space $B \subseteq [0,1]$ is discrete but represented in some implicit way, this immediately gives rise to some computational obstacles. When we proved in \Cref{sec:bestresponses} that best-responses could be computed efficiently, our procedure essentially goes over all possible bids, and checks which bid achieves the highest utility. If the bidding space is large (say, exponential in the input size), this approach is no longer efficient. In fact, in this subsection we will prove that, essentially, one cannot hope to find a better approach; in particular, we provide lower bounds from an information-theoretical as well as a computational perspective.

For simplicity, in this subsection we will assume that the bidding space is the set of all rational numbers in $[0,1]$ that have denominator $2^m$,
\[B=\left\{\frac{p}{2^m}\fwh{0\leq p\leq 2^m}\right\},\]
where $m$ is part of the input and given in unary representation. Notice that each bid can then be encoded by a binary string of size $m$ (with the exception of the bid $1$, which can be encoded with $m+1$ bits). We will also assume that there are only two bidders, each having a valuation over the unit interval, $V=[0,1]$. This is arguably the simplest natural example one could consider.

As we explained in \cref{sec:prelims} we can identify a strategy by its set of jump points
$$\alpha_i(b)=\sup\{v\fwh{\beta_i(v)\leq b}\}.$$
Intuitively, $\alpha_i(b)$ is the largest value for which player $i$ would bid $b$ or lower. At this point we have two options on how to represent the functions $\alpha_i$:
\begin{itemize}
    \item[-] \textbf{Black-box model}: in the black-box model we have access to an oracle that, given a bid $b\in B$, returns the corresponding jump point $\alpha_i(b)$.
    \item[-] \textbf{White-box model}: in the white-box model we have an algorithm that, given a bid $b\in B$, computes the jump point $\alpha_i(b)$. For example, this could be given by a circuit. Alternatively, we can assume that $\alpha_i$ is a function computable in polynomial time.
\end{itemize}
In both cases we need to describe how the jump points themselves are represented. For simplicity, we just assume that all jump points are rational quantities (as we are going for a hardness result).

Besides the inverse bidding strategies, we also need to represent the cumulative density functions $F_i:[0,1]\rightarrow[0,1]$. Here similar considerations apply, or we can use the notions in \cref{app:inputs}.

Now, given $F_i$ and $\alpha_i$, an important quantity of interest is
$$\Pi_i(b)=F_i(\alpha_i(b));$$
since $\alpha_i(b)$ is the largest value for which bidder $i$ will bid $b$ or lower, and $F_i(\alpha_i(b))$ is the probability that bidder $i$'s valuation is at most this value, it turns out that $\Pi_i(b)$ can be very naturally interpreted as the probability that player $i$ bids on or below $b$. Notice that we can then get the probability that player $i$ bids exactly $b$ as $\Pi_i(b)-\Pi_i(b^-)$, where $b^-$ is the bid immediately below $b$, for $b>0$. Regarding the computation of $\Pi_i$, it will be either a black-box or white-box computation, depending on whether we have assumed $\alpha_i$ and $F_i$ to be given in a black-box or white-box fashion.

As we already mentioned, in our reduction we will consider only two bidders. We shall fix the second bidder's bidding strategy and cumulative distribution function throughout the reduction, and look at the best-response of bidder $1$. For ease of notation, we will drop the subscript $2$ and write $\alpha,F,\Pi$ instead of $\alpha_2,F_2,\Pi_2$; there will be no confusion since we will never look at bidder $1$'s valuation distribution or bidding strategy. Given a bid $b$, we can express the probability that bidder $1$ wins the auction when bidding $b$, denoted by $H(b)$, via
\begin{align*}
H(0)&=\frac{1}{2}\Pi(0);\\
H(b)&=\Pi(b^-)+\frac{1}{2}\left(\Pi(b)-\Pi(b^-)\right)=\frac{1}{2}\left(\Pi(b^-)+\Pi(b)\right), \qquad\text{for}\;\; b>0.
\end{align*}
Finally, we wish to maximize the utility of bidder $1$; when she has a valuation of $v$ and bids $b$, this is given by $u(v,b)=H(b)(v-b)$.

Now that we have given the preliminaries of our reduction, let us go into the construction. Let us fix some $m\geq 3$ and define a baseline instance. We will want to choose a bidding strategy $\alpha$ and distribution $F$ for bidder $2$, so that the resulting function $\Pi(\cdot)$ is given as follows.

\begin{align*}
\Pi(0)=\Pi(2^{-m})&=\Pi(2\cdot 2^{-m})=0;\\
\Pi(b-2^{-m})=\Pi(b)&=\Pi(b+2^{-m})=\Pi(b+2\cdot2^{-m})\\
&=\frac{1}{2(1-b)},\quad\text{for}\quad b=p\cdot2^{-m},\,p\,\text{a multiple of }4,\,\text{and}\,b\leq\frac{1}{2};\\
\Pi(b)&=1\quad\text{for}\quad b\geq\frac{1}{2}.
\end{align*}

Our function $\Pi$ essentially corresponds to a discrete probability distribution on the bids with the following properties. First, it only has mass at points of the form $(4k-1)\cdot 2^{-m}$, for positive integer $k$, where $4k-1<2^{m-1}$. Second, the mass at $3\cdot 2^{-m}$ equals $\frac{1}{2(1-4\cdot 2^{-m})}$, whereas for $k\geq 2$ the mass at $(4k-1)\cdot 2^{-m}$ equals $\frac{1}{2(1-4k\cdot 2^{-m})}-\frac{1}{2(1-(4k-1)\cdot 2^{-m})}$. To yield the desired $\Pi$, we can for example take $F(x)=x$, corresponding to the uniform distribution on $[0,1]$, and $\alpha(b)=\Pi(b)$ defined as above.

Given the probability distribution $\Pi$ on the bids of player $2$, we are interested in computing the best-response strategy for player $1$. In fact, we will do so for the case that player $1$'s valuation equals $1$. If we can show it is hard to compute the best-response for this value, then it follows that it is hard to compute the best-response strategy function in general. Using the definition of $H(b)$, we can write
\[H(0)=0;\quad H(2^{-m})=0;\quad H(2\cdot 2^{-m})=0;\quad H(3\cdot 2^{-m})\frac{1}{4(1-4\cdot 2^{-m})};\]
for $b=p\cdot2^{-m}$, $p$ a multiple of $4$, and $b\leq\frac{1}{2}-4\cdot 2^{-m}$,
\[H(b)=\frac{1}{2(1-b)};\quad H(b+2^{-m})=\frac{1}{2(1-b)};\quad H(b+2\cdot2^{-m})=\frac{1}{2(1-b)};\]
\[H(b+3\cdot 2^{-m})=\frac{1}{4(1-b)}+\frac{1}{4(1-b-4\cdot 2^{-m})};\]
finally, for $b\geq 1/2$,
\[H(b)=1.\]

\begin{figure}[ht]\begin{center}
\begin{tikzpicture}
\tikzmath{\eps=1/32;\xsc=10;\ysc=4;}
\draw[->] (0,0) -- (\xsc*0.75,\ysc*0);
\draw[->] (0,0) -- (\xsc*0,\ysc*1.25);
\node[below] at (\xsc*0.75,\ysc*0) {$b$};
\node[below] at (\xsc*0.5,\ysc*0) {$1/2$};
\node[below] at (\xsc*22*\eps,\ysc*0) {$1$};
\node[left] at (\xsc*0,\ysc*1.25) {$H(b)$};
\node[left] at (\xsc*0,\ysc*1) {$1$};
\coordinate (p00) at (\xsc*0*\eps,\ysc*0);
\coordinate (p01) at (\xsc*1*\eps,\ysc*0);
\coordinate (p02) at (\xsc*2*\eps,\ysc*0);
\coordinate (p03) at (\xsc*3*\eps,{\ysc*1/(4*(1-4*\eps))});
\coordinate (p04) at (\xsc*4*\eps,{\ysc*1/(2*(1-4*\eps))});
\coordinate (p05) at (\xsc*5*\eps,{\ysc*1/(2*(1-4*\eps))});
\coordinate (p06) at (\xsc*6*\eps,{\ysc*1/(2*(1-4*\eps))});
\coordinate (p07) at (\xsc*7*\eps,{\ysc*1/(4*(1-4*\eps))+\ysc*1/(4*(1-8*\eps))});
\coordinate (p08) at (\xsc*8*\eps,{\ysc*1/(2*(1-8*\eps))});
\coordinate (p09) at (\xsc*9*\eps,{\ysc*1/(2*(1-8*\eps))});
\coordinate (p10) at (\xsc*10*\eps,{\ysc*1/(2*(1-8*\eps))});
\coordinate (p11) at (\xsc*11*\eps,{\ysc*1/(4*(1-8*\eps))+\ysc*1/(4*(1-12*\eps))});
\coordinate (p12) at (\xsc*12*\eps,{\ysc*1/(2*(1-12*\eps))});
\coordinate (p13) at (\xsc*13*\eps,{\ysc*1/(2*(1-12*\eps))});
\coordinate (p14) at (\xsc*14*\eps,{\ysc*1/(2*(1-12*\eps))});
\coordinate (p15) at (\xsc*15*\eps,{\ysc*1/(4*(1-12*\eps))+\ysc*1/(4*(1-16*\eps))});
\coordinate (p16) at (\xsc*16*\eps,{\ysc*1/(2*(1-16*\eps))});
\coordinate (p17) at (\xsc*17*\eps,{\ysc*1/(2*(1-16*\eps))});
\coordinate (p18) at (\xsc*18*\eps,{\ysc*1/(2*(1-16*\eps))});
\coordinate (p19) at (\xsc*19*\eps,{\ysc*1/(2*(1-16*\eps))});
\coordinate (p22) at (\xsc*22*\eps,{\ysc*1/(2*(1-16*\eps))});
\draw[blue,thick] (p00) -- (p01) -- (p02) -- (p03) -- (p04) -- (p05) -- (p06) -- (p07) -- (p08);
\draw[blue,dotted,thick] (p08) -- (p09);
\draw[blue,dotted,thick] (p11) -- (p12);
\draw[blue,thick] (p12) -- (p13) -- (p14) -- (p15) -- (p16) -- (p17) -- (p18) -- (p19);
\draw[blue,dotted,thick] (p19) -- (p22);
\draw[thick,fill=blue] (p00) circle (1.5pt);
\draw[thick,fill=blue] (p01) circle (1.5pt);
\draw[thick,fill=blue] (p02) circle (1.5pt);
\draw[thick,fill=blue] (p03) circle (1.5pt);
\draw[thick,fill=blue] (p04) circle (1.5pt);
\draw[thick,fill=blue] (p05) circle (1.5pt);
\draw[thick,fill=blue] (p06) circle (1.5pt);
\draw[thick,fill=blue] (p07) circle (1.5pt);
\draw[thick,fill=blue] (p08) circle (1.5pt);
\draw[thick,fill=blue] (p12) circle (1.5pt);
\draw[thick,fill=blue] (p13) circle (1.5pt);
\draw[thick,fill=blue] (p14) circle (1.5pt);
\draw[thick,fill=blue] (p15) circle (1.5pt);
\draw[thick,fill=blue] (p16) circle (1.5pt);
\draw[thick,fill=blue] (p17) circle (1.5pt);
\draw[thick,fill=blue] (p18) circle (1.5pt);
\draw[thick,fill=blue] (p22) circle (1.5pt);
\draw[dotted] (\xsc*16*\eps,0) -- (\xsc*16*\eps,\ysc*1);
\draw[dotted] (\xsc*22*\eps,0) -- (\xsc*22*\eps,\ysc*1);
\draw[dotted] (0,\ysc*1) -- (\xsc*16*\eps,\ysc*1);
\draw[black,dashed,domain=0:{9*\eps}] plot (\xsc*\x,{\ysc*1/(2*(1-\x))});
\draw[black,dotted,domain={9*\eps}:{11*\eps}] plot (\xsc*\x,{\ysc*1/(2*(1-\x))});
\draw[black,dashed,domain={11*\eps}:{16*\eps}] plot (\xsc*\x,{\ysc*1/(2*(1-\x))});
\end{tikzpicture}
\end{center}\caption{Depiction of the baseline construction. $H(b)$ denotes the probability of player $1$ winning the auction when bidding $b$, and is represented by the blue circles. We also plot in dashed line the auxiliary function $x\mapsto\frac{1}{2(1-x)}$.\label{fig:hfunctionbaseline}}\end{figure}
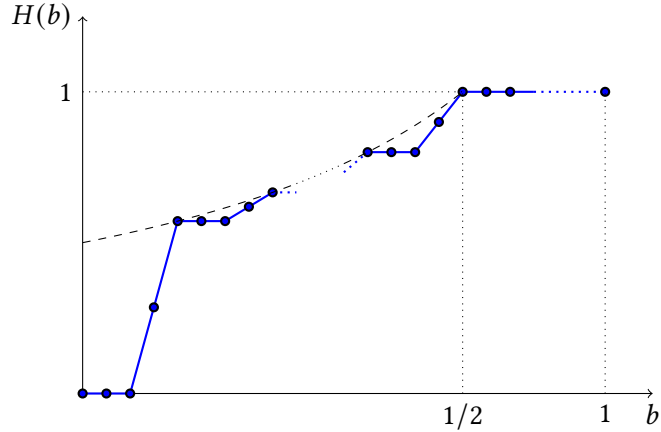

A graphical depiction of $H(b)$ can be found in \cref{fig:hfunctionbaseline}. It is not hard to check that, for every bid $b$, we have that
\[H(b)\leq \frac{1}{2(1-b)};\]
moreover, this is achieved with equality for every bid of the form $b=p\cdot 2^{-m}$, for $p$ a multiple of $4$, as long as $b\leq 1/2$. Therefore, the maximum utility that player $1$ can achieve is $1/2$, and all such multiple-of-four bids are equally best-responses.

Now that we understand the baseline instance, we can construct a family of ``perturbed'' instances that will be used in our reduction. For a fixed subset $S\subseteq\{0,1\}^{m-3}$ of binary strings of size $m-3$, we will define a corresponding $\Pi_S$, $H_S$ as follows. $\Pi_S$ and $H_S$ coincide with $\Pi$ and $H$ on every bid $b\geq 1/2$. For the bids smaller than $1/2$, we can write their binary expansion as a sequence of $m$ bits, the first of which is $0$. For example, if $m=4$, then the bid $3/2^4$ can be written as $0011$. For every $x\in\{0,1\}^{m-3}$, if $x\not\in S$, then $\Pi_S$ and $H_S$ coincide with $\Pi$ and $H$ for bids of the form $0xb_1b_2$; in particular, if $b=x\cdot 2^{m-2}$,
\begin{itemize}
    \item if $x=0\cdots0$, then we have $\Pi_S(0)=0$, $\Pi_S(2^{-m})=0$, $\Pi_S(2\cdot 2^{-m})=0$, $\Pi_S(3\cdot 2^{-m})=\frac{1}{2(1-4\cdot 2^{-m})}$;
    \item otherwise, we have $\Pi_S(b)=\frac{1}{2(1-b)}$, $\Pi_S(b+2^{-m})=\frac{1}{2(1-b)}$, $\Pi_S(b+2\cdot 2^{-m})=\frac{1}{2(1-b)}$, $\Pi_S(b+3\cdot 2^{-m})=\frac{1}{2(1-b-4\cdot 2^{-m})}$.
\end{itemize}

On the other hand, for $x\in S$ and $b=x\cdot 2^{-m+2}$, $\Pi_S$ is obtained from $\Pi$ by shifting the mass at $b+3\cdot 2^{-m}$ to $b+2^{-m}$; in other words,
\begin{itemize}
    \item if $x=0\cdots0$, then we have $\Pi_S(0)=0$, $\Pi_S(2^{-m})=\frac{1}{2(1-4\cdot 2^{-m})}$, $\Pi_S(2\cdot 2^{-m})=\frac{1}{2(1-4\cdot 2^{-m})}$, $\Pi_S(3\cdot 2^{-m})=\frac{1}{2(1-4\cdot 2^{-m})}$;
    \item otherwise, we have $\Pi_S(b)=\frac{1}{2(1-b)}$, $\Pi_S(b+2^{-m})=\frac{1}{2(1-b-4\cdot 2^{-m})}$, $\Pi_S(b+2\cdot 2^{-m})=\frac{1}{2(1-b-4\cdot 2^{-m})}$, $\Pi_S(b+3\cdot 2^{-m})=\frac{1}{2(1-b-4\cdot 2^{-m})}$.
\end{itemize}
This gives rise to a change in $H_S$ as well:
\begin{itemize}
    \item if $x=0\cdots0$, then we have $H_S(0)=0$, $H_S(2^{-m})=\frac{1}{4(1-4\cdot 2^{-m})}$, $H_S(2\cdot 2^{-m})=\frac{1}{2(1-4\cdot 2^{-m})}$, $H_S(3\cdot 2^{-m})=\frac{1}{2(1-4\cdot 2^{-m})}$;
    \item otherwise, we have $H_S(b)=\frac{1}{2(1-b)}$, $H_S(b+2^{-m})=\frac{1}{4(1-b)}+\frac{1}{4(1-b-4\cdot 2^{-m})}$, $H_S(b+2\cdot 2^{-m})=\frac{1}{2(1-b-4\cdot 2^{-m})}$, $H_S(b+3\cdot 2^{-m})=\frac{1}{2(1-b-4\cdot 2^{-m})}$. 
\end{itemize}

Similarly as above, we can define an $\alpha_S$ for player $2$ that give rise to this choice of $\Pi_S$ and $H_S$. The net effect of our construction is that, for $x\not\in S$, the bids of the form $0x00$, $0x01$, $0x10$ and $0x11$ achieve the same utility in both the baseline and the perturbed instances (and thus, at most $1/2$); but if $x\in S$, the bids of the form $0x01$, $0x10$, $0x11$ now achieve higher utility; in fact, if $x\in S$, then bidding $0x10$ achieves a utility strictly higher than $1/2$. Writing $b=x\cdot 2^{-m+2}$, we can see that
\[u(1,b+2\cdot2^{-m})=\frac{1-b-2\cdot 2^{-m}}{2(1-b-4\cdot 2^{-m})}>\frac{1}{2}.\]
We want to find an $\varepsilon$ that bounds the utility gap, in order to show that computing $\varepsilon$-best-responses is hard. Using the trivial bound that $1-b-4\cdot 2^{-m}<1$, it turns out that $\varepsilon\leq 2^{-m}$ is small enough:
\[u(1,b+2\cdot2^{-m})-\frac{1}{2}=\frac{1-b-2\cdot 2^{-m}}{2(1-b-4\cdot 2^{-m})}-\frac{1-b-4\cdot 2^{-m}}{2(1-b-4\cdot 2^{-m})}=\frac{2\cdot 2^{-m}}{2(1-b-4\cdot 2^{-m})}>2^{-m}.\]

\begin{figure}[ht]\begin{center}
\begin{tikzpicture}
\tikzmath{\eps=1/32;\xsc=40;\ysc=40;}
\coordinate (p04) at (\xsc*4*\eps,{\ysc*1/(2*(1-4*\eps))});
\coordinate (p05) at (\xsc*5*\eps,{\ysc*1/(2*(1-4*\eps))});
\coordinate (p06) at (\xsc*6*\eps,{\ysc*1/(2*(1-4*\eps))});
\coordinate (p07) at (\xsc*7*\eps,{\ysc*1/(4*(1-4*\eps))+\ysc*1/(4*(1-8*\eps))});
\coordinate (p08) at (\xsc*8*\eps,{\ysc*1/(2*(1-8*\eps))});
\coordinate (q05) at (\xsc*5*\eps,{\ysc*1/(4*(1-4*\eps))+\ysc*1/(4*(1-8*\eps))});
\coordinate (q06) at (\xsc*6*\eps,{\ysc*1/(2*(1-8*\eps))});
\coordinate (q07) at (\xsc*7*\eps,{\ysc*1/(2*(1-8*\eps))});
\node[below] at (\xsc*4*\eps,{\ysc*1/(2*(1-4*\eps))}) {$0x00$};
\node[below] at (\xsc*5*\eps,{\ysc*1/(2*(1-4*\eps))}) {$0x01$};
\node[below] at (\xsc*6*\eps,{\ysc*1/(2*(1-4*\eps))}) {$0x10$};
\node[below] at (\xsc*7*\eps,{\ysc*1/(2*(1-4*\eps))}) {$0x11$};
\node[below] at (\xsc*8*\eps,{\ysc*1/(2*(1-4*\eps))}) {$0x^+00$};
\node[right] at (p07) {$H(b)$};
\node[left] at (q05) {$H_S(b)$};
\draw[blue,thick] (p04) -- (p05) -- (p06) -- (p07) -- (p08);
\draw[blue,thick] (p04) -- (q05) -- (q06) -- (q07) -- (p08);
\draw[thick,fill=blue] (p04) circle (1.5pt);
\draw[thick,fill=blue] (p05) circle (1.5pt);
\draw[thick,fill=blue] (p06) circle (1.5pt);
\draw[thick,fill=blue] (p07) circle (1.5pt);
\draw[thick,fill=blue] (p08) circle (1.5pt);
\draw[thick,fill=blue] (q05) circle (1.5pt);
\draw[thick,fill=blue] (q06) circle (1.5pt);
\draw[thick,fill=blue] (q07) circle (1.5pt);
\draw[black,dashed,domain={4*\eps}:{8*\eps}] plot (\xsc*\x,{\ysc*1/(2*(1-\x))});
\end{tikzpicture}
\end{center}\caption{Depiction of the baseline construction. $H_S(b)$ denotes the probability of player $1$ winning the auction when bidding $b$, and is represented by the upper blue circles. These are higher than the probabilities in $H(b)$ (lower blue circles), and go above the function $x\mapsto\frac{1}{2(1-x)}$ (dashed line). Here $0x^+00$ represents the binary string immediately after $0x11$.\label{fig:hsfunction}}\end{figure}
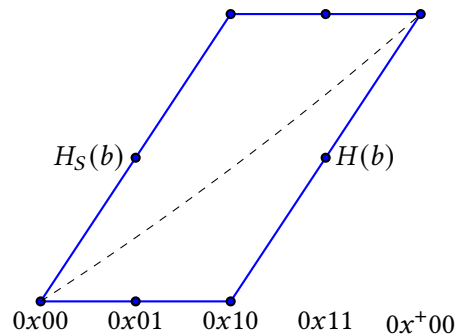

We can depict the change from function $H$ to function $H_S$ as in \cref{fig:hsfunction}. To conclude this section, we just need to prove that one cannot distinguish between $H$ and $H_S$ unless we explicitly compute utilities for a large number possible bids.

\begin{theorem}
Consider a FPA where the bidding space corresponds to all dyadic rationals of order $m$, and bidding strategies are represented implicitly according to the black-box model. Then, any algorithm that computes $\varepsilon$-best-responses, for $\varepsilon\leq 2^{-m}$, makes an exponential number of queries in the worst-case.
\end{theorem}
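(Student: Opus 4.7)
The plan is to employ an adversary argument on the family of instances already constructed just before the theorem statement: the baseline (with probabilities $\Pi$) together with the $2^{m-3}$ singleton-perturbed instances $\Pi_{\{x\}}$, $x\in\{0,1\}^{m-3}$. All instances may be realized with the same, trivial, CDF for bidder $2$ (say the uniform distribution on $[0,1]$), so they differ only in $\alpha_2$, i.e., only in the answers the oracle returns. I will show it is hard to correctly compute the $\varepsilon$-best-response value at $v_1=1$, for $\varepsilon=2^{-m}$, which is sufficient since a full $\varepsilon$-best-response function must in particular specify a bid at $v_1=1$.

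First, I would observe that in the black-box model an oracle call at a bid $p\cdot 2^{-m}$ can distinguish the baseline from $\Pi_{\{x\}}$ \emph{only} when $p\in\{4x+1,4x+2\}$. By construction the singleton perturbation moves mass from $(4x+3)\cdot 2^{-m}$ onto $(4x+1)\cdot 2^{-m}$, so $\Pi_{\{x\}}$ and $\Pi$ agree on every other bid $p\cdot 2^{-m}$; in particular queries at $p\in\{4x,4x+3\}$, and all queries at bids $\geq 1/2$ (where $\Pi$ is already identically $1$), return identical values of $\alpha_2$ in both instances. Thus a single oracle call distinguishes the baseline from at most \emph{one} singleton-instance $\Pi_{\{x\}}$.

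Next, I would characterize the possible $\varepsilon$-best-responses at $v_1=1$. In the baseline the maximum utility is exactly $1/2$, whereas in $\Pi_{\{x\}}$ the excerpt already establishes $u_{\{x\}}(1,(4x+2)\cdot 2^{-m})>1/2+2^{-m}$, so any $\varepsilon$-best-response $\beta$ in the instance $\Pi_{\{x\}}$ must satisfy $u_{\{x\}}(1,\beta)>1/2$. A direct case analysis of the four types of bids shows that in $\Pi_{\{x\}}$, utility strictly above $1/2$ is achieved only within the block $\{(4x+1)\cdot 2^{-m},(4x+2)\cdot 2^{-m},(4x+3)\cdot 2^{-m}\}$: bids $4k\cdot 2^{-m}$ for any $k\geq 1$ give exactly $1/2$ (and $0$ for $k=0$); any bid $(4k+j)\cdot 2^{-m}$ with $k\neq x$ is unaffected by the perturbation and inherits the baseline upper bound of $1/2$; and any bid $\geq 1/2$ gives utility $\leq 1/2$. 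Consequently a valid $\varepsilon$-best-response for $\Pi_{\{x\}}$ must lie in the length-three block associated with $x$.

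Finally, I assemble the adversary. The adversary answers every oracle query as in the baseline. Suppose the algorithm makes $Q$ queries and produces a bid $\beta=p\cdot 2^{-m}$; let $D\subseteq\{0,1,\dots,2^{m-3}-1\}$ collect all $x$ for which either $(4x+1)\cdot 2^{-m}$ or $(4x+2)\cdot 2^{-m}$ was queried, so that $|D|\leq Q$, and let $x_\beta=\lfloor p/4\rfloor$ denote the block index of $\beta$ (left undefined if $\beta\geq 1/2$). If $Q<2^{m-3}-1$ then $|D|+1<2^{m-3}$, so we can pick $x^*\in\{0,1\}^{m-3}\setminus(D\cup\{x_\beta\})$. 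The transcript is consistent with both the baseline and $\Pi_{\{x^*\}}$ (since no distinguishing query for $x^*$ was made), so the adversary commits to $\Pi_{\{x^*\}}$; by the previous paragraph $\beta$ cannot be an $\varepsilon$-best-response there, because $\beta$ lies outside the block associated with $x^*$. Hence the algorithm must make at least $2^{m-3}-1$ queries, which is exponential in $m$. The only nontrivial technical step, and the main obstacle, is the block-level case analysis in the third paragraph, which is mostly a routine extension of the utility estimates already carried out in the excerpt.
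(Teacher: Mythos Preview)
Your proposal is correct and follows essentially the same adversary/indistinguishability argument as the paper: both run the algorithm against the baseline, note that too few queries leave some block $x^\ast$ whose perturbed variant is indistinguishable from the transcript, and conclude that the output bid cannot be an $\varepsilon$-best-response there. Your version is slightly sharper in observing that the oracle (which returns $\alpha_2=\Pi$) distinguishes $\Pi$ from $\Pi_{\{x\}}$ only at $p\in\{4x+1,4x+2\}$ (the paper uses the larger set $\{4x+1,4x+2,4x+3\}$, which is where $H$ differs), and you make the block-level case analysis for the output bid explicit, but neither change affects the final $2^{m-3}-1$ bound.
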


\begin{proof}
Let $A$ be an algorithm that computes exact best-responses. Fix an integer $m\geq 3$, the number of players to be $2$, and run the algorithm $A$ for the baseline instance, where player $2$ bids according to function $H$, and player $1$'s value is fixed to $1$. Suppose that $A$ makes less than $2^{m-3}-1$ queries; let $Q$ be the set of queries made by $A$, and $b$ be the bid returned by $A$. 
Next, notice that there are $2^{m-3}$ disjoint sets of bids of the form $\{0x01,0x10,0x11\}$, one for each $x\in\{0,1\}^{m-3}$. Since $A$ makes less than $2^{m-3}-1$ queries, it follows that there must exist some $x$ for which none of $0x01$, $0x10$, $0x11$ belongs to $Q\cup\{b\}$. Now consider the perturbed instance $H_x$, that is, we take $S=\{x\}$. Notice that $H_S$ and $\Pi_S$ coincide with $H$ and $\Pi$ everywhere except at $\{0x01,0x10,0x11\}$; therefore, running $A$ on the instance $H_x$ would produce the same answers on all queries, and so would produce the same best-response bid of $b$. However, by our construction we know that bidding $b$ gives an utility of at most $1/2$, whereas bidding according to the string $0x1$ gives an utility strictly higher than $1/2+\varepsilon$. Hence, the algorithm would not give a correct answer. We conclude that any algorithm for computing best-responses would have to make at least $2^{m-3}-1$ queries.
\end{proof}

\begin{theorem}
Consider a FPA where the bidding space corresponds to all dyadic rationals of order $m$, and bidding strategies are represented implicitly according to the white-box model. Then, computing $\varepsilon$-best-responses, for exponentially small $\varepsilon$, is an NP-hard optimization problem.
\end{theorem}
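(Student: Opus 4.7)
The plan is to reduce \textsc{SAT} to the $\varepsilon$-best-response computation problem in the white-box model, by reusing the perturbed baseline instance from the preceding black-box lower bound and embedding a \textsc{SAT} instance directly into the circuit that implements bidder $2$'s strategy. Given a \textsc{SAT} formula $\phi$ on $k$ variables with input length $\sigma$, set $m := k+3$ and implicitly take $S := \{x \in \{0,1\}^{m-3} : \phi(x) \text{ is true}\}$; then let $\Pi_S$ and the induced $H_S$ be exactly as in the preceding perturbed construction. The analysis of the preceding theorem already supplies the crucial utility-gap dichotomy for bidder $1$ at valuation $v_1 = 1$: if $S = \emptyset$ then every bid yields utility at most $1/2$, while if $S \neq \emptyset$ then some bid achieves utility strictly greater than $1/2 + 2^{-m}$.

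First I would realize $\Pi_S$ in the white-box model by an explicit pair $(F_2, \alpha_2)$. Take $F_2$ to be the uniform cumulative distribution function on $[0,1]$ and define $\alpha_2(b) := \Pi_S(b)$, so that the induced quantity $F_2(\alpha_2(b))$ equals $\Pi_S(b)$ by construction. A polynomial-size circuit for $\alpha_2$ on input $b \in B$ reads the $m$-bit expansion of $b$, returns $1$ if $b \geq 1/2$, and otherwise parses $b = 0 x c_1 c_2$ with $x \in \{0,1\}^{m-3}$, evaluates a polynomial-size circuit $C_\phi$ for $\phi(x)$, and returns the appropriate rational (one of $0$, $\tfrac{1}{2(1 - x \cdot 2^{-m+2})}$, or $\tfrac{1}{2(1 - x \cdot 2^{-m+2} - 4 \cdot 2^{-m})}$) selected by $(c_1, c_2)$ and the value of $\phi(x)$. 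All of this is polynomial-time computable from $\phi$, so $(F_2, \alpha_2)$ is a legitimate white-box input; monotonicity and no-overbidding of $\alpha_2 = \Pi_S$ follow by direct inspection of the baseline construction.

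Next I would set $\varepsilon := 2^{-m-1}$, which is exponentially small in $\sigma$, invoke the hypothesized $\varepsilon$-best-response algorithm for bidder $1$, and extract $b^\ast := \beta_1(1)$. Using the circuits for $\alpha_2$ and $F_2$ as subroutines, the probability $H_S(b^\ast)$ and hence the utility $u(1, b^\ast) = H_S(b^\ast)(1 - b^\ast)$ are computable in polynomial time by the procedure of \cref{lem:H-functions}. I then decide \textsc{SAT} by checking whether $u(1, b^\ast) > 1/2$: if $\phi$ is satisfiable, the optimal utility exceeds $1/2 + 2^{-m}$, so the $\varepsilon$-best-response bid satisfies $u(1, b^\ast) \geq 1/2 + 2^{-m} - \varepsilon > 1/2$; if $\phi$ is unsatisfiable, then $\Pi_S = \Pi$ and the baseline analysis already shows that every bid yields utility at most $1/2$.

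The main obstacle I expect is not conceptual but a matter of careful bookkeeping for the white-box model: verifying that $\alpha_2$ constructed above is indeed a legitimate monotone non-decreasing, non-overbidding, left-continuous inverse bidding strategy whose induced cumulative distribution lies in $[0,1]$, and that all arithmetic performed by the circuits keeps intermediate values of polynomial bit-length (so that $H_S$ is polynomial-time evaluable via \cref{lem:H-functions}). Once these routine checks are in place, the utility-gap dichotomy inherited from the preceding black-box result immediately yields the desired polynomial-time many-one reduction, establishing \textsc{NP}-hardness of computing $\varepsilon$-best-responses for exponentially small $\varepsilon$.
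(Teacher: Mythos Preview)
Your proposal is correct and follows essentially the same approach as the paper: both embed an NP decision into the perturbation set $S$ of the baseline/perturbed construction, realize $\Pi_S$ in the white-box model via $F_2$ uniform and $\alpha_2 = \Pi_S$, and use the utility-gap dichotomy at $v_1 = 1$ to distinguish yes- from no-instances. The only cosmetic differences are that the paper reduces from an arbitrary NP language (with $S$ the certificate set) rather than \textsc{SAT} specifically, and uses $\varepsilon = 2^{-m}$ rather than $2^{-m-1}$; your closing phrase ``many-one reduction'' is a slight misnomer since what you actually describe (and what the paper does) is a Turing reduction from the decision problem to the best-response oracle.
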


\begin{proof}
Let $\pcal$ be any problem in NP. Without loss of generality assume that certificates for instances of size $n$ must all have size $p(n)$, for some polynomial $p$. Given an input $y$ for $\pcal$, let $S(y)=\{x\in\{0,1\}^{p(n)}\,\,:\,\,x\text{ is a valid certificate for }y\}\subseteq\{0,1\}^{p(n)}$ be the set of valid certificates for $y$. In other words, $y$ is a yes-instance if and only if $S(y)\neq\emptyset$; and there is a polynomial-time algorithm that, given $x,y$, decides whether $x\in S(y)$.

We can define our reduction, from $\pcal$ to the problem of computing best-responses, as follows. Given an input $y$ of size $n$, consider a first-price auction where:
\begin{itemize}
    \item $m=p(n)+3$ and $\varepsilon=2^{-m}$;
    \item the bidding space corresponds to all dyadic rationals of order $m$;
    \item there are two players; the second player has a bidding distribution according to the perturbed instance $H_{S(y)}$;
    \item the first player has a valuation of $1$.
\end{itemize}

Notice that we can indeed construct this auction in polynomial time. In particular, there is an algorithm that computes $\Pi_{S(y)}(b)$ as follows. If $b\geq 1/2$, then $\Pi_{S(y)}(b)=1$. Otherwise, write $b$ in the form $0xb_1b_2$; decide whether $x\in S(y)$ (in polynomial time); depending on the answer, compute $\Pi_{S(y)}(b)$ according to the formulas above.

To complete the proof, suppose $y$ is a no-instance. Then $\Pi_{S(y)}=\Pi_\emptyset=\Pi$, and the best-response for player $1$ in this auction must achieve utility of exactly $1/2$, so that any $\varepsilon$-best-response achieves utility at most $1/2$. On the other hand, suppose $y$ is a yes-instance. Then $S(y)\neq\emptyset$, and the best-response for player $1$ in this auction must achieve utility strictly higher than $1/2+2^{-m}$, so that any $\varepsilon$-best-response achieves utility strictly higher than $1/2$.
\end{proof}

\section{Exact Equilibria Can Be Irrational}\label{app:irrational}

In this section we provide the technical details on \cref{ex:irrational}, which shows that a FPA can have \emph{only} irrational equilibria. Recall that in \cref{sec:prelims} we imposed two standard assumptions in the literature, namely that equilibrium strategies are \emph{monotone non-decreasing} and exhibit \emph{no overbidding}. Here, since we would like to argue that \emph{all equilibria} are irrational, to make our statement even stronger, we will show that in the example that we construct, all equilibria are necessarily monotone non-decreasing and non-overbidding, as well as irrational. In fact, we will show that the example admits a unique equilibrium, and that this equilibrium has all three properties.

To this end, we start with the following proposition that states that essentially, violations of overbidding and monotonicity only occur in trivial corner cases. In our subsequent construction, such cases will not occur. 

\begin{proposition}\label{prop:overbidnonmonotone}
Let $\vec{\beta}$ be an exact equilibrium of a FPA. For a bid $b_i$ by player $i$, let $H_i(b_i,\vec{\beta}_{-i})$ denote the (perceived) probability that player $i$ gets the item, given this bid and the bidding strategies by the other players.
Then, strategies will always be no over-bidding and monotone non-decreasing except only possibly when the probability of winning is zero. Formally,  

\begin{enumerate}
    \item let $v_i$ be a valuation by player $i$ and $b_i=\beta_i(v_i)$. If $b_i>v_i$, then $H_i(b_i,\vec{\beta}_{-i})=0$;
    \item let $v_i,v'_i$ be valuations by player $i$ and $b_i=\beta_i(v_i)$, $b'_i=\beta_i(v'_i)$. If $v_i<v'_i$ and $b_i>b'_i$, then $H_i(b_i,\vec{\beta}_{-i})=H_i(b'_i,\vec{\beta}_{-i})=0$.
\end{enumerate}
\end{proposition}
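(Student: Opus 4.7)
The plan is to use the equilibrium conditions directly, together with the standard fact that $b \mapsto H_i(b,\vec{\beta}_{-i})$ is non-decreasing in $b$ under the uniform tie-breaking rule.

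For part~1, if $b_i = \beta_i(v_i) > v_i$, then $u_i(b_i,\vec{\beta}_{-i};v_i) = (v_i - b_i)\,H_i(b_i,\vec{\beta}_{-i}) \leq 0$, since $v_i - b_i < 0$ and $H_i \geq 0$. Because $0 \in B$ is always available, deviating to bid $0$ yields $u_i(0,\vec{\beta}_{-i};v_i) = v_i \, H_i(0,\vec{\beta}_{-i}) \geq 0$. The equilibrium condition at value $v_i$ then forces $(v_i-b_i)\,H_i(b_i,\vec{\beta}_{-i}) \geq 0$, and since $v_i - b_i < 0$ this can only hold when $H_i(b_i,\vec{\beta}_{-i}) = 0$.

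For part~2, the plan is a Myerson-style exchange argument. I would write the two equilibrium conditions saying that type $v_i$ does not strictly prefer bidding $b'_i$ and type $v'_i$ does not strictly prefer bidding $b_i$,
\begin{align*}
(v_i - b_i)\,H_i(b_i,\vec{\beta}_{-i}) &\geq (v_i - b'_i)\,H_i(b'_i,\vec{\beta}_{-i}), \\
(v'_i - b'_i)\,H_i(b'_i,\vec{\beta}_{-i}) &\geq (v'_i - b_i)\,H_i(b_i,\vec{\beta}_{-i}),
\end{align*}
and add them, which after cancellation yields $(v_i - v'_i)\bigl(H_i(b_i,\vec{\beta}_{-i}) - H_i(b'_i,\vec{\beta}_{-i})\bigr) \geq 0$. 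Since $v_i < v'_i$, this forces $H_i(b_i,\vec{\beta}_{-i}) \leq H_i(b'_i,\vec{\beta}_{-i})$. On the other hand, $b_i > b'_i$ together with monotonicity of $H_i$ in the bid gives the reverse inequality, so the two values coincide to a common $H$. Substituting this back into the first inequality collapses it to $H\cdot (b'_i - b_i) \geq 0$, and since $b_i > b'_i$ this is only possible if $H=0$.

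The only mildly non-trivial ingredient is the monotonicity of $H_i(\cdot,\vec{\beta}_{-i})$, which I would justify briefly from the formula \eqref{eq:H_functions_sum}: raising the bid from $b$ to $b'' > b$ converts shared-win contributions $\frac{1}{k+1}T(b,n-1,k)$ with $k \geq 1$ into full contributions to $T(b'',n-1,0)$, and never removes any winning event, so the sum is weakly increased. This is the only place where the uniform tie-breaking rule is really used, and it is the main step to state carefully before the rest of the argument proceeds mechanically.
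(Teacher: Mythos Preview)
Your proof is correct and follows essentially the same approach as the paper: part~1 is identical (compare to bidding $0$), and part~2 is the same exchange argument, only organized slightly more cleanly by directly adding the two best-response inequalities rather than the paper's chain of rearrangements. The paper simply asserts the monotonicity of $H_i$ in the bid without proof, whereas you sketch a justification; both arrive at $H_i(b_i)=H_i(b'_i)$ and then $(b'_i-b_i)H=0$.
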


\begin{proof}~
\begin{enumerate}
    \item If $b_i>v_i$ and $H_i(b_i,\vec{\beta}_{-i})>0$, then player $i$ achieves a strictly negative utility by bidding $b_i$ when her valuation is $v_i$. However, player $i$ could achieve non-negative utility by bidding below $v_i$ (e.g.\ by bidding $0$). Hence $\vec{\beta}$ would not be an equilibrium.
    \item Suppose that $v_i<v'_i$ and $b_i>b'_i$. As $\vec{\beta}$ is an exact equilibrium, we know that $b_i,b'_i$ are the best bidding responses by player $i$. In other words, $u_i(b_i,\vec{\beta}_{-i};v_i)\geq u_i(b'_i,\vec{\beta}_{-i};v_i)$ and $u_i(b'_i,\vec{\beta}_{-i};v'_i)\geq u_i(b_i,\vec{\beta}_{-i};v'_i)$. Moreover, as $b_i>b'_i$ we also know that $H_i(b_i,\vec{\beta}_{-i})\geq H_i(b'_i,\vec{\beta}_{-i})$. Putting these together, we find that
    \begin{align*}
        u_i(b_i,\vec{\beta}_{-i};v'_i)+u_i(b'_i,\vec{\beta}_{-i};v_i)&=(v'_i-b_i)H_i(b_i,\vec{\beta}_{-i})+(v_i-b'_i)H_i(b'_i,\vec{\beta}_{-i})\\
        &=(v'_i-v_i)H_i(b_i,\vec{\beta}_{-i})+(v_i-b_i)H_i(b_i,\vec{\beta}_{-i})+(v_i-b'_i)H_i(b'_i,\vec{\beta}_{-i})\\
        &\geq (v'_i-v_i)H_i(b'_i,\vec{\beta}_{-i})+(v_i-b_i)H_i(b_i,\vec{\beta}_{-i})+(v_i-b'_i)H_i(b'_i,\vec{\beta}_{-i})\\
        &=(v'_i-b'_i)H_i(b'_i,\vec{\beta}_{-i})+(v_i-b_i)H_i(b_i,\vec{\beta}_{-i})\\
        &=u_i(b'_i,\vec{\beta}_{-i};v'_i)+u_i(b_i,\vec{\beta}_{-i};v_i).
    \end{align*}
    From this, we conclude that all steps in the above derivation must hold with equality, implying that $u_i(b_i,\vec{\beta}_{-i};v_i) = u_i(b'_i,\vec{\beta}_{-i};v_i)$, $u_i(b'_i,\vec{\beta}_{-i};v'_i) = u_i(b_i,\vec{\beta}_{-i};v'_i)$ and $H_i(b_i,\vec{\beta}_{-i}) = H_i(b'_i,\vec{\beta}_{-i})$. But then $0=u_i(b'_i,\vec{\beta}_{-i};v_i)-u_i(b_i,\vec{\beta}_{-i};v_i)=(b_i-b'_i)H_i(b_i,\vec{\beta}_{-i})$. As $b_i>b'_i$ we conclude that $H_i(b_i,\vec{\beta}_{-i})=H_i(b'_i,\vec{\beta}_{-i})=0$.
\end{enumerate}
\end{proof}

We are now ready to proceed with the example showing that all equilibria of the FPA can be irrational. Consider a first-price auction with $n=3$ bidders and common priors, whose valuations are independently and identically distributed according to the uniform distribution on $[0,1]$; that is, $F_i(x)=x$ for $i=1,2,3$. Let the bidding space be $B=\{0,1/2\}$. Clearly, this auction can be represented with piecewise-constant density functions (with a single piece) and with a finite number of rational quantities. We shall show that the auction has a unique equilibrium, and that this equilibrium is described by an irrational jump point.

First observe that, at an exact equilibrium, the probability of a player winning when bidding $0$ is positive. Otherwise, one of the other players would be bidding $1/2$ with probability $1$, and would achieve expected negative utility when having a valuation in $[0,1/2)$, which contradicts the best-response conditions. Since the probability of winning is never zero, \cref{prop:overbidnonmonotone} implies that any equilibrium must consist of non-overbidding, monotone non-decreasing strategies. In particular, the best response strategy of a player $i$ can be described by a single jump point $a_i$, that is,
$$\beta_i(x)=\left\{\begin{array}{cc}0 & \text{ if }0\leq x\leq a_i;\\ 1/2 & \text{ if }a_i<x\leq 1.\end{array}\right.$$

Since strategies are non-overbidding, we must have that $1/2\leq a_i\leq 1$. Moreover, a joint strategy profile can be described by the jump points of each player, which form a triple $(a_1,a_2,a_3)$.

Next we show that, at an exact equilibrium, each of the $a_i$ must be strictly less than 1. Suppose that bidder $1$ has a valuation of $v_1$ and that bidders $2$ and $3$ have played according to $(a_2,a_3)$. This means that bidder $2$ bids $0$ with probability $a_2$ and bids $1/2$ with probability $(1-a_2)$, and similarly for bidder $3$. Thus, the probability of player $1$ winning when bidding $0$ or when bidding $1/2$ is, respectively,
\begin{align*}
H(0;a_2,a_3) &=\frac{1}{3}a_2a_3,\\ H(1/2;a_2,a_3)&=a_2a_3+\frac{1}{2}a_2(1-a_3)+\frac{1}{2}(1-a_2)a_3+\frac{1}{3}(1-a_2)(1-a_3)\\
&=\frac{1}{3}+\frac{1}{3}a_2a_3+\frac{1}{6}a_2+\frac{1}{6}a_3.\end{align*}
From this we can compute the utility of player $1$ when bidding $0$ or when bidding $1/2$,
\begin{align*}u_1(v_1,0;a_2,a_3)&=\frac{1}{3}a_2a_3v_1\\
u_1(v_1,1/2;a_2,a_3)&=\left(\frac{1}{3}+\frac{1}{3}a_2a_3+\frac{1}{6}a_2+\frac{1}{6}a_3\right)\left(v_1-\frac{1}{2}\right).\end{align*}

We can compute the jump point $v_1$ for which player $1$ is indifferent between bidding $0$ or bidding $1/2$, by solving the equation
\begin{align}&u_1(v_1,0;a_2,a_3)=u_1(v_1,1/2;a_2,a_3)\nonumber\\
\Rightarrow\quad&\frac{1}{3}a_2a_3v_1=\left(\frac{1}{3}+\frac{1}{3}a_2a_3+\frac{1}{6}a_2+\frac{1}{6}a_3\right)\left(v_1-\frac{1}{2}\right)\nonumber\\
\Rightarrow\quad& 2v_1+v_1a_2+v_1a_3=1+a_2a_3+\frac{1}{2}a_2+\frac{1}{2}a_3\label{eq:pne3playersv1}\\
\Rightarrow\quad& v_1=\frac{1}{2}+\frac{a_2a_3}{2+a_2+a_3}\label{eq:pne3playersv1b}.\end{align}

Next observe that the expression $\frac{a_2a_3}{2+a_2+a_3}$ is increasing in both $a_2$ and $a_3$; hence, by setting $a_2=1$, $a_3=1$ we get that the right hand side of \eqref{eq:pne3playersv1b} is at most $\frac{1}{2}+\frac{1\times 1}{2+1+1}=\frac{3}{4}$. In other words, the break-even point must occur in the interval $[1/2,3/4]$, and thus in particular setting $v_1=a_1$ must give a solution to \eqref{eq:pne3playersv1}.

Repeating this argument for players $2$ and $3$ we obtain similarly that $a_2$ and $a_3$ must lie in $[1/2,3/4]$, and that these jump points must be the solutions of equations similar to \eqref{eq:pne3playersv1}. In order for $(a_1,a_2,a_3)$ to define an equilibrium, each player's jump point must be optimal in response to the other players' strategies. Thus, $(a_1,a_2,a_3)$ must be a solution of the system of equations

\begin{align}
   2a_1+a_1a_2+a_1a_3&=1+a_2a_3+\frac{1}{2}a_2+\frac{1}{2}a_3\label{eq:pne3playersa1}\\
   2a_2+a_1a_2+a_2a_3&=1+a_1a_3+\frac{1}{2}a_1+\frac{1}{2}a_3\label{eq:pne3playersa2}\\
   2a_3+a_2a_3+a_1a_3&=1+a_1a_2+\frac{1}{2}a_1+\frac{1}{2}a_2 \notag
\end{align}

Finally, we show that the above system has a unique solution. By subtracting \eqref{eq:pne3playersa2} from \eqref{eq:pne3playersa1}, we get
\begin{align*}&2(a_1-a_2)+a_3(a_1-a_2)=(a_2-a_1)a_3+\frac{1}{2}(a_2-a_1)\\
\Rightarrow\quad&\left(\frac{5}{2}+2a_3\right)(a_1-a_2)=0\\
\Rightarrow\quad& a_3=-\frac{5}{4}\quad\text{or}\quad a_1=a_2.\end{align*}
Since we know that $a_3\in[1/2,3/4]$, we conclude that $a_1=a_2$. By the same argument, we must have $a_2=a_3$ and $a_1=a_3$, that is, any equilibrium must be symmetric. Now letting $a:=a_1=a_2=a_3$, we get that $a$ must be a solution to the equation
\begin{align*}&2a+a^2+a^2=1+a^2+\frac{1}{2}a+\frac{1}{2}a\\
\Rightarrow\quad&a^2+a-1=0\\
\Rightarrow\quad& a=\frac{-1\pm \sqrt{5}}{2}.\end{align*}

Since $a$ must be positive, we conclude that the unique equilibrium of this auction is given by the jump point $a=\frac{-1+\sqrt{5}}{2}\approx 0.618$ (the inverse of the golden ratio), which is irrational.

\section{Proof of \texorpdfstring{\cref{lem:util-poly-cont}}{Lemma~\ref*{lem:util-poly-cont}}}\label{app:util-poly-cont}

Since the distributions are polynomially-continuous, it follows that given any $\varepsilon > 0$, we can compute $\delta > 0$ in polynomial time such that $|F_{i,j}(x)-F_{i,j}(y)| \leq \varepsilon/2^{n+1}$ for all $x,y$ with $|x-y| \leq \delta$ and all $i,j \in N$ ($i \neq j$).

Consider any $\vec{\alpha},\vec{\alpha}' \in \mathcal{D}$ (see the proof of \Cref{thm:existence}) with $\|\vec{\alpha} - \vec{\alpha}'\|_\infty \leq \delta$. Then, we have
\begin{equation*}\begin{split}
\left|\probability_{v_i \sim F_{j,i}}\left[\beta_i(v_i) \leq b\right] - \probability_{v_i \sim F_{j,i}}\left[\beta_i'(v_i) \leq b\right]\right| &\leq \left|\probability_{v_i \sim F_{j,i}}\left[v_i \leq \alpha_{i}(b)\right] - \probability_{v_i \sim F_{j,i}}\left[v_i \leq \alpha_{i}'(b)\right]\right|\\
&\leq \left|F_{j,i}(\alpha_{i}(b)) - F_{j,i}(\alpha_{i}'(b))\right|\\
&\leq \varepsilon/2^{n+1}
\end{split}\end{equation*}
for all $i,j \in N$ ($i \neq j$) and $b \in B$. It follows that $\probability_{v_i \sim F_{j,i}}\left[\beta_i(v_i) < b\right]$ differs from $\probability_{v_i \sim F_{j,i}}\left[\beta_i'(v_i) < b\right]$ by at most $\varepsilon/2^{n+1}$. Similarly, $\probability_{v_i \sim F_{j,i}}\left[\beta_i(v_i) = b\right]$ differs from $\probability_{v_i \sim F_{j,i}}\left[\beta_i'(v_i) = b\right]$ by at most $\varepsilon/2^n$.

Let $T_i(b,\ell;\vec{\alpha}_{-i})$ denote the probability that, from the perspective of bidder $i$, exactly $\ell$ out of the bidders $N \setminus \{i\}$ bid exactly $b$, and the remaining $n-1-\ell$ bidders bid below $b$. We can write
$$T_i(b,\ell;\vec{\alpha}_{-i}) = \sum_{\substack{S\subseteq N \setminus \{i\}\\ |S|=\ell}} \prod_{k \in S} \probability_{v_k \sim F_{i,k}}\left[\beta_k(v_k) = b\right] \prod_{k \in N \setminus (\{i\} \cup S)} \probability_{v_k \sim F_{i,k}}\left[\beta_k(v_k) < b\right].$$
From this it follows that $T_i(b,\ell;\vec{\alpha}_{-i})$ and $T_i(b,\ell;\vec{\alpha}_{-i}')$ differ by at most $\binom{n-1}{\ell} n \varepsilon/2^n$, for all $i \in N$, $b \in B$ and $\ell \in \{0,1,\dots, n-1\}$. As defined in \cref{sec:bestresponses}, recall that $H_i(b,\vec{\alpha}_{-i})$ denotes the probability that bidder $i$ wins if she bids $b$ and the other bidders bid according to $\vec{\alpha}_{-i}$. Then, we can write
$$H_i(b,\vec{\alpha}_{-i}) = \sum_{\ell=0}^{n-1} \frac{1}{\ell+1} T_i(b,\ell;\vec{\alpha}_{-i}).$$
It follows that $H_i(b,\vec{\alpha}_{-i})$ differs from $H_i(b,\vec{\alpha}_{-i}')$ by at most
$$\sum_{\ell=0}^{n-1} \frac{1}{\ell+1} \binom{n-1}{\ell} n \varepsilon/2^n = \sum_{\ell=0}^{n-1} \binom{n}{\ell+1} \varepsilon/2^n \leq \varepsilon$$
for all $i \in N$ and $b \in B$. Finally, note that $u_i(b,\vec{\alpha}_{-i};v_i) = H_i(b,\vec{\alpha}_{-i}) \cdot (v_i-b)$. Thus, we obtain
$$\left|u_i(b,\vec{\alpha}_{-i};v_i) - u_i(b,\vec{\alpha}_{-i}';v_i)\right| \leq \left|H_i(b,\vec{\alpha}_{-i}) - H_i(b,\vec{\alpha}_{-i}')\right| \left|v_i-b\right| \leq \varepsilon$$
for all $i \in N$, $b \in B$ and $v_i \in [0,1]$, since $|v_i-b| \leq 1$.

\section{PPAD and FIXP-completeness of Generalized Circuit Variants}

\subsection{PPAD-completeness (Proof of \texorpdfstring{\cref{prop:gcircuit-ppad}}{Proposition~\ref*{prop:gcircuit-ppad}})}\label{app:gcircuit-ppad}

Membership in \ppad follows from the fact that a generalized circuit with gates $g_1, \dots, g_\nn$ can be interpreted as defining an algebraic circuit $F: [0,1]^\nn \to [0,1]^\nn$, where for $x \in [0,1]^\nn$ and $i \in [\nn]$ we let $F_i(x) = G(x_j,x_k)$, where $g_i=(G,j,k)$. Then, it is known that the problem of computing an $\varepsilon$-approximate fixed point of such a function $F$ lies in \ppad \citep{etessami2010complexity} (and in fact, even when $\varepsilon$ is provided in the input in binary representation). Finally, note that an $\varepsilon$-approximate fixed point of $F$ exactly corresponds to an $\varepsilon$-satisfying assignment for the generalized circuit.

In order to prove \ppad-hardness, consider the $\varepsilon$-\gcircuit problem with gate-types $\mathcal{G} = \{G_{1-},G_+\}$, for some sufficiently small constant $\varepsilon > 0$ (which will be set later). We begin by showing that additional gate-types can be simulated if we allow a larger (but still constant) error.

\medskip

\noindent\textbf{$\bm{G_{=}}$: Copy.} The goal of such a gate is to copy the value of some gate $g_1$. For this, we use the fact that $1-(1-x) = x$. Thus, we introduce a gate $g_2$ of type $G_{1-}$ with input $g_1$ and a gate $g_3$ of type $G_{1-}$ with input $g_2$. It holds that $\val[g_3] = 1 - \val[g_2] \pm \varepsilon = \val[g_1] \pm 2\varepsilon$. In other words, we can simulate a copy gate with error at most $2\varepsilon$.

\medskip

\noindent\textbf{$\bm{G_1}$: Constant 1.} In order to obtain a gate that has value $1$, we use the fact that $x + (1-x) = 1$. First, we introduce an arbitrary gate $g_1$. Then, we introduce a gate $g_2$ of type $G_{1-}$ with input $g_1$, and a gate $g_3$ of type $G_+$ with inputs $g_1$ and $g_2$. It holds that $\val[g_3] = \trunc(\val[g_1]+\val[g_2]) \pm \varepsilon = 1 \pm 2\varepsilon$. Thus, we can simulate a constant $1$ with error at most $2\varepsilon$.

\medskip

\noindent\textbf{$\bm{G_-}$: Subtraction.} The goal of this gate is to compute $\trunc(\val[g_1]-\val[g_2])$. For this, we use the identity
$$\trunc(x-y) = 1 - \trunc\big((1-x) + y\big)$$
which allows us to express subtraction using only addition and the complement operation. With this in hand, we can implement subtraction as follows. We introduce a gate $g_3$ of type $G_{1-}$ with input $g_1$, a gate $g_4$ of type $G_+$ with inputs $g_3$ and $g_2$, and finally a gate $g_5$ of type $G_{1-}$ with input $g_4$. Then, it holds that $\val[g_5] = 1 - \val[g_4] \pm \varepsilon = 1 - \trunc(\val[g_3]+\val[g_2]) \pm 2\varepsilon = 1 - \trunc(1-\val[g_1]+\val[g_2]) \pm 3\varepsilon = \trunc(\val[g_1]-\val[g_2]) \pm 3\varepsilon$. Thus, we can simulate a subtraction gate with error at most $3\varepsilon$.

\medskip

\noindent\textbf{$\bm{G_{/2}}$: Division by 2.} The goal of this gate is to compute $\val[g_1]/2$. This is achieved by constructing a cycle. Namely, we introduce two gates $g_2$ and $g_3$. The gate $g_2$ is of type $G_{-}$ with inputs $g_1$ and $g_3$, and the gate $g_3$ is of type $G_{=}$ with input $g_2$. As a result, it holds that
$$\val[g_3] = \val[g_2] \pm 2\varepsilon = \trunc(\val[g_1] - \val[g_3]) \pm 5\varepsilon.$$
From this, it follows that $\val[g_3] = \val[g_1]/2 \pm 5\varepsilon$. To see this, note that if $\val[g_3] \geq \val[g_1]$, then $\val[g_3] = 0 \pm 5\varepsilon = \val[g_1]/2 \pm 5\varepsilon$, since $[0,5\varepsilon] \subseteq [\val[g_1]/2-5\varepsilon,\val[g_1]/2+5\varepsilon]$ (because $\val[g_1]/2 \leq \val[g_3]/2 \leq 5\varepsilon$). On the other hand, if $\val[g_3] < \val[g_1]$, then we obtain that $2\val[g_3] = \val[g_1] \pm 5\varepsilon$, which again yields the same conclusion, namely $\val[g_3] = \val[g_1]/2 \pm 5\varepsilon$. Thus, we can simulate division by $2$ with error at most $5\varepsilon$.

\medskip

\noindent\textbf{$\bm{G_{\times \zeta}}$: Multiplication by $\bm{\zeta \in \left[0,1\right]}$.} If $\zeta = 0$, then we can simply output $G_{1-}(G_1) = 0 \pm 3\varepsilon$. If $\zeta=1$, we can simply use a $G_{=}$ gate that has error at most $2\varepsilon$. Consider now the case where $\zeta \in (0,1)$. Let $k=\lceil \log_2(1/\varepsilon) \rceil$. Recall that $\varepsilon$ will be a fixed constant, so $k$ will also be a fixed constant. It is easy to see that in polynomial time (in the representation size of $\zeta$) we can find $a \in \{1,2,\dots, 2^k-1\}$ such that $|\zeta - a/2^k| \leq \varepsilon$.

Let $g_1$ denote the input. Our goal now is to compute $(a/2^k) \cdot \val[g_1]$, since this will be $\varepsilon$-close to $\zeta \cdot \val[g_1]$. We compute $(a/2^k) \cdot \val[g_1]$ in a careful manner to ensure that the error remains small. This is achieved as follows. Using the binary representation of $a = \sum_{i=0}^{k-1} a_i 2^i$, $a_i \in \{0,1\}$, we can express the product $(a/2^k) \cdot x$ as
\begin{equation*}
\begin{tabular}{cc}
   $\frac{\frac{0+a_0\frac{x}{2}}{2} + a_1\frac{x}{2}}{2} + a_2\frac{x}{2}$  &  \\
     & $\ddots$
\end{tabular}
\end{equation*}
We implement this as follows. First, introduce $g_2$ such that $\val[g_2] = \val[g_1]/2 \pm 5\varepsilon$.
Next, introduce $g_3$ such that (i) if $a_0=0$, then $\val[g_3] = 0 \pm 3\varepsilon$, (ii) if $a_0=1$, then $g_3=g_2$. In both cases we have
$$\val[g_3] = a_0 \val[g_2] \pm 3\varepsilon.$$
Next, introduce $g_4$ such that (i) if $a_1=0$, then $\val[g_4] = \val[g_3]/2 \pm 5\varepsilon = a_0 \val[g_2]/2 \pm 5(1+1/2)\varepsilon$, (ii) if $a_1=1$, then $\val[g_4] = \val[g_3]/2 + \val[g_2] \pm 6\varepsilon = a_0 \val[g_2]/2 + \val[g_2] \pm 6(1+1/2)\varepsilon$. In both cases we have
$$\val[g_4] = a_0 \val[g_2]/2 + a_1\val[g_2] \pm 6(1+1/2)\varepsilon = (a_0+2a_1)\val[g_2]/2 \pm 6(1+1/2)\varepsilon.$$
Next, introduce $g_5$ such that (i) if $a_2=0$, then $\val[g_5] = \val[g_4]/2 \pm 5\varepsilon = (a_0+2a_1)\val[g_2]/4 \pm 6(1+1/2+1/4)\varepsilon$, (ii) if $a_2=1$, then $\val[g_5] = \val[g_4]/2 + \val[g_2] \pm 6\varepsilon = (a_0+2a_1)\val[g_2]/4 + \val[g_2] \pm 6(1+1/2+1/4)\varepsilon$. In both cases we have
\begin{equation*}\begin{split}
\val[g_5] &= (a_0+2a_1)\val[g_2]/4 + a_2\val[g_2] \pm 6(1+1/2+1/4)\varepsilon\\
&= (a_0+2a_1+4a_2)\val[g_2]/4 \pm 6(1+1/2+1/4)\varepsilon.
\end{split}\end{equation*}
Continuing in the same manner, it follows by induction that after $k-1$ such steps we obtain 
$$\val[g_{k+2}] = \left(\sum_{i=0}^{k-1} a_i 2^i\right)\val[g_2]/2^{k-1} \pm 12\varepsilon = \frac{a}{2^k} \left(2\val[g_2]\right) \pm 12\varepsilon = \frac{a}{2^k} \val[g_1] \pm 22\varepsilon = \zeta \cdot \val[g_1] \pm 23\varepsilon.$$
Thus, we can compute multiplication by $\zeta \in [0,1]$ with error at most $23\varepsilon$. Note that this gadget can be constructed in polynomial time in the representation size of $\zeta$. Furthermore, the number of gates needed to construct the gadget is $O(k)$, which is constant, since $k=\lceil \log_2(1/\varepsilon) \rceil$ and $\varepsilon$ will be a fixed constant.

\medskip

\noindent We are now ready to show \ppad-hardness. To do this, we reduce from a slightly modified version of \gcircuit studied by \citet{goldberg2020consensus}, that we call $\gcircuit^{[-1,1]}$. This modified version operates on $[-1,1]$ instead of $[0,1]$, and it uses the gates $G_+^{[-1,1]}$, $G_1^{[-1,1]}$ and $G_{\times -\zeta}^{[-1,1]}$ (where the gates truncate to $[-1,1]$, and $\zeta \in [0,1]$). \citet{goldberg2020consensus} proved that $\varepsilon'$-$\gcircuit^{[-1,1]}$ is \ppad-hard for some sufficiently small constant $\varepsilon' > 0$. We now set $\varepsilon := \varepsilon'/50$. Below, we show that $\varepsilon'$-$\gcircuit^{[-1,1]}$ reduces to $\varepsilon$-\gcircuit (with gate-types $\mathcal{G} = \{G_{1-},G_+\}$).

Given a generalized circuit with gates $G_+^{[-1,1]}$, $G_1^{[-1,1]}$ and $G_{\times -\zeta}^{[-1,1]}$, we construct a corresponding circuit with gates $G_{1-}$ and $G_+$ as follows. Every gate $g$ of the original circuit is replaced by two gates $g^+$ and $g^-$. The idea is that the value of $g$, which lies in $[-1,1]$, will be encoded by the values of $g^+$ and $g^-$, which lie in $[0,1]$. Formally, we interpret $\val[g] := \val[g^+] - \val[g^-]$. Next, we show that the constraints of the original circuit can be enforced by corresponding constraints on the new circuit.

\medskip

\noindent\textbf{Simulating $\bm{G_1^{[-1,1]}}$.} In order to enforce that $\val[g] = 1 \pm \varepsilon'$, we proceed as follows. We simply let $\val[g^+] = 1 \pm 2\varepsilon$ and $\val[g^-] = 0 \pm 3\varepsilon$ (using the constructions described above). Thus, it holds that $\val[g] = \val[g^+] - \val[g^-] = 1 \pm 5\varepsilon = 1 \pm \varepsilon'$.

\medskip

\noindent\textbf{Simulating $\bm{G_{\times -\zeta}^{[-1,1]}}$.} In order to enforce that $\val[g_2] = -\zeta \cdot \val[g_1] \pm \varepsilon'$, for some $\zeta \in [0,1]$, we proceed as follows. Using the constructions described above, we can enforce that $\val[g_2^+] = \zeta \cdot \val[g_1^-] \pm 23\varepsilon$ and $\val[g_2^-] = \zeta \cdot \val[g_1^+] \pm 23\varepsilon$. Thus, $\val[g_2] = -\zeta \cdot \val[g_1] \pm 46\varepsilon = -\zeta \cdot \val[g_1] \pm \varepsilon'$.

\medskip

\noindent\textbf{Simulating $\bm{G_{+}^{[-1,1]}}$.} In order to enforce that $\val[g_3] = \trunc_{[-1,1]}(\val[g_1] + \val[g_2]) \pm \varepsilon'$, we proceed in two steps. First, using our construction for performing subtraction, we ``normalize'' the gates by letting $\val[h_1^+] = \trunc(\val[g_1^+] - \val[g_1^-]) \pm 3\varepsilon$ and $\val[h_1^-] = \trunc(\val[g_1^-] - \val[g_1^+]) \pm 3\varepsilon$, which yields $\val[h_1] = \val[g_1] \pm 6\varepsilon$. We similarly obtain $h_2$ from $g_2$. This ``normalization'' will ensure that addition is then performed correctly.

In the second step, using the addition gate $G_+$, we let $\val[g_3^+] = \trunc(\val[h_1^+] + \val[h_2^+]) \pm \varepsilon$ and $\val[g_3^-] = \trunc(\val[h_1^-] + \val[h_2^-]) \pm \varepsilon$. Thus, it holds that
\begin{equation*}\begin{split}
\val[g_3] = \val[g_3^+] - \val[g_3^-] &= \trunc(\val[h_1^+] + \val[h_2^+]) - \trunc(\val[h_1^-] + \val[h_2^-]) \pm 2\varepsilon\\
&= \trunc_{[-1,1]}(\val[h_1^+] + \val[h_2^+]) - \trunc_{[-1,1]}(\val[h_1^-] + \val[h_2^-]) \pm 2\varepsilon.
\end{split}\end{equation*}
Because of the ``normalization'' step, we know that
$$\min\{\val[h_1^+],\val[h_1^-]\} \leq 3\varepsilon \quad \text{ and } \quad \min\{\val[h_2^+],\val[h_2^-]\} \leq 3\varepsilon.$$
In the case where $\val[h_1^-] \leq 3\varepsilon$ and $\val[h_2^-] \leq 3\varepsilon$, it holds that $\val[h_1] = \val[h_1^+] \pm 3\varepsilon$ and $\val[h_2] = \val[h_2^+] \pm 3\varepsilon$, which implies that
$$\val[g_3] = \trunc_{[-1,1]}(\val[h_1] + \val[h_2]) - \trunc_{[-1,1]}(\val[h_1^-] + \val[h_2^-]) \pm 8\varepsilon = \trunc_{[-1,1]}(\val[h_1] + \val[h_2]) \pm 14\varepsilon.$$
In the case where $\val[h_1^+] \leq 3\varepsilon$ and $\val[h_2^-] \leq 3\varepsilon$, it holds that $\val[h_1] = - \val[h_1^-] \pm 3\varepsilon$ and $\val[h_2] = \val[h_2^+] \pm 3\varepsilon$, which implies that
\begin{equation*}\begin{split}
\val[g_3] = \trunc_{[-1,1]}(\val[h_1^+] + \val[h_2]) - \trunc_{[-1,1]}(-\val[h_1] + \val[h_2^-]) \pm 8\varepsilon &= \val[h_1] + \val[h_2] \pm 14\varepsilon\\
&= \trunc_{[-1,1]}(\val[h_1] + \val[h_2]) \pm 14\varepsilon.
\end{split}\end{equation*}
The remaining two cases are handled in the same way, and thus we always obtain that
$$\val[g_3] = \trunc_{[-1,1]}(\val[h_1] + \val[h_2]) \pm 14\varepsilon = \trunc_{[-1,1]}(\val[g_1] + \val[g_2]) \pm 26\varepsilon = \trunc_{[-1,1]}(\val[g_1] + \val[g_2]) \pm \varepsilon'.$$

\bigskip

\noindent Clearly, this construction can be performed in polynomial time in the size of the original generalized circuit. Furthermore, given any $\varepsilon$-satisfying assignment of the new generalized circuit, we can easily obtain an $\varepsilon'$-satisfying assignment of the original generalized circuit by setting $\val[g] : = \val[g^+] - \val[g^-] \in [-1,1]$ for all gates $g$. It follows that the $\varepsilon$-\gcircuit problem with gate-types $\mathcal{G} = \{G_{1-},G_+\}$ is \ppad-hard.

Finally, note that if we let $\mathcal{G} = \{G_1,G_-\}$ instead, we again obtain the same result, because $G_{1-}$ and $G_+$ can easily be simulated. Indeed, it is clear that $G_{1-}$ can immediately be simulated. Furthermore, $G_+$ can be simulated by using the equation $\trunc(x+y) = 1-\trunc((1-x)-y)$.

\subsection{FIXP-completeness (Proof of  \texorpdfstring{\cref{prop:gcircuit-fixp}}{Proposition~\ref*{prop:gcircuit-fixp}})}\label{app:gcircuit-fixp}

Membership in \fixp follows immediately by noting that a generalized circuit with gates $g_1, \dots, g_\nn$ defines an algebraic circuit $F: [0,1]^\nn \to [0,1]^\nn$, where for $x \in [0,1]^\nn$ and $i \in [\nn]$ we let $F_i(x) = G(x_j,x_k)$, where $g_i=(G,j,k)$. Indeed, any fixed point of $F$ corresponds to an assignment that exactly satisfies the gate constraints. In particular, note that all the gate-types we consider can be exactly computed using the usual operations allowed in \fixp, namely $+, \times, \max$ and rational constants. Furthermore, it is easy to see that this trivially yields an SL-reduction \citep{etessami2010complexity}.

In order to prove \fixp-hardness we will show that our very restricted set of gates is actually enough to simulate various more complex gates. \citet[Section 7.2]{deligkas2019computing}, using a special Brouwer function for the \fixp-complete problem 3-Nash given by \citet{etessami2010complexity}, proved that computing fixed points of very restricted algebraic circuits is already \fixp-hard. In more detail, they consider functions $F: [0,1]^n \to [0,1]^n$ computed by circuits with a restricted set of gates and such that every gate always has value in $[0,1]$, for any input $x \in [0,1]^n$ to the circuit. Because of this property we can use our gates that truncate to $[0,1]$ without changing any of the computations.

In more detail, they allow the following gates: $G_\zeta$ (constant $\zeta \in \mathbb{Q} \cap [0,1]$), $G_+$, $G_-$ (subtraction truncated to $[0,1]$), $G_\times$, $G_{\times 2}^{[0,1]}$, $G_{\max}$ and $G_{\min}$. We show below that we can simulate all of these gates, using only the gates $G_{1-}$, $G_{\times 2}$ and $G_{\times}$ (or alternatively, $G_{1-}$, $G_+$ and $G_{(\cdot)^2}$). In particular, $G_{\times 2}^{[0,1]}$ is a restricted gate $G_{\times 2}$ that only works on inputs in $[0,1/2]$. Since our $G_{\times 2}$ gate has the same behavior as that gate for such inputs, it is correctly simulated.

Finally, we simply use copy gates $G_{=}$ to enforce the fixed point constraint, namely that the $i$th input to $F$ be equal to its $i$th output. It is easy to see that this construction yields a polynomial-time reduction, and that it is in fact an SL-reduction \citep{etessami2010complexity}, since we only need to extract the values assigned to the input gates in order to obtain a fixed point of $F$. In the remainder of this proof, we show how all the required gates can be simulated using our restricted set of gates $G_{1-}$, $G_{\times 2}$ and $G_{\times}$.

\medskip

\noindent\textbf{$\bm{G_{=}}$: Copy.} In order to copy the value of some gate $g_1$, we use the complement gate $G_{1-}$ twice. Namely, we first introduce a gate $g_2$ of type $G_{1-}$ with input $g_1$, and then another gate $g_3$ of type $G_{1-}$ with input $g_2$. Clearly it holds that $\val[g_3]=1-\val[g_2] = 1-(1-\val[g_1]) = \val[g_1]$.

\medskip

\noindent\textbf{$\bm{G_{1/2}}$: Constant $\bm{1{/}2}$.} In order to obtain a gate that has value $1/2$, we create a small cycle. We introduce two gates $g_1$ and $g_2$. The gate $g_1$ is of type $G_{=}$ with input $g_2$, and the gate $g_2$ is of type $G_{1-}$ with input $g_1$. It follows that $\val[g_1]$ satisfies $\val[g_1] = 1 - \val[g_1]$, which implies $\val[g_1]=1/2$. Note that together with the $G_\times$ gate we can now also perform multiplication by $1/2$, denoted by $G_{\times 1/2}$.

\medskip

\noindent\textbf{$\bm{G_-}$: Subtraction.} In the proof of \cref{lem:gcircuit-special}, we show how to construct a subtraction gate given access only to $G_{1-}$, $G_{\times 2}$ and a special gate $G_\phi$, where $\phi: [0,1]^2 \to [0,1]$, $(x,y) \mapsto (x+1)(y+1)/4$. Thus, to obtain the subtraction gate, it is enough for us here to construct a gate $G_\phi$. Since we have access to $G_\times$, it suffices to construct a gate that implements the function $x \mapsto (x+1)/2$. Let $g_1$ be the input gate. We introduce a gate $g_2$ of type $G_{1-}$ with input $g_1$, a gate $g_3$ of type $G_{\times 1/2}$ with input $g_2$, and finally a gate $g_4$ of type $G_{1-}$ with input $g_3$. It follows that $\val[g_4]=1-\val[g_3]=1-\val[g_2]/2 = 1-(1-\val[g_1])/2 = (\val[g_1]+1)/2$, as desired.

\medskip

\noindent\textbf{$\bm{G_{+}}$: Addition.} Addition can easily be obtained from subtraction by using the following equality for all $x,y \in [0,1]$
$$\trunc(x+y) = 1-\trunc((1-x)-y) = G_{1-}(G_-(G_{1-}(x),y)).$$

\medskip

\noindent\textbf{$\bm{G_{\max}}, \bm{G_{\min}}$: Maximum and Minimum.} The function $(x,y) \mapsto \max\{x,y\}$ can easily be simulated with existing gates by noting that
$$\max\{x,y\} = \trunc(x + \trunc(y-x)) = G_+(x,G_-(y,x)).$$
Then, $(x,y) \mapsto \min\{x,y\}$ can simply be obtained by $\min\{x,y\} = 1 - \max\{1-x,1-y\}$.

\medskip

\noindent\textbf{$\bm{G_{\times k}}$: Multiplication by integer $\bm{k}$.} Let $k$ be an integer that is given in binary representation, i.e., $k = \sum_{i=0}^\ell a_i 2^i$, where $a_i \in \{0,1\}$. Our goal is to construct a gate that computes $x \mapsto \trunc(k \cdot x)$. Using the $G_{\times 2}$ gate we can compute $\trunc(2^i \cdot x)$ for $i=0,1,\dots,\ell$. This requires $\ell$ separate $G_{\times 2}$ gates. Then, we use the addition gate to compute
$$\trunc\left(\sum_{i:\, a_i=1} \trunc(2^i \cdot x)\right) = \trunc\left(\sum_{i=0}^\ell a_i 2^i x\right) = \trunc(k \cdot x).$$
This uses at most $\ell$ separate $G_+$ gates. Thus, overall we use a number of gates that is polynomial in the representation length of $k$.

\medskip

\noindent\textbf{$\bm{G_{\zeta}}$: Constant $\bm{\zeta \in \left[0,1\right] \cap} \Q$.} If $\zeta = 1$, we can simply do $G_{\times 2}(G_{1/2}) = 1$. If $\zeta = 0$, we can do $G_{1-}(G_{\times 2}(G_{1/2})) = 0$. Now assume that $\zeta \in (0,1)$. Write $\zeta = c/d$ where $c$ and $d$ are positive integers, $c \geq 1$, $c < d$, $d \geq 2$. Clearly, if we can construct the constant $1/d$, then we can use a $G_{\times k}$ gate with $k=c$ to obtain $\zeta$. In order to construct $1/d$, we use a small cycle. We introduce two gates $g_1$ and $g_2$. The gate $g_1$ is of type $G_{\times k}$ with $k=d-1$ and with input $g_2$. The gate $g_2$ is of type $G_{1-}$ with input $g_1$. Thus, it holds that $\val[g_2] = 1 - \val[g_1] = 1 - \trunc((d-1) \cdot \val[g_2])$. It is easy to check that the only solution of this equation is $\val[g_2] = 1/d$.

\medskip

\noindent Finally, let us show that the set of gate-types $G_{1-}$, $G_+$ and $G_{(\cdot)^2}$ also suffices to simulate all the gates above, by showing that they can simulate $G_{1-}$, $G_{\times 2}$ and $G_\times$. As before, $G_{1-}$ can be used to create $G_{=}$. Then, $G_+$ and $G_{=}$ can be used to obtain $G_{\times 2}$. Thus, it remains to simulate $G_{\times}$.

Note that $G_-$ can be obtained by $\trunc(x-y) = 1 - (\trunc((1-x) + y))$. Furthermore, we can construct $G_{\times 1/2}$ on input gate $g_1$ as follows. We introduce two gates $g_2$ and $g_3$. The gate $g_2$ is of type $G_-$ and has inputs $g_1$ and $g_3$. The gate $g_3$ is of type $G_{=}$ with input $g_2$. It follows that $\val[g_3] = \trunc(\val[g_1] - \val[g_3])$, which has the only solution $\val[g_3] = \val[g_1]/2$.

In order to simulate $G_\times$, note that
$$\left(\frac{x}{2} + \frac{y}{2}\right)^2 = (x/2)^2+(y/2)^2 + xy/2.$$
We can easily compute $x/2+y/2$ and then square using $G_{(\cdot)^2}$. Similarly, we can also compute $(x/2)^2+(y/2)^2$. By using $G_-$, we then obtain $xy/2$, and thus $xy$ after using a $G_{\times 2}$ gate.

\subsection{Proof of \texorpdfstring{\cref{lem:gcircuit-special}}{Lemma~\ref*{lem:gcircuit-special}}}\label{app:gcircuit-special}

In order to prove that the problem remains hard with $\mathcal{G} = \{G_{\times 2},G_{1-},G_\phi\}$, we will show that other gate-types can be simulated using only these three gate-types. Let $\varepsilon \in [0,1/14]$ and assume that we have access to gates of type $G_{\times 2}$, $G_{1-}$ and $G_\phi$.

\medskip

\noindent\textbf{$\bm{G_1}$: Constant 1.} In order to create a constant $1$ we use the fact that for any $x,y \in [0,1]$
$$\trunc\left(2^3 \cdot \phi(x,y)\right) = \trunc \left( 2^3 (x+1)(y+1)/4 \right) \geq \trunc(2) = 1.$$
In more detail, we use a gate $g_1$ of type $G_\phi$ (with arbitrary inputs), then a gate $g_2$ of type $G_{\times 2}$ with input $g_1$, another gate $g_3$ of type $G_{\times 2}$ with input $g_2$, and finally another gate $g_4$ of type $G_{\times 2}$ with input $g_3$. We have that $\val[g_1] \geq 1/4 - \varepsilon$, $\val[g_2] \geq \trunc(2 \cdot \val[g_1]) - \varepsilon \geq 1/2 - 3\varepsilon$, $\val[g_3] \geq \trunc(2 \cdot \val[g_2]) - \varepsilon \geq 1 - 7\varepsilon$, and $\val[g_4] \geq \trunc(2 \cdot \val[g_3]) - \varepsilon \geq 1 - \varepsilon$, since $\varepsilon \leq 1/14$. Thus, we can construct a gate that has the value $1 \pm \varepsilon$.

\medskip

\noindent\textbf{$\bm{G_{/2}}$: Division by 2.} In order to divide the value of some gate $g_1$ by $2$, we use the fact that
$$1-\phi(1-\val[g_1],1) = 1 - (2-\val[g_1])(1+1)/4 = \val[g_1]/2.$$
In more detail, we use a gate $g_2$ of type $G_{1-}$ with input $g_1$, then we use a gate $g_3$ of type $G_{\phi}$ with inputs $g_2$ and a constant $1 \pm \varepsilon$, and finally we use a gate $g_4$ of type $G_{1-}$ with input $g_3$. It holds that $\val[g_2] = 1 - \val[g_1] \pm \varepsilon$, $\val[g_3] = \phi(\val[g_2],1 \pm \varepsilon) \pm \varepsilon = 1 - \val[g_1]/2 \pm 2\varepsilon$, and $\val[g_4] = 1 - \val[g_3] \pm \varepsilon = \val[g_1]/2 \pm 3\varepsilon$. Thus, we can construct a gate that performs division by $2$ with error at most $3\varepsilon$.

\medskip

\noindent\textbf{$\bm{G_{=}}$: Copy.} It is easy to see that using two gates of type $G_{1-}$, one after the other, copies the original value with error at most $2\varepsilon$.

\medskip

\noindent\textbf{$\bm{G_{inv}}$: Inverse.} We now show how to construct the gate $G_{inv}$, which computes the function $x \mapsto -1 + 4/(2+x)$, and will be very useful to construct the subtraction gate below. The construction of $G_{inv}$ uses a cycle. Let $g_1$ be the input gate. We first use a gate $g_2$ of type $G_{1-}$ with input $g_1$, then we use a gate $g_3$ of type $G_\phi$ with input $g_2$ and $g_4$, and finally we let gate $g_4$ be of type $G_{=}$ with input $g_3$. We have that $\val[g_2] = 1 - \val[g_1] \pm \varepsilon$, $\val[g_3] = \phi(\val[g_2],\val[g_4]) \pm \varepsilon$, and $\val[g_4] = \val[g_3] \pm 2\varepsilon$. It follows that $\val[g_4]$ must satisfy the equation
$$\val[g_4] = \phi(\val[g_2],\val[g_4]) \pm 3\varepsilon = (\val[g_2]+1)(\val[g_4]+1)/4 \pm 3\varepsilon$$
which implies that
$$\val[g_4] = \frac{1+\val[g_2]}{3-\val[g_2]} \pm 6\varepsilon.$$
As a result, we obtain that
$$\val[g_4] = \frac{2-\val[g_1]}{2+\val[g_1]} \pm 8\varepsilon = -1 + \frac{4}{2+\val[g_1]} \pm 8\varepsilon$$
i.e., we can compute the function with error at most $8\varepsilon$.

\medskip

\noindent\textbf{$\bm{G_-}$: Subtraction.} Given gates $g_1$ and $g_2$, we want to obtain $\trunc(\val[g_1]-\val[g_2])$. To achieve this, we first use the fact that
$$\phi\Bigg(\phi\left(-1 + \frac{4}{2+y},1-x\right),\frac{y}{2}\Bigg) = \phi\left(\frac{2-x}{2+y},\frac{y}{2}\right) = \frac{1}{2} + \frac{1}{8}(y-x).$$
In more detail, we first use a gate $g_3$ of type $G_{inv}$ with input $g_2$, then a gate $g_4$ of type $G_{1-}$ with input $g_1$, then a gate $g_5$ of type $G_\phi$ with inputs $g_3$ and $g_4$, then a gate $g_6$ of type $G_{/2}$ with input $g_2$, and finally a gate $g_7$ of type $G_\phi$ with inputs $g_5$ and $g_6$. We thus obtain that $\val[g_3] = -1 + 4/(2+\val[g_2]) \pm 8\varepsilon$, $\val[g_4] = 1 - \val[g_1] \pm \varepsilon$, and $\val[g_5] = (2-\val[g_1])(2+\val[g_2]) \pm 7\varepsilon$. Furthermore, it holds that $\val[g_6] = \val[g_2]/2 \pm 3\varepsilon$, and thus $\val[g_7] = 1/2 + (\val[g_2]-\val[g_1])/8 \pm 11\varepsilon$.

Next, we can obtain the subtraction operation from this by noting that
$$4\Bigg(1-\trunc\Bigg(2\left(\frac{1}{2} + \frac{1}{8}(y-x)\right)\Bigg)\Bigg) = 4\Bigg(1-\bigg(1 - \frac{1}{4}\trunc(x-y)\bigg)\Bigg) = 4 \frac{\trunc(x-y)}{4} = \trunc(x-y).$$
This is implemented by using a gate $g_8$ of type $G_{\times 2}$ with input $g_7$, then a gate $g_9$ of type $G_{1-}$ with input $g_8$, then a gate $g_{10}$ of type $G_{\times 2}$ with input $g_9$, and finally another gate $g_{11}$ of type $G_{\times 2}$ with input $g_{10}$. It holds that
$$\val[g_8] = \trunc(2 \cdot \val[g_7]) \pm \varepsilon = 1 - \trunc(\val[g_1]-\val[g_2])/4 \pm 23\varepsilon.$$
As a result, it then holds that $\val[g_9] = \trunc(\val[g_1]-\val[g_2])/4 \pm 24\varepsilon$, $\val[g{10}] = \trunc(\val[g_1]-\val[g_2])/2 \pm 49\varepsilon$, and finally $\val[g_{11}] = \trunc(\val[g_1]-\val[g_2]) \pm 99\varepsilon$. Thus, we can compute subtraction with error at most $99\varepsilon$.

\medskip

\noindent\textbf{$\bm{G_\times}$: Multiplication.} Given gates $g_1$ and $g_2$, we want to obtain $\val[g_1] \cdot \val[g_2]$. We only perform the construction for the case $\varepsilon = 0$, since we only need this gate for the \fixp-hardness. Note that we can multiply by $4$ using two consecutive $G_{\times 2}$ gates. Similarly, we can divide by $4$ using two consecutive $G_{/2}$ gadgets. To perform multiplication, we use the fact that
$$\phi(x,y) - \frac{1}{4} - \frac{x}{4} - \frac{y}{4} = \frac{xy}{4}.$$
In more detail, we first use a gate $g_3$ of type $G_\phi$ with input $g_1$ and $g_2$, then a gate $g_4$ of type $G_{/4}$ with input the constant $1$, then a gate $g_5$ of type $G_-$ with inputs $g_3$ and $g_4$, then a gate $g_6$ of type $G_{/4}$ with input $g_1$, then a gate $g_7$ of type $G_-$ with inputs $g_5$ and $g_6$, then a gate $g_8$ of type $G_{/4}$ with input $g_2$, then a gate $g_9$ of type $G_-$ with inputs $g_7$ and $g_8$, and finally a gate $g_{10}$ of type $G_{\times 4}$ with input $g_9$.
We have that
$$\val[g_3] = \phi(\val[g_1],\val[g_2]) = (\val[g_1]+\val[g_2]+\val[g_1]\cdot\val[g_2]+1)/4.$$
Then we obtain that $\val[g_5] = (\val[g_1]+\val[g_2]+\val[g_1]\cdot\val[g_2])/4$, $\val[g_7] = (\val[g_2]+\val[g_1]\cdot\val[g_2])/4$, $\val[g_9] = \val[g_1]\cdot\val[g_2]/4$, and finally $\val[g_{10}] = \val[g_1]\cdot\val[g_2]$. Thus, we can perform exact multiplication when $\varepsilon = 0$.

\medskip

\noindent\textbf{Hardness.} We have shown that we can simulate gates $G_1$ and $G_-$ with error at most $99\varepsilon$. Thus, by \cref{prop:gcircuit-ppad}, the \ppad-hardness of our restricted version follows. For the case $\varepsilon = 0$, we have shown that we can exactly simulate gates $G_{\times 2}$, $G_{1-}$ and $G_\times$. As a result, by \cref{prop:gcircuit-fixp}, the exact version of our restricted version is \fixp-hard.

\bibliographystyle{plainnat}
\bibliography{first_price}

\end{document}